  \newtheorem{theorem}{Theorem}[section]
  \newtheorem{lemma}[theorem]{Lemma}
  \newtheorem{fact}[theorem]{Fact}
  \newtheorem{corollary}[theorem]{Corollary}
  \newtheorem{observation}[theorem]{Observation}
  \theoremstyle{remark}
  \newtheorem{example}[theorem]{Example}
  \crefname{fact}{Fact}{Facts}
  \crefname{observation}{Observation}{Observations}
  \newcommand{\dziura}{\diamondsuit}
  \newcommand{\Fr}{\mathcal{F}}
    \newcommand{\pred}{\operatorname{Left}}
  \renewcommand{\succ}{\operatorname{Right}}
  \newcommand{\al}{\mathsf{a}}
  \newcommand{\bl}{\mathsf{b}}
  \newcommand{\St}{\mathrm{St}}
\newcommand{\p}{\mathsf{p}}
\newcommand{\q}{\mathsf{q}}
\newcommand{\fr}{\mathrm{fr}}
  \newcommand{\floor}[1]{\left\lfloor #1 \right\rfloor}
  \newcommand{\ceil}[1]{\left\lceil #1 \right\rceil}
  \newcommand{\Oh}{\mathcal{O}}
  \title{On Periodicity Lemma for Partial Words\thanks{Supported by the Polish National Science
    Center, grant no 2014/13/B/ST6/00770.}}
  \author{Tomasz Kociumaka}
  \author{Jakub Radoszewski}
  \author{Wojciech Rytter}
  \author{Tomasz Waleń}
  \affil{\normalsize Faculty~of Mathematics, Informatics and Mechanics,\\
    University of Warsaw, Warsaw, Poland\\
    \texttt{[kociumaka,jrad,rytter,walen]@mimuw.edu.pl}}
\date{\vspace{-1cm}}
\begin{document}
  \maketitle
\begin{abstract}
We investigate the function $L(h,p,q)$, called here the {\em threshold function},
related to periodicity of partial words (words with holes). 
The value $L(h,p,q)$ is defined as the minimum length threshold
which guarantees that a natural extension of the periodicity lemma is valid for partial words with $h$ holes and (strong) periods $p,q$.
We show how to evaluate  the threshold function  in $\Oh(\log p + \log q)$ time,
which is an improvement upon the best previously known $\Oh(p+q)$-time algorithm.
In a series of papers, the formulae for the threshold function, in terms of $p$ and $q$, 
were provided for each fixed $h \le 7$.
We demystify the generic structure of such formulae,
and for each value $h$ we express the threshold function in terms of a piecewise-linear function with $\Oh(h)$ pieces.
\end{abstract}
%
  \section{Introduction}
  Consider a word $X$ of length $|X|=n$, with its positions numbered 0 through $n-1$.
  We say that $X$ has a period $p$ if $X[i]=X[i+p]$ for all $0 \le i < n-p$.
  In this case, the prefix $P=X[0..p-1]$ is called a \emph{string period} of $X$.
  Our work can be seen as a part of the quest to extend Fine and Wilf's Periodicity Lemma \cite{fine1965uniqueness},
  which is a ubiquitous tool of combinatorics on words, into partial words.

  \begin{lemma}[Periodicity Lemma \cite{fine1965uniqueness}]\label{lem:perlem}
    If $p,q$ are periods of a word $X$ of length $|X| \ge p+q-\gcd(p,q)$, then $\gcd(p,q)$ is also a period of $X$.
  \end{lemma}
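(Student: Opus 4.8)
The plan is to recast periodicity as a connectivity question on a graph over the positions of $X$. Set $d=\gcd(p,q)$ and $n=|X|$, and build a graph $G$ on the vertex set $\{0,1,\dots,n-1\}$ in which $i$ and $j$ are joined whenever $|i-j|\in\{p,q\}$. Because $p$ and $q$ are periods, the two endpoints of every edge carry the same letter, so any two positions in the same connected component of $G$ agree. Since every edge changes a position by a multiple of $d$, each component lies inside a single residue class modulo $d$; hence it suffices to prove the reverse inclusion, namely that each residue class modulo $d$ forms a single connected component. Once this is established, positions $i$ and $i+d$ always lie in the same component, which is exactly the assertion that $d$ is a period.

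Two reductions simplify the connectivity statement. First, I would reduce to the coprime case $d=1$: writing $p=dp'$ and $q=dq'$ with $\gcd(p',q')=1$, the $d$ arithmetic subsequences $X[r],X[r+d],X[r+2d],\dots$ (for $0\le r<d$) each inherit periods $p'$ and $q'$, and the hypothesis $n\ge p+q-d=d(p'+q'-1)$ guarantees that every such subsequence has length at least $p'+q'-1$; connectivity of each residue class of $G$ is precisely connectivity of the corresponding subsequence graph. Second, I would reduce to the tight length $n=p+q-1$: any longer word is handled by attaching the surplus positions one at a time, since a position $j\ge p+q-1$ satisfies $j-p\ge 0$ and therefore links back to an already-connected vertex in its own residue class.

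The heart of the matter is thus the coprime, tight case: $G$ on $\{0,\dots,p+q-2\}$ with $\gcd(p,q)=1$ is connected. Here the edge count comes out exactly right: there are $n-p=q-1$ edges of length $p$ and $n-q=p-1$ edges of length $q$, so $G$ has precisely $(p-1)+(q-1)=n-1$ edges. A graph on $n$ vertices with $n-1$ edges is connected if and only if it is acyclic, so it remains only to show that $G$ has no cycle. For this I would traverse a hypothetical simple cycle and track the net displacement contributed by its length-$p$ and length-$q$ edges: letting $u$ and $v$ denote the respective signed edge counts, we get $up+vq=0$, while the cycle uses at most $q-1$ edges of length $p$ and at most $p-1$ of length $q$, so $|u|<q$ and $|v|<p$; coprimality then forces $u=v=0$.

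The main obstacle is converting the balance condition $u=v=0$ into an actual contradiction, and this is exactly where the tight bound $n=p+q-1$ is consumed. A balanced cycle must use each edge length equally often in both directions, and the smallest such cycle is the commutative square $i\to i+p\to i+p-q\to i-q\to i$; fitting it inside $\{0,\dots,p+q-2\}$ demands simultaneously $i\ge q$ (so that $i-q\ge0$) and $i\le q-2$ (so that $i+p\le p+q-2$), which is impossible. I expect the delicate part to be reducing an arbitrary balanced cycle to this square---equivalently, an argument by induction on $p+q$ that mirrors the Euclidean algorithm, replacing the pair $(p,q)$ by $(q,p-q)$, and shows that no longer balanced cycle fits either. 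Establishing this boundary-tight acyclicity is the crux; the remaining steps are bookkeeping.
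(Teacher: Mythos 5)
Your reductions are sound: the passage to $\gcd(p,q)=1$ via the $d$ arithmetic subsequences, the truncation to the tight length $n=p+q-1$, the edge count ($q-1$ edges of length $p$, $p-1$ of length $q$, hence exactly $n-1$ edges), and the observation that any simple cycle must be \emph{balanced} (the signed counts satisfy $up+vq=0$ with $|u|\le q-1$, $|v|\le p-1$, so coprimality forces $u=v=0$) are all correct. But there is a genuine gap exactly where you yourself place the crux: you rule out only the four-vertex square $i\to i+p\to i+p-q\to i-q\to i$, and then defer the general balanced cycle to an induction ``mirroring the Euclidean algorithm'' that is never carried out. Nothing in the proposal shows that an arbitrary balanced simple cycle contains, or can be reduced to, such a square; a priori a long cycle could interleave its $+p,-p,+q,-q$ steps in a complicated order, and excluding this possibility is essentially the whole content of Fine and Wilf's theorem in your formulation. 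As written, the argument establishes only the absence of $4$-cycles. (For calibration: the paper does not prove this lemma at all, it cites Fine and Wilf; your framing is nonetheless in the same graph-theoretic spirit as the $(n,p,q)$-graphs the paper uses in its own results, though note the paper's graph joins $i,j$ whenever $p\mid(j-i)$ or $q\mid(j-i)$, so its $p$-classes are cliques, whereas yours are paths.)

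The gap can be closed inside your setup without classifying cycles, by proving connectivity directly instead of acyclicity (acyclicity then follows for free from the edge count). In the tight coprime case, consider the permutation $x\mapsto (x+p)\bmod (p+q)$ of $\mathbb{Z}_{p+q}$; since $\gcd(p,p+q)=\gcd(p,q)=1$, it is a single $(p+q)$-cycle. Whenever two consecutive elements of this cyclic order both lie in $\{0,\dots,p+q-2\}$, they differ either by $+p$ (no wraparound, i.e.\ $x\le q-2$) or by $-q$ (wraparound, i.e.\ $x\ge q$), so they are joined by an edge of your graph; the only excluded element is $p+q-1$, reached from $x=q-1$. Deleting that single element from the cyclic order therefore leaves a Hamiltonian path on your vertex set, so the graph is connected, which is the statement you actually need. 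Alternatively, the classical repair is a Euclidean induction on the word itself (replacing the period pair $(p,q)$ by $(p,\,q-p)$ on a suitable prefix), rather than an induction on cycles.
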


  A partial word is a word over the alphabet $\Sigma \cup \{\dziura\}$,
  where $\dziura$ denotes a hole (a don't care symbol).
  In what follows, by $n$ we denote the length of the partial word and by $h$ the number of holes.
  For $a,b \in \Sigma \cup \{\dziura\}$, the relation of matching $\approx$
  is defined so that $a\approx b$ if $a=b$ or either of these symbols is a hole.
  A (solid) word $P$ of length $p$ is a \emph{string period} of a partial word $X$
  if $X[i] \approx P[i \bmod p]$ for $0\le i < n$.
  In this case, we say that the integer $p$ is a \emph{(strong) period} of $X$.

  We aim to compute the optimal thresholds $L(h,p,q)$ which make the following generalization of the periodicity lemma valid:
  \begin{lemma}[Periodicity Lemma for Partial Words]\label{lem:perlem2}
     If $X$ is a partial word with $h$ holes with periods $p,q$ and $|X| \ge L(h,p,q)$, then $\gcd(p,q)$ is also a period of $X$.
  \end{lemma}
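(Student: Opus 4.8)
The plan is to establish that \cref{lem:perlem2} holds for a \emph{finite} threshold, i.e.\ that $L(h,p,q)$ is well defined, and along the way to exhibit an explicit bound such as $L(h,p,q)\le (h+1)(p+q)$. Write $d=\gcd(p,q)$, $p'=p/d$, $q'=q/d$, so that $\gcd(p',q')=1$. The first step is to reformulate ``$d$ is a period of $X$'' as a connectivity statement. Since $d\mid p$ and $d\mid q$, each residue class modulo $p$ (resp.\ $q$) refines a residue class modulo $d$, so the periods $p,q$ act independently inside each class of positions congruent modulo $d$. Restricting to one such class and reindexing its positions as $0,1,\dots,m-1$ (dividing the offset by $d$), the $p$- and $q$-constraints become: all non-hole positions lying in a common class modulo $p'$ carry equal symbols, and likewise modulo $q'$.

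Next I would introduce the \emph{agreement graph} $\mathcal{A}$ on the non-hole positions of a fixed class, joining $i\sim j$ whenever $i\equiv j\pmod{p'}$ or $i\equiv j\pmod{q'}$. Because $p$ and $q$ are strong periods, each such class forms a clique of equal symbols, so $d$ is a period of $X$ precisely when, in every residue class modulo $d$, all non-hole positions lie in a single connected component of $\mathcal{A}$. Conversely, if some class splits into two components that both contain non-holes, one may assign them distinct symbols and obtain a partial word with periods $p,q$ but not $d$; this is exactly what forces $L$ upward and matches the definition of $L(h,p,q)$ as the minimum safe threshold.

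The heart of the argument is a finiteness claim: a single hole-free window of $p'+q'-1$ consecutive indices already connects everything. Such a window has length at least $\max(p',q')$, hence meets every class modulo $p'$ and modulo $q'$, so every non-hole index is adjacent in $\mathcal{A}$ to some index of the window; moreover the window is internally connected by the classical \cref{lem:perlem} applied to a solid factor of length $p'+q'-\gcd(p',q')$. It then remains to guarantee that such a window exists: the $h$ holes split the $m$ indices of a class into at most $h+1$ maximal solid runs, the longest of which has length at least $(m-h)/(h+1)$, so once this exceeds $p'+q'-1$ the class is connected. Undoing the reindexing ($m\approx |X|/d$ and $d(p'+q'-1)=p+q-d$) turns this into a bound of the form $|X|\ge (h+1)(p+q)$, which certifies $\gcd(p,q)$ as a period and yields $L(h,p,q)\le (h+1)(p+q)<\infty$.

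The main obstacle I anticipate is not the connectivity of a solid window (immediate once the clique structure is noticed) but ruling out that an adversary places the $h$ holes so cleverly that a disconnection survives in arbitrarily long words; the clique viewpoint is precisely what defeats this, since one uncut window of bounded length re-links every surviving position. The residual care lies in the bookkeeping of the reduction modulo $d$ and the per-class hole count (at most $h$ in any single class), which affects only the constant and lower-order terms, not finiteness; sharpening this crude $(h+1)(p+q)$ estimate into the optimal value of $L(h,p,q)$ is what the remainder of the paper is devoted to.
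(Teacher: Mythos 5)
Your proposal is correct, and it takes a route the paper itself never spells out. In the paper, \cref{lem:perlem2} is not proved as a theorem at all: $L(h,p,q)$ is \emph{defined} as the minimum threshold making the statement valid, so the lemma holds essentially by definition, and the genuine mathematical content---that a finite threshold exists at all---is established only implicitly, via the exact characterization of $H$ and $L$ in \cref{thm:wr} (through $H^s$, $H^d$, and the $(n,p,q)$-graph) or by appeal to earlier work. You correctly isolate well-definedness as the one thing requiring proof and settle it by a self-contained elementary argument: your reduction modulo $d=\gcd(p,q)$ is the paper's \cref{fct:gcd}; your agreement graph with its cliques modulo $p'$ and $q'$ is exactly the paper's $(n,p,q)$-graph of \cref{sec:char}, with holes playing the role of a vertex separator as in \cref{fct:color}; and your key step---a single hole-free window of $p'+q'-1$ consecutive class positions is internally connected and meets every residue class modulo $p'$ and modulo $q'$, hence re-links all surviving positions---replaces the paper's delicate case analysis (\cref{lem:main}, \cref{thm:wr}) by a pigeonhole count. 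The arithmetic checks out exactly: if $|X|\ge(h+1)(p+q)$, each class modulo $d$ has at least $(h+1)(p'+q')$ positions, so its longest solid run has length at least $(p'+q')-\tfrac{h}{h+1}>p'+q'-1$. What your route buys is an explicit finiteness certificate $L(h,p,q)\le(h+1)(p+q)$ that makes the lemma logically self-sufficient before any of the machinery of Sections 2--4; what the paper's machinery buys is the optimal threshold. One small presentational caveat: the internal connectivity of the solid window is not literally an application of \cref{lem:perlem} to a given word; it needs the one-sentence two-coloring argument---which you do give for the whole class in your second paragraph---that a disconnected window would yield a non-unary solid word of length $p'+q'-1$ with periods $p'$ and $q'$, contradicting the optimality of Fine and Wilf's bound.
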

  If $\gcd(p,q)\in \{p,q\}$, then \cref{lem:perlem2} trivially holds for each partial word $X$. 
  Otherwise, as proved by Fine and Wilf~\cite{fine1965uniqueness}, the threshold in \cref{lem:perlem} is known to be optimal, so $L(0,p,q)=p+q-\gcd(p,q)$.

\begin{example}
$L(1,5,7)=12$, because:
\begin{itemize}
\item each partial word of length at least $12$ with one hole and periods 5, 7
has also period $1=\gcd(5,7)$,
\item the partial word $\al\bl\al\bl\al\al\bl\al\bl\al\dziura$ of length 11 has periods 5, 7 and does not have period $1$.
\end{itemize}
\end{example}
\noindent
As our main aim, we examine the values $L(h,p,q)$ as a function
of $p,q$ for a given $h$. 
Closed-form formulae for $L(h,\cdot,\cdot)$ with $h \le 7$ were given in~\cite{DBLP:journals/tcs/BerstelB99,DBLP:journals/jda/Blanchet-SadriMS12,DBLP:conf/mfcs/ShurK01}.
In these cases, $L(h,p,q)$ can be expressed using a constant number of functions linear in $p$, $q$, and $\gcd(p,q)$.
We discover a common pattern in such formulae which lets us derive a closed-form formula for $L(h,p,q)$
with arbitrary fixed $h$ using a sequence of $\Oh(h)$ fractions. Our construction relies on the theory of continued fractions;
we also apply this link to describe $L(h,p,q)$ in terms of standard Sturmian words. 

As an intermediate step, we consider a dual \emph{holes function} $H(n,p,q)$, which gives the minimum number of holes $h$
for which there is a partial word of length $n$ with $h$ holes and periods $p,q$ which do not satisfy \cref{lem:perlem2}.
\begin{example}
We have $ H(11,5,7)=1$ because:
\begin{itemize}
\item $H(11,5,7)\ge 1$:  due to the classic periodicity lemma, 
every solid word of length 11 with periods $5$ and $7$ has period $1=\gcd(5,7)$, and
\item $H(11,5,7)\le 1$: $\al\bl\al\bl\al\al\bl\al\bl\al\dziura$ is non-unary, has   one
hole and periods 5, 7. 
\end{itemize}

\noindent
We have $H(12,5,7)\le H(11,5,7)+1 =2$ since appending $\dziura$ preserves periods.
In fact $H(12,5,7)=H(15,5,7)=2$. However, there is no non-unary partial word
of length $16$ with 2 holes and periods 5, 7, so $L(2,5,7)=16$; see \cref{TABELKI2}.
\end{example}

\begin{table}[t!]
  \centering
  \setlength\tabcolsep{0.5cm}\begin{tabular}{r|c|p{5cm}}
$h$ & $L(h,5,7)$ & example of length $L(h,5,7)-1$ \\ \hline
0\ & 11\ & $\al\bl\al\bl \al \al\bl\al\bl \al $ \\\hline
1\ & 12\ & $\al\bl\al\bl \al \al\bl\al\bl \al\dziura$ \\\hline
2\ & 16\ & $\al\bl\al\bl \al \al\bl\al\bl \al\dziura\dziura\al\bl\al $  \\\hline
3\ & 19\ & $\al\al\al\al\bl\al\al\al\al\dziura\al\dziura\al\al\dziura\al\al\al $ \\\hline
4\ & 21\ & $\al\bl\al\dziura\dziura\al\bl\al\bl \al \al\bl\al\bl \al\dziura\dziura\al\bl\al$ \\\hline
5\ & 25\ & $\al\al\al\al\bl\al\al\al\al\dziura\al\dziura\al\al\dziura\al\al\al\dziura \dziura \al\al\al\al$
\end{tabular}

  \begin{center}
 \setlength{\tabcolsep}{4pt}
\begin{tabular}{r*{17}{c}}
$n:\ $             & &10&\bf{11}&\bf{12}&13&14&15&\bf{16}&17&18&\bf{19}&20&\bf{21}&22&23&24&\bf{25} \\
\hline
$H(n,5,7):\ $   & & 0& 1& 2& 2& 2& 2 & 3 & 3 & 3 & 4& 4 & 5 & 5& 5& 5& 6
\end{tabular}
\end{center}
  \caption{
The optimal non-unary partial words with periods 5,7 and $h=0,\ldots,5$ holes
(of length $L(h,5,7)-1$)
and the values $H(n,5,7)$ for $n=10,\ldots,25$.
}\label{TABELKI2}
  \end{table}

\noindent For a function $f(n,p,q)$ monotone in $n$, we define its \emph{generalized inverse} as:
$$\widetilde{f}(h,p,q)\;=\; \min\{ n : f(n,p,q) > h\}.$$%
\begin{observation}
$L\;=\; \widetilde{H}$.
\end{observation}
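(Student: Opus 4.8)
The plan is to unwind both functions to a single combinatorial quantity and then read off the equality from the definitions. Call a partial word $X$ a \emph{witness} (for $p,q$) if it has periods $p,q$ but does not have period $\gcd(p,q)$; these are exactly the words violating \cref{lem:perlem2}. With this terminology $H(n,p,q)$ is the least number of holes occurring in a length-$n$ witness (and is infinite if none exists), whereas $L(h,p,q)$ is the least $\ell$ for which no witness of length at least $\ell$ has at most $h$ holes. The bridge between the two is the trivial equivalence that a length-$n$ witness with at most $h$ holes exists if and only if $H(n,p,q)\le h$. (When $\gcd(p,q)\in\{p,q\}$ there are no witnesses at all and both sides degenerate, so I may assume $\gcd(p,q)\notin\{p,q\}$.)

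First I would establish the inequality $L(h,p,q)\ge \widetilde H(h,p,q)$, which needs no periodicity theory. Writing $n_0=\widetilde H(h,p,q)-1$, minimality of $\widetilde H(h,p,q)=\min\{n:H(n,p,q)>h\}$ gives $H(n_0,p,q)\le h$, so some witness of length $n_0$ has at most $h$ holes. This witness certifies that length $n_0$ is not yet enough to force period $\gcd(p,q)$, so the optimal threshold must exceed $n_0$; that is, $L(h,p,q)\ge n_0+1=\widetilde H(h,p,q)$.

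For the converse $L(h,p,q)\le \widetilde H(h,p,q)$ I would invoke monotonicity of $H$ in $n$. Put $\ell=\widetilde H(h,p,q)$, so $H(\ell,p,q)>h$ by definition. Since $H(\cdot,p,q)$ is non-decreasing, $H(n,p,q)\ge H(\ell,p,q)>h$ for every $n\ge \ell$, hence no witness of length at least $\ell$ has at most $h$ holes. Equivalently, every partial word of length at least $\ell$ with at most $h$ holes and periods $p,q$ must have period $\gcd(p,q)$, so $\ell$ is a valid threshold and $L(h,p,q)\le \ell$. The two inequalities together give $L=\widetilde H$.

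The single load-bearing fact is the monotonicity of $H(n,p,q)$ in $n$ — precisely the hypothesis under which the generalized inverse is defined — and this is where I expect the real work to sit; the rest is the two-line inverse argument above. The ``slow growth'' half is immediate, since appending $\dziura$ to a length-$n$ witness keeps periods $p,q$, preserves the failure of period $\gcd(p,q)$, and adds exactly one hole, giving $H(n+1,p,q)\le H(n,p,q)+1$. The delicate half is the non-decrease $H(n,p,q)\le H(n+1,p,q)$: taking an optimal length-$(n+1)$ witness $Y$ and deleting its last or first symbol preserves periods $p,q$ and cannot increase the hole count, and it also preserves non-$\gcd(p,q)$-periodicity \emph{unless} every mismatch certifying the failure of period $\gcd(p,q)$ happens to lie at the deleted endpoint. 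The only troublesome configuration is thus when the sole such mismatch is the pair of endpoints; there one checks that all intermediate positions of that residue class are holes and patches together a length-$n$ witness directly. I would package this monotonicity as a separate lemma, after which the Observation follows purely formally.
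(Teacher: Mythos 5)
The paper gives no proof of this Observation at all: it is stated as immediate, right after the generalized inverse is defined \emph{only for functions monotone in $n$}, so the monotonicity of $H$ in $n$ is tacitly assumed rather than established. Your proposal is therefore the only actual proof on the table, and it is correct in substance: the two inverse-style inequalities are indeed purely formal, and the one load-bearing fact is $H(n,p,q)\le H(n+1,p,q)$. Your endpoint-deletion argument for that fact is sound. To confirm the step you leave as ``one checks'': a partial word lacks period $g=\gcd(p,q)$ iff some residue class modulo $g$ contains two distinct solid symbols; if deleting either endpoint of an optimal length-$(n+1)$ witness $Y$ destroys this, the unique certifying pair is $(0,n)$, so $g\mid n$, $Y[0]\ne Y[n]$ are solid, and the $n/g-1$ interior positions of their class are all holes, whence $Y$ has at least $n/g-1$ holes. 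The patch can then be made explicit: take the length-$n$ word with $\al$ everywhere, $\bl$ at position $g$, and holes at all positions $i\ne g$ with $i\equiv g \pmod{p}$ or $i \equiv g\pmod{q}$ (position $0$ is unaffected since $p\ne g\ne q$). This word has periods $p,q$, lacks period $g$, and, because $p,q\ge 2g$, it has at most $\floor{\tfrac{n-1-g}{p}}+\floor{\tfrac{n-1-g}{q}}\le \tfrac{n}{g}-2$ holes, strictly fewer than $Y$. So your lemma and the Observation follow.

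Two points need tightening. First, your convention that $H$ is infinite when no witness exists clashes with the literal formula $\widetilde H(h,p,q)=\min\{n: H(n,p,q)>h\}$: no witness of length $\le g$ exists (any such partial word trivially has period $g$), so $H(n)=\infty>h$ there and the naive minimum degenerates to the smallest admissible length. Worse, monotonicity itself fails at that boundary, since $H(g,p,q)=\infty$ while $H(g+1,p,q)=0$ (the word $\al^{g}\bl$ has periods $p,q$ vacuously, as $g+1\le\min(p,q)$). Both your monotonicity lemma and the minimum must therefore be restricted to lengths $n>g$, where witnesses exist; on that domain your argument goes through unchanged, and note the troublesome case there automatically has $n\ge 2g$, so the patch above always fits. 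The paper shares this sloppiness, but your write-up, which states the convention explicitly, should also state the domain restriction explicitly. Second, in your first inequality the witness you obtain has \emph{at most} $h$ holes, while the threshold in \cref{lem:perlem2} concerns words with exactly $h$ holes; this is repaired by appending $\dziura$'s to raise the hole count, i.e., by the same ``slow growth'' observation you already make, which just needs to be invoked at that point.
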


As observed above, \cref{lem:perlem2} becomes trivial if $p \mid q$.
The case of $p\mid 2q$ is known to be special as well, but it has been fully described in~\cite{DBLP:conf/mfcs/ShurK01}.
Furthermore, it was shown in \cite{DBLP:journals/jda/Blanchet-SadriMS12,ShurGamzova04} that the case of $\gcd(p,q)>1$ 
is easily reducible to that of $\gcd(p,q)=1$. 
We recall these existing results in \cref{sec:char}, while in the other sections we assume that $\gcd(p,q)=1$ and $p,q>2$.

\paragraph{Previous results} 
  The study of periods in partial words was initiated by Berstel and Boasson~\cite{DBLP:journals/tcs/BerstelB99},
  who proved that $L(1,p,q)=p+q$. They also showed that the same bound holds for \emph{weak} periods\footnote{%
  An integer $p$ is a weak period of $X$ if $X[i]\approx X[i+p]$ for all $0 \le i < n-p$.} $p$ and $q$.
  Shur and Konovalova~\cite{DBLP:conf/mfcs/ShurK01} developed exact formulae for $L(2,p,q)$ and $L(h,2,q)$,
  and an upper bound for $L(h,p,q)$.
  A formula for $L(h,p,q)$ with small values $h$ was shown by Blanchet-Sadri et al.~\cite{DBLP:journals/iandc/Blanchet-SadriBS08},
  whereas for large $h$, Shur and Gamzova~\cite{ShurGamzova04} proved that the optimal counterexamples
  of length $L(h,p,q)-1$ belong to a very restricted class of \emph{special arrangements}.
  The latter contribution leads to an $\Oh(p+q)$-time algorithm for computing $L(h,p,q)$.
  An alternative procedure with the same running time was shown by Blanchet-Sadri et al.~\cite{DBLP:journals/jda/Blanchet-SadriMS12},
  who also stated closed-form formulae for $L(h,p,q)$ with $h \le 7$.
  Weak periods were further considered in \cite{DBLP:journals/tcs/Blanchet-SadriH02,DBLP:journals/ijfcs/Blanchet-SadriOR10,DBLP:journals/tcs/SmythW09}.

  Other known extensions of the periodicity lemma include a variant with three \cite{DBLP:journals/tcs/CastelliMR99}
  and an arbitrary number of specified periods \cite{DBLP:journals/ita/Justin00,DBLP:journals/tcs/TijdemanZ09},
  the so-called new periodicity lemma \cite{DBLP:journals/dam/BaiFS16,DBLP:journals/siamdm/FanPST06},
  a periodicity lemma for repetitions with morphisms \cite{DBLP:conf/mfcs/ManeaMN12},
  extensions into abelian \cite{DBLP:journals/eatcs/ConstantinescuI06} and $k$-abelian \cite{DBLP:journals/ijfcs/KarhumakiPS13} periodicity,
  into abelian periodicity for partial words \cite{DBLP:journals/ita/Blanchet-SadriSTV13},
  into bidimensional words \cite{DBLP:journals/tcs/MignosiRS03},
  and other variations \cite{DBLP:conf/caap/GiancarloM94,DBLP:conf/mfcs/MignosiSW01}.

\paragraph{Our results} 
First, we show how to compute $L(h,p,q)$ using $\Oh(\log p + \log q)$ arithmetic operations, improving upon the state-of-the-art complexity $\Oh(p+q)$.

Furthermore, for any fixed $h$ in $\Oh(h \log h)$ time we can compute a compact description of the threshold 
function $L(h,p,q)$. For the base case of  $p<q$, $\gcd(p,q)=1$, and  $h<p+q-2$, 
the representation is piecewise linear in $p$ and~$q$.
More precisely, the interval $[0,1]$ can be split into $\Oh(h)$ subintervals $I$ so that $L(h,p,q)$ restricted to $\frac{p}{q}\in I$ is of the form
$a \cdot p + b \cdot q + c$ for some integers $a,b,c$.

\paragraph{Overview of the paper} 
We start by introducing two auxiliary functions $H^s$ and $H^d$ which correspond to two restricted
families of partial words.
Our first key step is to prove that the value $H(n,p,q)$ is always equal to $H^s(n,p,q)$ or $H^d(n,p,q)$
and to characterize the arguments $n$ for which either case holds.
The final function $L$ is then obtained as a combination of the generalized inverses $L^s$ and $L^d$ of $H^s$ and $H^d$, respectively.
Developing the closed-form formula for $L^d$ requires considerable effort;
this is where continued fractions arise.
\section{Functions $H^s$ and $L^s$}\label{sec:HLs}
For relatively prime integers $p,q$, $1<p<q$, and an integer $n\ge q$, let us define
$$H^s(n,p,q) = \floor{\tfrac{n-q}{p}}+\floor{\tfrac{n-q+1}{p}}.$$
We shall prove that $H(n,p,q)\le H^s(n,p,q)$ for a suitable range of lengths $n$.

Fine and Wilf~\cite{fine1965uniqueness} constructed a word of length $p+q-2$ with periods $p$ and $q$ and without period 1.
For given $p,q$ we choose such a word $S_{p,q}$ and,
we define a partial word $W_{p,q}$ as follows, setting $k=\floor{q/p}$ (see~\cref{fig:newfig}):
$$W_{p,q} = (S_{p,q}[0..p-3] \dziura\dziura)^{k}\cdot 
S_{p,q} \cdot(\dziura\dziura S_{p,q}[q..q+p-3])^{k}.$$

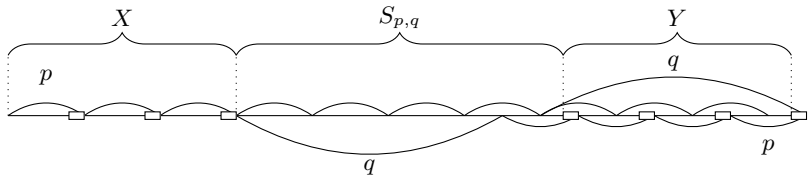
\begin{figure}[b!]
  \begin{center}
\begin{tikzpicture}
\def\unitX{10mm}
\def\unitY{8mm}
\def\unitXX{2mm}
\def\unitP{35mm}
\def\wordXlen{\unitX*3}
\def\wordSlen{\unitX + \unitP - \unitXX}
\def\wordYlen{\unitX*3}
\def\wordLength{\wordXlen+\wordSlen+\wordYlen+\unitXX}
\small

\draw (0,0) -- +(\wordLength,0);
\draw[dotted] (0,0) -- +(0, \unitY);
\draw[dotted] (\wordXlen,0) -- +(0, \unitY);
\draw[dotted] (\wordXlen+\wordSlen,0) -- +(0, \unitY);
\draw[dotted] (\wordXlen+\wordSlen+\wordYlen,0) -- +(0, \unitY);

\draw [decorate,decoration={brace, amplitude=3mm}] (0,\unitY) -- +(\wordXlen,0) node [black,midway, yshift=5mm] {$X$};
\draw [decorate,decoration={brace, amplitude=3mm}] (\wordXlen,\unitY) -- +(\wordSlen,0) node [black,midway, yshift=5mm] {$S_{p,q}$};
\draw [decorate,decoration={brace, amplitude=3mm}] (\wordXlen+\wordSlen,\unitY) -- +(\wordYlen,0) node [black,midway, yshift=5mm] {$Y$};


\foreach \x in {0, 1, 2, 3, 4, 5, 6, 7, 8, 9}{
  \draw (\x*\unitX, 0.0) to [bend left=35] (\x*\unitX+1.0*\unitX,0.0);
}
\foreach \x in {0, 1, 2, 3}{
  \draw (\wordLength-\x*\unitX, 0.0) to [bend left=30] (\wordLength-\x*\unitX-1.0*\unitX,0.0);
}
\node at (\unitX*0.5, 0.3) [above] {$p$};
\node at (\wordLength - \unitX*0.5, -0.2) [below] {$p$};

\foreach \x/\y in {\wordXlen/-30, 7*\unitX/30} {
  \draw (\x, 0) to [bend left=\y] (\x+\unitP, 0);
}
\node at (\wordXlen+0.5*\unitP, -4.75mm) [below] {$q$};
\node at (\wordLength-0.5*\unitP, 4.75mm) [above] {$q$};

\foreach \x in {1, 2, 3} {
  \filldraw[draw=black,fill=white] (\x*\unitX - \unitXX , -0.05) rectangle (\x*\unitX, 0.05);
}
\foreach \x in {0, 1, 2, 3} {
  \filldraw[draw=black,fill=white] (\wordLength - \x*\unitX - \unitXX , -0.05) rectangle (\wordLength - \x*\unitX, 0.05);
}

\end{tikzpicture}
\end{center}
  \caption{The structure of the partial word $W_{p,q}\dziura\dziura\,=\, X\cdot S_{p,q} \cdot Y\dziura\dziura$ for $\floor{q/p}=3$. 
Tiny rectangles correspond to two holes $\dziura\dziura$. We have $|X|=|Y|=p\floor{q/p}=3p$ and
$|W_{p,q}|\;=\; p+q+2p\floor{q/p}-2=q+7p-2$. There are $4\cdot \floor{q/p}=12$ holes.
}\label{fig:newfig}
  \end{figure}

\begin{example}\label{ex:w57}
 For  $p=5$ and $q=7$, we can take $S_{5,7} = \al\bl\al\bl\al\al\bl\al\bl\al$ and
$$W_{5,7}\;=\; \al\bl\al \dziura\dziura\, { \al\bl\al\bl\al\al\bl\al\bl\al}\; \dziura\dziura
\al\bl\al.$$ This partial word has length 20 and 4 holes.
Hence, $H(20,5,7)\le 4=H^s(20,5,7)$ and $L(4,5,7)\ge 21$. In fact, these bounds are tight; see~\cref{TABELKI2}.
\end{example}

Intuitively, the partial word $W_{p,q}$ is an extension of $S_{p,q}$ preserving the period $p$,
in which a small number of symbols is changed to holes to guarantee the periodicity with respect to $q$.
\begin{lemma}\label{lem:claim}
The partial word $W_{p,q}$ has periods $p$ and $q$.
\end{lemma}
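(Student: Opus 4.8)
The plan is to prove the two periods separately by exhibiting, in each case, an explicit solid string period and checking the matching relation position by position. Write $S=S_{p,q}$, set $P=S[0..p-1]$ and $Q=S[0..q-1]$, and let $P^{\infty},Q^{\infty}$ denote their purely periodic (two-sided) extensions, i.e.\ $P^{\infty}[m]=P[m\bmod p]$ and $Q^{\infty}[m]=Q[m\bmod q]$. Since $S$ has periods $p$ and $q$, a one-line computation gives $S[i]=P[i\bmod p]=Q[i\bmod q]$ for every $0\le i\le p+q-3$; thus both $P^{\infty}$ and $Q^{\infty}$ agree with $S$ on its whole domain, and I would use the periods of $S$ only as the equalities $S[t]=S[t+p]$ (valid for $t\le q-3$) and $S[t]=S[t+q]$ (valid for $t\le p-3$).

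The period $p$ is the easy half. The key observation is that $W_{p,q}$ is exactly the length-$|W_{p,q}|$ prefix of $P^{\infty}$ in which $4\floor{q/p}$ symbols have been overwritten by holes: the factor $S[0..p-3]$ is a window of $P^{\infty}$, the central copy of $S$ sits at offset $kp$ (a multiple of $p$, where $k=\floor{q/p}$), each block of the left and right parts has length exactly $p$, and the right pieces $S[q..q+p-3]$ are again windows of $P^{\infty}$ since $q+\ell$ is congruent to its own residue modulo $p$. Hence $W_{p,q}[i]\approx P[i\bmod p]$ holds for every $i$ (with equality wherever $W_{p,q}[i]$ is solid and a trivial match at each hole), so $p$ is a period.

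The period $q$ is the crux. Here I would claim that $W_{p,q}[i]\approx Q^{\infty}[i-kp]$ for every position $i$; since any rotation of a length-$q$ string period is again a length-$q$ string period, this exhibits $q$ as a period. On the central copy of $S$ the claim is immediate because $Q^{\infty}$ coincides with $S$ there. On the two flanks only the solid positions need checking, and each reduces to an equality between two solid symbols of $S$ which I prove by chaining its two periods. On the left flank a solid position is $i=ap+j$ with $0\le a<k$ and $0\le j\le p-3$; then $i-kp=j-bp$ with $b=k-a\in\{1,\dots,k\}$, and as $-q\le j-bp\le-3$ we get $Q^{\infty}[i-kp]=S[j-bp+q]$, so it remains to see $S[j]=S[j-bp+q]$. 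This follows from $S[j]=S[j+q]$ (period $q$, legal since $j\le p-3$) followed by $b$ backward applications of period $p$. On the right flank a solid position gives, after subtracting one $q$, the requirement $S[q+t]=S[(a'+1)p+t]$ with $t\in\{0,\dots,p-3\}$ and $a'+1\in\{1,\dots,k\}$, and both sides equal $S[t]$ (by period $q$ and by period $p$, respectively).

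The main obstacle is precisely this flank verification: one must confirm that the holes sit in exactly those positions where a single length-$q$ string period would otherwise be contradicted, and that every period application stays inside the legal range $[0,p+q-3]$ of $S$. The latter is where the choice $k=\floor{q/p}$ is essential — it guarantees $bp\le kp\le q$ for all the relevant $b$, so that the backward and forward period-$p$ chains above never leave the domain of $S$; a smaller $k$, or a different two-holes-per-block pattern, would break the argument. The remaining ingredients are routine index bookkeeping.
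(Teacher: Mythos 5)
Your proof is correct, but it takes a structurally different route from the paper's. The paper verifies the period $q$ by comparing positions of $W_{p,q}$ with \emph{each other}: it first shows $W_{p,q}[i]\approx W_{p,q}[i+q]$ by a case analysis on whether position $i$ lies in the left flank, the right flank, or the central copy of $S_{p,q}$, and then --- because $\approx$ is not transitive and the paper's notion of period is the strong one --- it needs a second step establishing $W_{p,q}[i]\approx W_{p,q}[i+kq]$ for all $k\ge 2$, which it gets by observing that the intermediate positions $i+q,\ldots,i+(k-1)q$ all land inside the solid word $S_{p,q}$. You instead exhibit an explicit solid string period for each of $p$ and $q$ (the prefix $P=S_{p,q}[0..p-1]$, respectively a rotation of $Q=S_{p,q}[0..q-1]$ aligned at offset $kp$) and check $W_{p,q}[i]\approx P^{\infty}[i]$ and $W_{p,q}[i]\approx Q^{\infty}[i-kp]$ position by position. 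This buys you something real: matching every position against a single solid reference word verifies the strong-period definition directly, so the non-transitivity of $\approx$ never arises and the paper's second step vanishes. The small price is extra care with alignment --- the shift by $kp$, and the (easy) fact that a rotation of a length-$q$ string period is again a string period --- whereas the paper only ever reasons about symbols of $W_{p,q}$ itself. The core index bookkeeping is the same in both arguments: one application of period $q$ chained with up to $k=\floor{q/p}$ applications of period $p$ inside $S_{p,q}$, with $kp\le q$ guaranteeing that all indices stay in the legal range $[0,p+q-3]$; your range checks on both flanks are valid.
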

\begin{proof}
Let $n=|W_{p,q}|$.
It is easy to observe that $p$ is a period of $W_{p,q}$. 
We now show that $q$ is a period of $W_{p,q}$ as well.
Let $X$ and $Y$ be the prefix and the suffix of $W_{p,q}$ of length $p\floor{q/p}$ (so that $W_{p,q} = X \cdot S_{p,q} \cdot Y$).
Note that $|X|, |Y| < q \le |S_{p,q}|$.

Let us start by showing that $W_{p,q}[i]\approx W_{p,q}[i+q]$ for $0\le i < n-q$. First, suppose that $W_{p,q}[i]$ is contained in $X$. 
The claim is obvious if $i \bmod p \ge p-2$, because in this case we have $W_{p,q}[i]=\dziura$.
Otherwise
$$W_{p,q}[i] = S_{p,q}[i\bmod p] \stackrel{(1)}{=} S_{p,q}[i\bmod p + q] \stackrel{(2)}{=} S_{p,q}[i+q-\lfloor{\tfrac{q}{p}\rfloor}p] = W_{p,q}[i+q],$$
where (1) follows from the fact that $S_{p,q}$ has period $q$ and $i \bmod p < p-2$, and (2) from the fact that $S_{p,q}$ has period $p$.
By symmetry of our construction, we also have $W_{p,q}[i]\approx W_{p,q}[i+q]$ if $W_{p,q}[i+q]$ is contained in $Y$.
In the remaining case, $W_{p,q}[i]$ and $W_{p,q}[i+q]$ are both contained in $S_{p,q}$, which yields $W_{p,q}[i+q]=W_{p,q}[i]$.

Next, we claim that $W_{p,q}[i]\approx W_{p,q}[i+kq]$ for every $k \ge 2$.
Observe that $W_{p,q}[i+q],\ldots, W_{p,q}[i+(k-1)q]$ are contained in $S_{p,q}$ and thus they are equal solid symbols.
Hence, $$W_{p,q}[i]\approx W_{p,q}[i+q] =\cdots = W_{p,q}[i+(k-1)q]\approx W_{p,q}[i+kq].$$
 The intermediate symbols are solid, so this implies $W_{p,q}[i]\approx W_{p,q}[i+kq]$,
as claimed.
Consequently, $q$ is indeed a period of $W_{p,q}$.
\end{proof}

We use the word $S_{p,q}$ and the partial word $W_{p,q}\dziura\dziura$  to show that $H^s$ is an upper bound for $H$
for all intermediate lengths $n$ ($|S_{p,q}|\le n \le |W_{p,q}\dziura\dziura|$).

\begin{lemma}\label{lem:hs}
Let $1<p<q$ be relatively prime integers.
For each length $p+q-2 \le n \le p+ q + 2p\floor{q/p}$,  we have $H(n,p,q)\le H^s(n,p,q)$.
\end{lemma}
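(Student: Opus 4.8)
The plan is to establish $H(n,p,q)\le H^s(n,p,q)$ by exhibiting, for every length $n$ in the stated range, a single non-unary partial word of length $n$ that has periods $p,q$ and at most $H^s(n,p,q)$ holes. Every such witness will be a factor of the word $W_{p,q}\dziura\dziura$ from \cref{fig:newfig}. By \cref{lem:claim} and the fact that appending holes preserves periods, $W_{p,q}\dziura\dziura$ has periods $p$ and $q$; moreover, strong periods are inherited by factors (if $P$ is a string period of a partial word, then a cyclic rotation of $P$ is a string period of any factor), so every factor of $W_{p,q}\dziura\dziura$ has periods $p$ and $q$ as well. As long as a factor contains the central block $S_{p,q}$, it contains two distinct solid letters (since $S_{p,q}$ has no period $1$) and is therefore non-unary. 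Hence any such factor is a legitimate witness, and the whole task reduces to controlling the number of holes.

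Concretely, writing $W_{p,q}\dziura\dziura = X\cdot S_{p,q}\cdot (Y\dziura\dziura)$ with $|X|=|Y|=pk$ and $k=\floor{q/p}$, I would consider the factor $F_{a,b}$ obtained by prepending the last $a$ symbols of $X$ and appending the first $b$ symbols of $Y\dziura\dziura$ to $S_{p,q}$, where $0\le a\le pk$, $0\le b\le pk+2$, and $a+b = n-(p+q-2)=:N$ (these capacity bounds are exactly what makes $F_{a,b}$ a factor of $W_{p,q}\dziura\dziura$). Reading outward from $S_{p,q}$ in either direction, the hole pattern is periodic with period $p$, with two holes sitting at the $S_{p,q}$-near end of every length-$p$ block; consequently the number of holes met within the first $x$ symbols on either side equals $g(x):=2\floor{x/p}+\min(x\bmod p,\,2)$, the trailing $\dziura\dziura$ fitting this formula on the right as the final pair. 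Thus $F_{a,b}$ has exactly $g(a)+g(b)$ holes.

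The core computation is to check that $\min_{a+b=N}\big(g(a)+g(b)\big)=H^s(n,p,q)$ over admissible splits. Setting $t=\floor{N/p}$ and $u=N\bmod p$, an elementary evaluation of the two floors defining $H^s$ gives $H^s(n,p,q)=2t+\min(u,2)$, and the same value is attained by $g$: placing all full blocks together with the single remainder of $N$ on one side yields $g(tp+u)+g(0)=2t+\min(u,2)$, and one checks that forcing a ``carry'' between the two sides never helps, so this is optimal. It therefore suffices to realize \emph{some} split achieving $2t+\min(u,2)$ within the capacities $a\le pk$ and $b\le pk+2$.

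The main (though routine) obstacle is exactly this last realization near the top of the range, where the interaction with the trailing $\dziura\dziura$ must be handled. When $N\le pk+2$ one takes $(a,b)=(0,N)$, giving $g(N)=2t+\min(u,2)$; when $pk<N\le 2pk$ the split $(a,b)=(N-pk,\,pk)$ keeps all full blocks while avoiding the ``expensive'' trailing pair and again gives $g(N-pk)+g(pk)=2(t-k)+\min(u,2)+2k=2t+\min(u,2)$; and for the two extreme lengths $N\in\{2pk+1,\,2pk+2\}$ the capacity forces use of the trailing $\dziura\dziura$, where the splits $(pk,\,pk+1)$ and $(pk,\,pk+2)$ respectively still hit $H^s$. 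Verifying that these finitely many regimes cover $p+q-2\le n\le p+q+2p\floor{q/p}$ and each yields at most $H^s(n,p,q)$ holes — together with the small boundary bookkeeping caused by the final $\dziura\dziura$ — is where all the care is needed, while the period and non-unarity conditions are automatic from the factor structure. This completes the bound $H(n,p,q)\le H^s(n,p,q)$.
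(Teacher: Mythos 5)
Your proof is correct and takes essentially the same approach as the paper: both exhibit, for each $n$ in the stated range, a factor of $W_{p,q}\dziura\dziura$ containing $S_{p,q}$ as the witness (periods via \cref{lem:claim} and inheritance by factors, non-unarity because it contains $S_{p,q}$) and then count holes. The paper merely streamlines your bookkeeping by fixing one canonical extension order (prepend all available symbols of $X$ first, then append), so that holes fall exactly at the first two of every $p$ added symbols and the count is immediately $\ceil{\tfrac{m}{p}}+\ceil{\tfrac{m-1}{p}}=H^s(n,p,q)$ with $m=n-|S_{p,q}|$, avoiding your three-regime case analysis over splits $(a,b)$.
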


\begin{proof}
We extend $S_{p,q}$ to $W_{p,q}\dziura\dziura$ symbol by symbol, first
prepending the characters before $S_{p,q}$, and then appending the characters
after $S_{p,q}$. By \cref{lem:claim}, the resulting partial word has periods $p$
and $q$ because it is contained in $W_{p,q}\dziura\dziura$. Moreover, it is
not unary because it contains $S_{p,q}$.

A hole is added at the first two iterations among every $p$ iterations. Hence, the total number of holes is as claimed:
$$\ceil{\tfrac{n-|S_{p,q}|}{p}} + \ceil{\tfrac{n-|S_{p,q}|-1}{p}} = \floor{\tfrac{n-q+1}{p}} + \floor{\tfrac{n-q}{p}} = H^s(n,p,q),$$
because $\lceil{\frac{x}{p}\rceil}=\lfloor{\frac{x+p-1}{p}}\rfloor$ for every
integer $x$.

\end{proof}

Finally, the function $L^s=\widetilde{H^s}$ is very simple and easily computable.
\begin{lemma}\label{lem:ls}
If $h\ge 0$ is an integer, then
  $L^s(h,p,q) = \ceil{\frac{h+1}{2}} p + q - (h+1)\bmod 2$.
\end{lemma}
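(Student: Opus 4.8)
The plan is to compute the generalized inverse directly from the definition $L^s(h,p,q) = \widetilde{H^s}(h,p,q) = \min\{n : H^s(n,p,q) > h\}$, where $H^s(n,p,q) = \floor{\frac{n-q}{p}} + \floor{\frac{n-q+1}{p}}$. The key observation is that $H^s$ is a step function in $n$ that increases in a predictable pattern: as $n$ grows by one, the two floor terms track $\frac{n-q}{p}$ and $\frac{n-q+1}{p}$, which are consecutive integers divided by $p$. I would first substitute $m = n - q$ to simplify, studying $g(m) = \floor{\frac{m}{p}} + \floor{\frac{m+1}{p}}$ and finding the smallest $m$ with $g(m) > h$; then $L^s(h,p,q) = q + m$.

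The main step is to understand exactly how $g(m)$ advances. For each block of $p$ consecutive values of $m$, the sum increases by $2$, since each floor term individually increases by $1$ per period $p$. The subtlety is \emph{where} within the period the jumps occur: writing $m = ap + r$ with $0 \le r < p$, the term $\floor{\frac{m+1}{p}}$ jumps at $r = p-1$ (one step before $\floor{\frac{m}{p}}$ jumps at $r = 0$ of the next block). So within each period the value pattern goes up by $1$ at two distinct offsets, meaning $g$ takes every nonnegative integer value, and I would pin down the smallest $m$ attaining each target. Concretely I would show that $g(m) \ge h+1$ first holds at $m = \ceil{\frac{h+1}{2}} p - ((h+1) \bmod 2)$, distinguishing the parity of $h+1$: when $h+1$ is even, both floor terms equal $\frac{h+1}{2}$ and the threshold is reached cleanly at a multiple of $p$; when $h+1$ is odd, the extra $+1$ inside one floor lets the sum reach the odd target one position earlier, which accounts for the $-((h+1)\bmod 2)$ correction.

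Thus I would verify the two cases: that $g$ at the claimed $m$ is at least $h+1$, and that $g$ at $m-1$ is at most $h$, establishing minimality. Adding back $q$ gives $L^s(h,p,q) = \ceil{\frac{h+1}{2}}p + q - (h+1)\bmod 2$, matching the statement. I expect the only real obstacle to be the careful parity bookkeeping around the off-by-one offset between the two floor terms; everything else is a routine verification that the candidate $m$ is the exact crossover point. Since $g$ is manifestly nondecreasing in $m$, confirming the value at $m-1$ and at $m$ suffices to identify the minimum, so no additional monotonicity argument is needed beyond what $H^s$'s definition already gives.
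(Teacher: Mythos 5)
Your proof is correct and follows essentially the same route as the paper: both locate the smallest length at which the step function $H^s$ first reaches the value $h+1$ via a parity case analysis on $h$, with your substitution $m=n-q$ and explicit check at $m-1$ being only presentational refinements of the paper's direct solution of the two floor equations. No gaps; the verification at the claimed $m$ and at $m-1$, together with monotonicity, is exactly what is needed.
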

\begin{proof}
We have to determine the smallest $n$ such that 
$\floor{\frac{n-q}{p}}+\floor{\frac{n-q+1}{p}}\;=\;h+1.$
There are two cases, depending on parity of $h$:
\paragraph{Case 1: $h=2k$.} In this case $\floor{\frac{n-q}{p}}=k$
and $\floor{\frac{n-q+1}{p}}=k+1$.
Hence, $n-q+1=p(k+1)$, i.e., $n=p(k+1)+q-1=\ceil{\frac{h+1}{2}} p + q - (h+1)\bmod 2$. 
\paragraph{Case 2: $h=2k+1$.} In this case $\floor{\frac{n-q}{p}}=k+1$
and $\floor{\frac{n-q+1}{p}}=k+1$. Hence, $n-q=p(k+1)$,
i.e., $n=p(k+1)+q=\ceil{\frac{h+1}{2}} p + q - (h+1)\bmod 2$.
 \end{proof}

\section{Functions $H^d$ and $L^d$}\label{sec:HLd}
In this section, we study a family of partial words corresponding to the \emph{special arrangements} introduced in \cite{ShurGamzova04}.
For relatively prime integers $p,q>1$, we say that a partial word $S$ of length $n\ge \max(p,q)$ is \emph{$(p,q)$-special}
if it has a position $l$ such that for each position $i$:
 $$S[i]=\begin{cases}
  \al & \text{if }p \nmid (l-i)\text{ and }q\nmid (l-i),\\
 \bl &\text{if }p \mid (l-i)\text{ and }q\mid (l-i),\\
 \dziura&\text{otherwise.}
 \end{cases}$$
Let $H^d(n,p,q)$ be the minimum number of holes in a $(p,q)$-special partial word of length~$n$.%
\begin{fact}\label{obs:Hdlb}
For each $n\ge \max(p,q)$, we have $H(n,p,q)\le H^d(n,p,q)$.
\end{fact}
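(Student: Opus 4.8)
The plan is to show that every $(p,q)$-special partial word is itself a valid counterexample to \cref{lem:perlem2}: it has length $n$, periods $p$ and $q$, yet is non-unary and therefore lacks the period $\gcd(p,q)=1$. Granting this, $H(n,p,q)$---defined as the least number of holes in any such counterexample---is bounded above by the number of holes in any particular special word, so minimizing over all $(p,q)$-special words of length $n$ gives $H(n,p,q)\le H^d(n,p,q)$. Thus the whole argument reduces to verifying the two periods and the non-unarity for an arbitrary special word $S$ with witness position $l$.

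For the period $p$, I would exhibit an explicit string period $P$ of length $p$, namely $P[l\bmod p]=\bl$ and $P[r]=\al$ for $r\ne l\bmod p$, and check that $S[i]\approx P[i\bmod p]$ for every position $i$. The verification is a short case analysis driven by the observation that $p\mid(l-i)$ holds exactly when $i\equiv l\pmod p$. In the residue class $i\equiv l\pmod p$ the defining rule forbids the value $\al$ (which would require $p\nmid(l-i)$), so $S[i]\in\{\bl,\dziura\}$, both matching $\bl=P[i\bmod p]$; in every other residue class it forbids $\bl$, so $S[i]\in\{\al,\dziura\}$, both matching $\al=P[i\bmod p]$. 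The argument for $q$ is identical after swapping the roles of $p$ and $q$, using the string period $Q$ with $Q[l\bmod q]=\bl$ and $Q[s]=\al$ otherwise.

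Non-unarity is immediate: at $i=l$ we have $l-i=0$, which is divisible by both $p$ and $q$, so $S[l]=\bl$; and since $p,q>1$ neither divides $\pm1$, a position adjacent to $l$ (which exists because $n\ge\max(p,q)\ge2$) carries the value $\al$. Hence $S$ contains both letters and cannot have period $1=\gcd(p,q)$, which completes the reduction.

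I expect no serious obstacle here---this is precisely why the statement is phrased as a \cref{obs:Hdlb}. The only point demanding care is the strong-period convention: a period must be witnessed by a single consistent string period, so one cannot argue position by position with $\approx$ alone but must confirm that the holes and solid symbols within each residue class are simultaneously compatible with one fixed letter. The divisibility characterization above makes this compatibility automatic, since $\bl$ occurs only where both $p$ and $q$ divide $l-i$, and $\al$ only where neither does.
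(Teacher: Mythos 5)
Your proof is correct and follows the same route as the paper: the paper's proof simply observes that every $(p,q)$-special partial word has periods $p$ and $q$ but, since $p,q>1$, lacks period $1=\gcd(p,q)$, leaving the verification implicit. Your version spells out exactly that verification (the explicit string periods with $\bl$ at position $l\bmod p$, respectively $l\bmod q$, and the non-unarity at positions $l$ and $l\pm1$), so it is the same argument in expanded form.
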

\begin{proof}
Observe that every $(p,q)$-special partial word has periods $p$ and $q$.
However, due to $p,q>1$, it does not have period $1=\gcd(p,q)$.
\end{proof}
\begin{example}
The partial word $\al\al\al\al\bl\al\al\al\al \dziura \al\dziura \al\al \dziura \al\al\al$ is $(5,7)$-special (with $l=4$), so $H(18,5,7) \le H^d(18,5,7)\le 3$
and $L(3,5,7)\ge 19$. In fact, these bounds are tight; see~\cref{TABELKI2}.
\end{example}

To derive a formula for $H^d(n,p,q)$, let us introduce an auxiliary function $G$,
which counts integers $i\in \{1,\ldots,n\}$ that are multiples of $p$ or of $q$ but not both:
$$G(n,p,q)=\floor{\tfrac{n}{p}}+\floor{\tfrac{n}{q}}-2\floor{\tfrac{n}{pq}}.$$
The function $H^d$ can be characterized using $G$, while 
the generalized inverse $L^d = \widetilde{H^d}$ admits a dual characterization in terms of $\widetilde{G}$; see also \cref{tab:gld}.
\begin{lemma}\label{lem:simple_hd}
Let $p,q>1$ be relatively prime integers.
\begin{enumerate}[label={\rm(\alph*)}]
  \item If $n \ge \max(p,q)$, then
$H^d(n,p,q)=\min_{l=0}^{n-1}\left(G(l,p,q)+G(n-l-1,p,q)\right)$.
\item If $h\ge 0$, then
  $L^d(h,p,q)=\max_{k=0}^h\left(\widetilde{G}(k,p,q)+\widetilde{G}(h-k,p,q)\right).$
\end{enumerate}
\end{lemma}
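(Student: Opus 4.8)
The plan is to prove both parts by directly analyzing the combinatorial structure of $(p,q)$-special partial words and then dualizing to obtain the statement about the generalized inverse $L^d = \widetilde{H^d}$.

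Let me sketch how I would prove Lemma~\ref{lem:simple_hd}.

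=== PROOF PROPOSAL ===

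\textbf{Part (a): counting holes in a special word.}
First I would unpack the definition of a $(p,q)$-special word of length $n$ with distinguished position $l$. A position $i$ is a hole exactly when precisely one of $p \mid (l-i)$ or $q \mid (l-i)$ holds, i.e.\ when $l-i$ is a nonzero multiple of exactly one of $p, q$ (the case $l-i=0$ gives $\bl$, not a hole, since $p,q$ both divide $0$). The key observation is that the set of displacements $l-i$ as $i$ ranges over $\{0,\dots,n-1\}$ splits into the displacements to the left of $l$ and to the right of $l$. Concretely, for positions $i > l$ we have $l - i \in \{-1,-2,\dots,-(n-1-l)\}$, so $|l-i|$ ranges over $\{1,\dots,n-1-l\}$; for positions $i < l$ we have $l-i \in \{1,\dots,l\}$. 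I would therefore show that the number of holes contributed by positions strictly to one side of $l$ at distance in $\{1,\dots,m\}$ is exactly $G(m,p,q)$, since a displacement $d \in \{1,\dots,m\}$ yields a hole iff $d$ is a multiple of exactly one of $p,q$, which is precisely what $G$ counts (the term $-2\floor{n/(pq)}$ removes the multiples of $\lcm(p,q) = pq$ from both $\floor{n/p}$ and $\floor{n/q}$). Summing the two sides with $m = l$ on the left and $m = n-1-l$ on the right gives that a $(p,q)$-special word with distinguished position $l$ has exactly $G(l,p,q) + G(n-l-1,p,q)$ holes. Minimizing over the choice of $l \in \{0,\dots,n-1\}$ yields part (a).

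\textbf{Part (b): dualizing to the generalized inverse.}
For part (b) I would argue by manipulating the definition $L^d(h,p,q) = \widetilde{H^d}(h,p,q) = \min\{n : H^d(n,p,q) > h\}$ together with the formula from part (a). The natural strategy is to show the two displacement counts decouple. Writing $a = l$ and $b = n-l-1$, note $n = a+b+1$ and $a,b \ge 0$ range freely, so part (a) says $H^d(n) = \min_{a+b=n-1}\bigl(G(a)+G(b)\bigr)$. Then $H^d(n) \le h$ means there is a split $a+b = n-1$ with $G(a)+G(b) \le h$, and I would use the monotonicity of $G$ in its first argument (evident from the formula) to relate this to the per-side inverses $\widetilde{G}$. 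Concretely, $\widetilde{G}(k,p,q) = \min\{m : G(m,p,q) > k\}$ is the smallest length on one side forcing more than $k$ holes there; equivalently $G(m) \le k \iff m < \widetilde{G}(k)$, i.e.\ $m \le \widetilde{G}(k)-1$. I expect to show that the largest $n$ with $H^d(n) \le h$ is obtained by maximizing $a+b+1$ over all splits of the hole budget $h = k + (h-k)$ between the two sides, giving $a \le \widetilde{G}(k)-1$ and $b \le \widetilde{G}(h-k)-1$, hence $n-1 = a+b \le \widetilde{G}(k)+\widetilde{G}(h-k)-2$. The threshold $L^d(h)$ is then one more than this maximum length, which after the $+1$ shifts yields exactly $\max_{k=0}^{h}\bigl(\widetilde{G}(k,p,q)+\widetilde{G}(h-k,p,q)\bigr)$.

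\textbf{Main obstacle.}
The routine direction is the counting in part (a); the delicate part is the dualization in (b), specifically matching off-by-one constants correctly through the two applications of the generalized-inverse operator and verifying that taking the $\min$ over $l$ inside $H^d$ transforms into a $\max$ over $k$ inside $L^d$. The conceptual reason is order reversal under the generalized inverse: to make the total hole count exceed $h$ for \emph{every} split of lengths, the adversary must be defeated on the split that survives longest, which is exactly the $\max$ over how the budget $h$ is apportioned between the two sides. I would make this rigorous by proving the equivalence $H^d(n,p,q) > h \iff n \ge \max_{k=0}^{h}\bigl(\widetilde{G}(k,p,q)+\widetilde{G}(h-k,p,q)\bigr)$ directly, using the characterization $G(m) \le k \iff m \le \widetilde{G}(k)-1$ and the fact that $a+b = n-1$. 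Care with the boundary cases $k=0$ and $k=h$ (where $\widetilde{G}(0,p,q) = \min(p,q)$, the smallest positive multiple of either $p$ or $q$) will be the only place demanding attention.
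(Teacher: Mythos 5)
Your proposal is correct and takes essentially the same route as the paper's own proof: part (a) by counting, on each side of $l$, the displacements in $\{1,\dots,l\}$ and $\{1,\dots,n-1-l\}$ that are multiples of exactly one of $p,q$ (which is precisely what $G$ counts), and part (b) by dualizing through the equivalence $G(m)\le k \iff m \le \widetilde{G}(k)-1$ plus attainment of the bound --- this is exactly the paper's ``$k=G(l,p,q)$, so $l+1\le \widetilde{G}(k,p,q)$'' argument, just spelled out with the off-by-one bookkeeping made explicit. The one blemish (shared with the paper, so not held against you) is that realizing the extremal length in part (b) may formally require a special word shorter than $\max(p,q)$, which the stated definition of $(p,q)$-special words excludes.
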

\begin{proof}
Let $S$ be a $(p,q)$-special partial word of length $n$ with $h$ holes,
$k$ of which are located to the left of position $l$.
Observe that $k = G(l,p,q)$ (so $l+1\le \widetilde{G}(k,p,q)$)
and $h-k = G(n-l-1,p,q)$ (so $n-l \le \widetilde{G}(h-k,p,q)$).
Hence,
$h = G(l,p,q)+G(n-l-1,p,q)$ and $n+1 \le \widetilde{G}(k,p,q)+\widetilde{G}(h-k,p,q)$.
The claimed equalities follow from the fact that these bounds can be attained for
each $l$ and $k$, respectively.
\end{proof}

  \begin{table}[t!]
  \centering
\renewcommand*{\arraystretch}{1.25}
\setlength{\tabcolsep}{4pt}
\begin{tabular}{r|*{22}{c}}
$h\ $             &0&1&2&3&4&5&6&7&8&9&10&11&12&13&14&15&16&17&18&19&20 \\
\hline
$\widetilde{G}(h,5,7)\ $   & 5 & 7 & 10 & 14 & 15 & 20 & 21 & 25 & 28 & 30 & 40 & 42 & 45 & 49 & 50 & 55 & 56 & 60 & 63 & 65 & 75\\
\hline
$L^d(h,5,7)\ $   & 10 & 12 & 15 & 19 & 21 & 25 & 28 & 30 & 34 & 35 & 45 & 47 & 50 & 54 & 56 & 60 & 63 & 65 & 69 & 70 & 80
\end{tabular}
\caption{Functions $\widetilde{G}$ and $L^d$ for $p=5$, $q=7$, and $h=0,\ldots,20$.
By \cref{lem:simple_hd}, we have, for example,  $L^d(8,5,7)=\max\big(\widetilde{G}(0,5,7)+\widetilde{G}(8,5,7),\ldots,\widetilde{G}(4,5,7)+\widetilde{G}(4,5,7)\big)
              = \max({5+28},\,7+25,\,10+21,\,14+20,\,15+15)=34$.
              }\label{tab:gld}
\end{table}

\section{Characterizations of $H$ and $L$}\label{sec:char}
Shur and Gamzova in \cite{ShurGamzova04} proved that $H(n,p,q)\,=\, H^d(n,p,q)$ for $n\ge 3q+p$.
In this section, we give a complete characterization of $H$ in terms of $H^d$ and $H^s$,
and we derive an analogous characterization of $L$ in terms of $L^d$ and $L^s$.
Our proof is based on a graph-theoretic approach similar to that in~\cite{DBLP:journals/jda/Blanchet-SadriMS12}.

Let us define the \emph{$(n,p,q)$-graph} $\mathbf{G}=(V,E)$ as an undirected graph with vertices $V=\{0,\ldots,n-1\}$.
The vertices $i$ and $j$ are connected if and only if $p\mid (j-i)$ or $q \mid (j-i)$.
Observe that $H(n,p,q)$ is the minimum size of a \emph{vertex separator} in $\mathbf{G}$,
i.e., the minimum number of vertices to be removed from $\mathbf{G}$ so that the resulting graph is no longer connected; see~\cref{fig:kratka}.

We say that an edge $(i,j)$ of the $(n,p,q)$-graph is a \emph{$p$-edge} if $p \mid (j-i)$ and a \emph{$q$-edge} if $q \mid (j-i)$.
The set of all nodes giving the same remainder modulo $p$ (modulo $q$) is called a \emph{$p$-class} (\emph{$q$-class}, respectively).
Each $p$-class and each $q$-class forms a clique in the $(n,p,q)$-graph.

\begin{fact}[see  \cite{DBLP:journals/jda/Blanchet-SadriMS12}]
Let $1<p<q$ be relatively prime integers.
If $n<pq$, then $H^d(n,p,q)$ is the minimal degree of a vertex in the $(n,p,q)$-graph.
\label{lem:blanchet}
\end{fact}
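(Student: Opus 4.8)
The plan is to prove the claimed characterization of $H^d(n,p,q)$ in terms of minimum vertex degree by combining two ingredients: the combinatorial description of $H^d$ from \cref{lem:simple_hd}(a), and a direct analysis of degrees in the $(n,p,q)$-graph under the assumption $n<pq$. First I would observe that because $\gcd(p,q)=1$ and $n<pq$, no position index $i\in\{0,\ldots,n-1\}$ is simultaneously divisible (after shifting) by both $p$ and $q$ in a way that creates a coincidence; concretely, for any fixed vertex $v$, the set of neighbors reached by $p$-edges and the set reached by $q$-edges meet only at $v$ itself, since two positions at distance a common multiple of $p$ and $q$ would be at distance a multiple of $pq>n$. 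This disjointness is what lets us simply add the two contributions to the degree.

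Next I would compute the degree of an arbitrary vertex $l\in\{0,\ldots,n-1\}$ explicitly. The neighbors in its $p$-class are the positions $\equiv l\pmod p$ other than $l$, and the number of such positions is $\lfloor\frac{n-1-l}{p}\rfloor+\lfloor\frac{l}{p}\rfloor$ (those to the right plus those to the left); similarly for the $q$-class with $p$ replaced by $q$. Using the disjointness observed above, the total degree of $l$ equals
$$\deg(l)=\Big(\floor{\tfrac{l}{p}}+\floor{\tfrac{n-1-l}{p}}\Big)+\Big(\floor{\tfrac{l}{q}}+\floor{\tfrac{n-1-l}{q}}\Big).$$
I would then recognize, via the definition $G(m,p,q)=\floor{\frac mp}+\floor{\frac mq}-2\floor{\frac m{pq}}$ and the fact that the $\floor{\cdot/pq}$ terms vanish for arguments below $pq$, that this degree is exactly $G(l,p,q)+G(n-l-1,p,q)$. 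Hence the minimum vertex degree over all $l$ equals $\min_{l=0}^{n-1}\big(G(l,p,q)+G(n-l-1,p,q)\big)$, which by \cref{lem:simple_hd}(a) is precisely $H^d(n,p,q)$.

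The main obstacle I anticipate is verifying the disjointness claim cleanly and making sure the edge-counting has no off-by-one or double-counting errors: one must confirm that a $p$-edge and a $q$-edge incident to the same vertex never coincide (which is where $n<pq$ is essential) and that the left/right split of each class correctly yields the floor expressions, including the boundary cases $l\equiv 0$ and the endpoints $l=0$ or $l=n-1$. To finish, I would note that the formula $\deg(l)=G(l,p,q)+G(n-l-1,p,q)$ is an exact identity valid for every $l$ in the stated range, so taking the minimum over $l$ matches the right-hand side of \cref{lem:simple_hd}(a) term by term; this both establishes that the minimal degree is an upper bound for a separator (delete one minimum-degree vertex's neighborhood) and, via the lemma, that it equals $H^d(n,p,q)$, completing the proof.
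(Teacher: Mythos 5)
Your proposal is correct and takes essentially the same approach as the paper: both identify the degree of vertex $l$ as $G(l,p,q)+G(n-l-1,p,q)$ and then invoke \cref{lem:simple_hd}(a). The paper's proof is simply terser—it leaves implicit the disjointness of $p$-neighbors and $q$-neighbors and the vanishing of the $\lfloor\cdot/(pq)\rfloor$ terms, which you correctly make explicit as the two places where the hypothesis $n<pq$ is actually used.
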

\begin{proof}
Observe that vertex number $l$ has $G(l,p,q)$ neighbors $i<l$ and $G(n-l-1,p,q)$ neighbors $i>l$. 
Consequently, by \cref{lem:simple_hd}, $H^d(n,p,q)=\min_{l=0}^{n-1}(G(l,p,q)+G(n-l-1,p,q))=\min_{l=0}^{n-1}\deg_{\mathbf{G}}(l)$.
\end{proof}

\noindent
Let $\mathbf{G}=(V,E)$ be the $(n,p,q)$-graph.
For each $i\in \{0,\ldots,p-1\}$ let $C_i$ be the $p$-class containing the vertex $i$; see \cref{fig:kratka}.
We slightly abuse the notation and use arbitrary integers for indexing the $p$-classes: $C_i = C_{i \bmod p}$ for $i\in \mathbb{Z}$.
We denote by $E_i$ the set of $q$-edges of the form $(j,j+q)$ for $j\in C_i$.
Let us start with two auxiliary facts.

\begin{figure}[htpb]
  \begin{center}
  \begin{tikzpicture}
\def\unit{20pt}
\tikzstyle{empty}=[draw, circle, minimum size=\unit*0.6, inner sep=0pt, fill=white, line width=0pt, draw=white]
\tikzstyle{marked}=[draw, circle, minimum size=\unit*0.6, inner sep=1pt, fill=white, line width=0.5pt, draw=red]

\newcommand{\drawGrid}[3]{
  \draw (1*\unit,0)--(#1*\unit,0);
  \foreach \x in {2,...,#1} {
    \draw[#2] (\x*\unit,0)--+(0,0.5*\unit);
    \draw[#3] (\x*\unit-\unit,0)--+(0,-0.5*\unit);
  }
}

\begin{scope}
  \drawGrid{4}{}{dotted}
  \node at (0,0) {$C_2$};
  \foreach \label/\style [count=\i] in {2/empty, 7/empty, 12/empty, 17/empty} {
    \node [\style] at (\i*\unit, 0) {$\label$};
  }
\end{scope}

\begin{scope}[xshift=1*\unit,yshift=1*\unit]
  \drawGrid{4}{}{}\draw[white,line width=2pt](2*\unit,0)--+(0,\unit);
  \node at (0,0) {$C_0$};
  \foreach \label/\style [count=\i] in {0/empty, 5/empty, 10/empty, 15/marked} {
    \node [\style] at (\i*\unit, 0) {$\label$};
  }
\end{scope}

\begin{scope}[xshift=3*\unit,yshift=2*\unit]
  \drawGrid{4}{}{}\draw[white,line width=2pt](3*\unit,0)--+(0,-0.5*\unit);
  \node at (0,0) {$C_3$};
  \foreach \label/\style [count=\i] in {3/marked, 8/empty, 13/empty, 18/empty} {
    \node [\style] at (\i*\unit, 0) {$\label$};
  }
\end{scope}

\begin{scope}[xshift=4*\unit,yshift=3*\unit]
  \drawGrid{4}{}{}\draw[white,line width=2pt](2*\unit,0)--+(0,0.5*\unit);
  \node at (0,0) {$C_1$};
  \foreach \label/\style [count=\i] in {1/empty, 6/empty, 11/empty, 16/marked} {
    \node [\style] at (\i*\unit, 0) {$\label$};
  }
\end{scope}

\begin{scope}[xshift=6*\unit,yshift=4*\unit]
  \drawGrid{4}{}{}\draw[white,line width=2pt](3*\unit,0)--+(0,-0.5*\unit);
  \node at (0,0) {$C_4$};
  \foreach \label/\style [count=\i] in {4/marked, 9/empty, 14/empty, 19/empty} {
    \node [\style] at (\i*\unit, 0) {$\label$};
  }
\end{scope}

\begin{scope}[xshift=7*\unit,yshift=5*\unit]
  \drawGrid{4}{dotted}{}
  \node at (0,0) {$C_2$};
  \foreach \label/\style [count=\i] in {2/empty, 7/empty, 12/empty, 17/empty} {
    \node [\style] at (\i*\unit, 0) {$\label$};
  }
\end{scope}

\end{tikzpicture}
  \end{center}
  \caption{The structure of the $(20,5,7)$-graph. 
Each 5-clique $C_i$ is actually a clique; same applies for the vertical $7$-cliques.
The $5$-clique $C_2$ is repeated to show the cyclicity.
The set $U=\{3,4,5,7\}$ of encircled vertices is a minimum-size vertex separator.
It corresponds to the partial word $W_{5,7}$ from \cref{ex:w57}:
holes of $W_{5,7}$ are located at positions $i\in U$,
the positions $i$ with $W_{5,7}[i]=\bl$ form a connected component $(C_1\cup C_4)\setminus U$,
while the positions $i$ with $W_{5,7}[i]=\al$ form a connected component $(C_0\cup C_2\cup C_3)\setminus U$.
}\label{fig:kratka}
  \end{figure}

\begin{fact}\label{fct:HsE}
Let $1<p<q$ be relatively prime integers.
\begin{enumerate}[label={\rm(\alph*)}]
  \item For $j\in \{0,\ldots,p-1\}$, we have $|E_j| = \ceil{\frac{n-j-q}{p}}$.
  \item $H^s(n,p,q) = |E_{p-1}|+|E_{p-2}|=\min_{i \ne j} \left(|E_i| + |E_j|\right)$.
 \end{enumerate}
\end{fact}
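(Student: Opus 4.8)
The plan is to prove the two parts of \cref{fct:HsE} more or less independently, with part (a) being a direct counting argument and part (b) following by plugging the formula from (a) into the definition of $H^s$ and then verifying the minimality claim.

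For part (a), recall that $E_j$ consists of the $q$-edges $(j',j'+q)$ whose left endpoint $j'$ lies in the $p$-class $C_j$; that is, $j'\equiv j \pmod p$. The constraint is that both endpoints must be valid vertices of the $(n,p,q)$-graph, so I need $0\le j'$ and $j'+q\le n-1$, i.e.\ $j'\le n-1-q$. Since $j'$ ranges over $\{j, j+p, j+2p,\ldots\}$, the number of admissible values is exactly the number of integers $j'\equiv j\pmod p$ with $0\le j'\le n-1-q$. Counting such integers gives $\lceil \frac{n-q-j}{p}\rceil$ (using that $j\in\{0,\ldots,p-1\}$, so the smallest candidate is $j$ itself). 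I would just write out this count carefully, noting that the number of multiples-of-$p$ offsets from $j$ not exceeding $n-1-q$ equals $\lceil\frac{(n-1-q)-j+1}{p}\rceil=\lceil\frac{n-q-j}{p}\rceil$.

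For part (b), first I substitute $j=p-1$ and $j=p-2$ into the formula from (a):
$$|E_{p-1}|+|E_{p-2}| = \ceil{\tfrac{n-(p-1)-q}{p}}+\ceil{\tfrac{n-(p-2)-q}{p}} = \ceil{\tfrac{n-q+1}{p}}-1+\ceil{\tfrac{n-q+2}{p}}-1.$$
Using the identity $\lceil\frac{x}{p}\rceil = \lfloor\frac{x+p-1}{p}\rfloor$ (already invoked in the proof of \cref{lem:hs}) this should simplify to $\lfloor\frac{n-q}{p}\rfloor+\lfloor\frac{n-q+1}{p}\rfloor = H^s(n,p,q)$; I would carry out this bookkeeping explicitly to confirm the offsets match. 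The second equality, $|E_{p-1}|+|E_{p-2}|=\min_{i\ne j}(|E_i|+|E_j|)$, follows from the monotonicity of $|E_j|$ in $j$: since $\ceil{\frac{n-j-q}{p}}$ is nonincreasing in $j$ over $j\in\{0,\ldots,p-1\}$, the two smallest values of $|E_j|$ are attained at the two largest indices $j=p-1$ and $j=p-2$, so any pair of distinct indices has sum at least $|E_{p-1}|+|E_{p-2}|$.

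The only subtle point I expect to watch is the boundary bookkeeping in the ceiling manipulations: whether $\ceil{\frac{n-q-j}{p}}$ correctly handles the edge cases where $n-q-j$ is negative or a multiple of $p$, and making sure the off-by-one shifts when converting between $j=p-1,p-2$ and the $n-q,n-q+1$ forms are exact. This is purely arithmetic rather than conceptual, so the main effort is careful rather than deep; the monotonicity argument for the $\min$ is the cleanest part and needs little beyond observing that increasing $j$ can only decrease the numerator.
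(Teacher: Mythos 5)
Your proposal is correct and follows essentially the same route as the paper: part (a) by directly counting the valid left endpoints $j'\equiv j\pmod p$ with $j'+q\le n-1$, yielding $\ceil{\frac{n-j-q}{p}}$, and part (b) by evaluating this formula at $j=p-1,p-2$ (which the ceiling-to-floor identity turns into $\floor{\frac{n-q}{p}}+\floor{\frac{n-q+1}{p}}=H^s(n,p,q)$) and invoking monotonicity of $|E_j|$ in $j$ to get the minimality over distinct pairs. The arithmetic details you flag as needing care do check out, and your monotonicity argument is exactly the paper's pair of inequalities $|E_j|\ge|E_{p-1}|$ (for $j<p$) and $|E_j|\ge|E_{p-2}|$ (for $j<p-1$) in slightly different words.
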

 \newpage
\begin{proof}
  Let $i=kp+j$, where $0 \le j < p$.
  There is a $q$-edge $(i,i+q)$ if and only if
  $$kp+j+q \le n-1,\text { so }k \le \floor{\tfrac{n-1-j-q}{p}}.$$
  The number of such values of $k$ is $\floor{\tfrac{n-1-j-q}{p}}+1=\ceil{\tfrac{n-j-q}{p}}$.

  As for the second statement of the fact, we have:
  $$|E_j| \ge |E_{p-1}| = \ceil{\tfrac{n-p+1-q}{p}}=\floor{\tfrac{n-q}{p}}$$
  for $0\le j < p$
  and, similarly, $|E_j|\ge |E_{p-2}|=\floor{\tfrac{n-q+1}{p}}$ for $0\le j < p-1$.
\end{proof}

\begin{fact}\label{fct:color}
Let $U$ be a vertex separator in the $(n,p,q)$-graph $\mathbf{G}=(V,E)$
and let $\mathbf{G}'=\mathbf{G} \setminus U$.
  One can color the vertices of $\mathbf{G}$ in two colors so that
  every edge in $\mathbf{G}'$ and every $p$-class in $\mathbf{G}$ is monochromatic, 
  but $\mathbf{G}'$ is not monochromatic.
\end{fact}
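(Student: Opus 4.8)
The plan is to build the two-coloring directly from the connected components of $\mathbf{G}'$. Since $U$ is a vertex separator, $\mathbf{G}'=\mathbf{G}\setminus U$ is disconnected, so it has at least two connected components. I would fix one component $A$ and let $B$ be the union of all the remaining components, so that $A$ and $B$ are both nonempty and together contain every vertex of $\mathbf{G}'$. Assign color $\al$ to every vertex of $A$ and color $\bl$ to every vertex of $B$. Since no edge of $\mathbf{G}'$ joins two different components, every edge of $\mathbf{G}'$ has both endpoints in $A$ or both in $B$, hence is monochromatic; and since both $A$ and $B$ are nonempty, $\mathbf{G}'$ receives both colors and is therefore not monochromatic. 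This already secures two of the three required properties.

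The one remaining property --- that every $p$-class of $\mathbf{G}$ is monochromatic --- is the heart of the argument, and the key observation is that each $p$-class is a clique. First I would note that for a $p$-class $C_i$, the surviving vertices $C_i\setminus U$ are pairwise joined by $p$-edges, hence form a clique in $\mathbf{G}'$ and therefore lie in a single connected component. Consequently $C_i\setminus U$ is entirely contained in $A$ or entirely in $B$, so the vertices of $C_i$ outside $U$ already carry a single well-defined color. I would then extend the coloring to the vertices of $C_i$ that lie in $U$ by giving them this same color; for a class $C_i\subseteq U$ with no surviving vertex, I color all of $C_i$ arbitrarily, say $\al$. By construction every $p$-class is now monochromatic, while the colors on $V\setminus U$ are untouched, so the two properties established above still hold.

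The main obstacle to watch for is exactly this reconciliation between the component-based coloring of $\mathbf{G}'$ and the $p$-class constraint: one must check that no $p$-class is forced to take both colors. The clique structure of the $p$-classes is what rules this out, since a clique cannot be split across two components; and the freedom in coloring the separator vertices of $U$ (which contribute no edges to $\mathbf{G}'$) is what lets us absorb them into their classes without creating conflicts. Once these points are in place, verifying the three conditions is immediate, so no substantial computation is required.
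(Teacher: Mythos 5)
Your proof is correct and follows essentially the same route as the paper: both arguments hinge on the observation that each $p$-class is a clique, so its surviving part $C_i\setminus U$ lies in a single connected component of $\mathbf{G}'$, and both then color by distinguishing one component and extending the coloring over the separator vertices class by class. The only difference is presentational (you color components first and then absorb $U$, while the paper colors whole $p$-classes at once), so no gap remains.
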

\begin{proof}
Recall that each $p$-class $C_i$ is a clique in $\mathbf{G}$, so $C_i \setminus U$ is still a clique in $\mathbf{G}'$.
We distinguish a connected component $M$ of $\mathbf{G}'$ and color the vertices of $C_i$ depending on whether $C_i\setminus U \subseteq M$.
It is easy to verify that this coloring satisfies the claimed conditions.
 \end{proof}

The following lemma provides lower bounds on $H(n,p,q)$.

\begin{lemma}\label{lem:main}
Let $1<p<q$ be relatively prime integers.
  \begin{enumerate}[label={\rm(\arabic*)}]
    \item If $n < 2q$, then $H(n,p,q) \ge H^s(n,p,q)$.
    \item If $p\ge 3$ and $n \ge 2q$, then $H(n,p,q) \ge \min(H^d(n,p,q), H^s(n,p,q))$.
    \item If $p \ge 5$ and $n \ge 4q$, then $H(n,p,q) \ge \min(H^d(n,p,q), H^s(n,p,q)+1)$.
  \end{enumerate}
\end{lemma}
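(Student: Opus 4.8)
The plan is to fix a minimum vertex separator $U$ of the $(n,p,q)$-graph $\mathbf{G}$, so that $|U| = H(n,p,q)$, and to analyse it through the two-colouring furnished by \cref{fct:color}. Every $q$-edge joins two cyclically consecutive $p$-classes $C_i$ and $C_{i+q}$, so the colouring (in which each $p$-class is monochromatic) leaves every $q$-edge between classes of \emph{equal} colour safe, while every $q$-edge between classes of \emph{different} colour --- call these the \emph{crossing} edges --- must have an endpoint in $U$. Hence $U$ is a vertex cover of the crossing edges. Reading off the cyclic sequence of class-colours, the crossing edges at a colour boundary separating $C_i$ from $C_{i+q}$ are exactly the set $E_i$; since \cref{fct:color} guarantees that both colours occur among the surviving vertices, this sequence is non-constant and therefore has at least two boundaries, located at two distinct indices $i_1 \ne i_2$.

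\textbf{Part (1).} When $n < 2q$ every $q$-class has at most two vertices, so no vertex is incident to two $q$-edges and the crossing edges form a matching. A vertex cover of a matching has size equal to the number of edges, so $|U|$ is at least the number of crossing edges, which is $\sum_{\text{boundaries}} |E_i| \ge |E_{i_1}| + |E_{i_2}|$. By \cref{fct:HsE}(b) this is at least $\min_{i \ne j}(|E_i| + |E_j|) = H^s(n,p,q)$, proving (1).

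\textbf{Parts (2) and (3): the dichotomy.} For $n \ge 2q$ I would split according to whether two boundaries lie at cyclic distance at least two. \emph{Case (I): some two boundaries are at cyclic distance at least two.} Then the four classes $C_{i_1}, C_{i_1+q}, C_{i_2}, C_{i_2+q}$ are distinct, so $E_{i_1}$ and $E_{i_2}$ are vertex-disjoint matchings and $|U| \ge |E_{i_1}| + |E_{i_2}| \ge H^s(n,p,q)$ by \cref{fct:HsE}(b). \emph{Case (II): all boundaries are pairwise adjacent,} which for a non-constant cyclic colouring is possible only with exactly two boundaries flanking a single $p$-class $C_j$, all other classes carrying the opposite colour. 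Let $M \subseteq C_j$ be the surviving vertices of that class and $s = |M| \ge 1$ (nonempty, as the minority colour must survive). Since $C_j$ is a clique, every $w \in C_j \setminus M$ lies in $U$; and since the only neighbours of $M$ outside $C_j$ are the $q$-neighbours $v \pm q$, lying in oppositely coloured classes, those too lie in $U$. Using $p \ge 3$ (so $p \nmid 2q$, whence distinct survivors contribute distinct $q$-neighbours) and $n \ge 2q$ (so each $v$ has a $q$-neighbour), a short count gives $|U| \ge \deg_{\mathbf{G}}(v^{*})$ for some $v^{*} \in M$, hence $|U| \ge \min_l \deg_{\mathbf{G}}(l) = H^d(n,p,q)$ by \cref{lem:blanchet} (valid for $n < pq$; in the complementary range $H = H^d$ already holds by Shur and Gamzova). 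Combining the two cases yields $H \ge \min(H^d, H^s)$, which is (2).

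\textbf{Part (3) and the main obstacle.} In case (II) the bound $|U| \ge H^d$ already suffices for (3), so the entire content of (3) is to upgrade case (I) to $|U| \ge H^s(n,p,q)+1$ whenever $|U| < H^d$; this is where I expect essentially all of the work to lie. The extra hypotheses $p \ge 5$ and $n \ge 4q$ (so there are at least five classes and every $q$-class has length at least four) should be exploited to show that a two-arc colouring realising the minimum $|E_{i_1}| + |E_{i_2}| = H^s$ forces its boundaries onto the two lightest classes $C_{p-1}, C_{p-2}$, whose cyclic placement is incompatible with both arcs having length at least two unless $H^d \le H^s$ --- so that a genuine separator must either be strictly heavier, supplying the missing unit, or already be governed by $H^d$. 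Alongside this, the remaining delicate point is the careful bookkeeping of the multi-survivor subcase $s \ge 2$ in case (II). I would finally cross-check the whole range against the identity $n \ge 4q > 3q+p$, where the theorem of Shur and Gamzova independently gives $H = H^d \ge \min(H^d, H^s+1)$, confirming consistency of the bound.
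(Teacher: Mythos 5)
Your setup and your treatment of parts (1) and (2) follow the paper's route: your Case (II) is exactly the paper's case of a single minority-colored $p$-class, your Case (I) is its complement, and the matching/vertex-cover argument via \cref{fct:color} and \cref{fct:HsE}(b) is the paper's. However, two of your local claims are false as stated. First, $q$-edges do not only join cyclically consecutive $p$-classes: each $q$-class is a clique, so $(j,j+2q)$ is also a $q$-edge once $n>2q$ (harmless where you actually use the claim). Second, and more seriously, in Case (II) the neighbours of a surviving $v\in M$ outside $C_j$ are \emph{all} vertices of its $q$-class $D_v$, not just $v\pm q$; all of them are $A$-colored and hence forced into $U$. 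This stronger statement is what your ``short count'' needs: with only $v\pm q$ forced into $U$ you obtain roughly $|U|\ge |C_j|+s$, which falls short of $\deg_{\mathbf{G}}(v^{*}) = |C_j|+|D_{v^{*}}|-2$ as soon as $\lfloor n/q\rfloor$ is large. With the corrected inventory the count does go through for $n<pq$ (distinct survivors then lie in distinct $q$-classes, and $n\ge 2q$ gives $|D_v|\ge 2$), matching the paper's counting over $q$-classes. Note also that your fallback to Shur and Gamzova for $n\ge pq$ leaves $p=3$, $n\in\{3q,3q+1,3q+2\}$ uncovered, since their theorem requires $n\ge 3q+p$.

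The genuine gap is part (3), where you explicitly defer ``essentially all of the work.'' The strategy you sketch --- that a colouring attaining $|E_{i_1}|+|E_{i_2}|=H^s(n,p,q)$ forces the boundaries onto $C_{p-1},C_{p-2}$ --- cannot work: by \cref{fct:HsE}(a), when $p\mid n-q$ all the sizes $|E_i|$ coincide, so every pair of boundaries attains the minimum and nothing is forced. The paper instead stays inside your Case (I) and uses $p\ge 5$ and $n\ge 4q$ to exhibit a \emph{third} bichromatic $q$-edge $(x,y)$ that is vertex-disjoint from the matching $E_{i_1}\cup E_{i_2}$ (its left endpoint is chosen with $x<q$ in a class adjacent to a boundary, which rules out collisions with the matching), immediately giving $|U|\ge H^s(n,p,q)+1$; the case analysis on how many boundaries exist is what requires $p\ge 5$, and $n\ge 4q$ guarantees the extra edge exists. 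Ironically, the observation you relegate to a ``cross-check'' --- that $n\ge 4q>3q+p$, so Shur and Gamzova's theorem yields $H(n,p,q)=H^d(n,p,q)\ge\min\left(H^d(n,p,q),H^s(n,p,q)+1\right)$ outright --- would by itself be a complete (though not self-contained) proof of (3); but as written, your primary route for (3) is an unproved expectation, not an argument.
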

\begin{proof}
Let $U$ be a minimum-size vertex separator of hte $(n,p,q)$-graph  $\mathbf{G}=(V,E)$;
recall that $|U|=H(n,p,q)$.
  Let us fix a coloring of $\mathbf{G}$ using colors $\{A,B\}$ satisfying \cref{fct:color}; without loss of generality we assume that the number of $p$-classes
  with color $A$ is at least the number of $p$-classes with color $B$.
  We have the following two cases.

  \paragraph{Case a: Exactly one $p$-class has color $B$.}
  Let $C_j$ be the unique $p$-class with color $B$.
  By definition, the edges in $E_{j-q}\cup E_j$ are bichromatic. 
  If $n<2q$, then all the $q$-edges form a matching in $\mathbf{G}$.
  In particular, in order to disconnect $\mathbf{G}$, we need to remove at least one endpoint of each edge in $E_{j-q}\cup E_j$.
  Hence, $H(n,p,q) \ge |E_{j-q}|+|E_j|\ge H^s(n,p,q)$, where the second inequality follows from \cref{fct:HsE}(b).
  This concludes the proof of (1) in this case.

  Now assume that $n \ge 2q$ and $p\ge 3$.
  We will show that $H(n,p,q) \ge H^d(n,p,q)$ holds in this case.
  Consider any $q$-class $D$ and let $k$ be its size; we have $k \ge \floor{\frac{n}{q}}\ge 2$.
  In this $q$-class, every $p$-th element has color $B$.
  Let $\#_A(D)$ and $\#_B(D)$ denote the number of vertices in $D$ colored with $A$ and $B$, respectively.
  Then:
  $$\#_B(D) \le \ceil{\tfrac{k}{p}} \le \ceil{\tfrac{k}{3}} = \floor{\tfrac{k+2}{3}} \le \floor{\tfrac{2k}{3}} = k - \ceil{\tfrac{k}{3}} \le k - \ceil{\tfrac{k}{p}} \le \#_A(D).$$
  The set $U$ contains all $B$-colored vertices or all $A$-colored vertices of every $q$-class $D$, as otherwise there would be a non-monochromatic edge in $\mathbf{G}'$
  connecting two vertices of $D \setminus U$, contradicting \cref{fct:color}.
  At least one vertex of $\mathbf{G}'$ is $B$-colored, so in at least one $q$-class, $U$ must contain all $A$-colored vertices; assume that this is the $q$-class $D_0$.
  Consequently,
  \begin{multline*}
    |U| = \sum_{D: q\text{-class}} |U \cap D| \ge \#_A(D_0) + \sum_{D: q\text{-class},\, D\ne D_0} \#_B(D) =\\
     |D_0| +  \sum_{D: q\text{-class}} \#_B(D)  - 2\#_B(D_0) = |D_0| + |C_j| - 2|D_0 \cap C_j| \ge H^d(n,p,q).
  \end{multline*}
  The last inequality follows from \cref{lem:blanchet}.
  This concludes (2) and (3) in this case.

  \paragraph{Case b: There are at least two $p$-classes with each color.}
  In particular, $p \ge 4$.
  We consider two subcases based on the colors $c_i$ of classes $C_i$.
  In each case we will show that $H(n,p,q)$ is bounded from below by $H^s(n,p,q)$ or $H^s(n,p,q)+1$.
  
  First, suppose that there is exactly one $p$-class $C_i$ such that $c_i = A$ and $c_{i+q}=B$.
  Equivalently, there is exactly one $p$-class $C_j$ such that $c_j = B$ and $c_{j+q}=A$.
  Since there are at least two $p$-classes with each color, $c_{i+2q}=B$ and $c_{j+2q}=A$, so $C_{i+q}\ne C_j$ and $C_{j+q}\ne C_i$.
  This means that $E_i \cup E_j$ forms a bichromatic matching in $\mathbf{G}$.
  Consequently, 
    $$H(n,p,q) \ge |E_{i}| + |E_{j}| \ge H^s(n,p,q).$$
  This concludes the proof of (1) and (2) in the current subcase.
  
  For the proof of (3), observe that $p\ge 5$ and the choice of $A$ as the more frequent color yields
  that $c_{j+3q}=A$, so $C_{j+2q}$ is distinct from $C_i$.
  Hence, we can extend the matching $E_i \cup E_j$ with an edge $(x,y)$
  where $x=(j-q)\bmod p \in C_{j-q}$ and $y = x+3q \in C_{j+2q}$.
  This edge exists because $n \ge 4q > p+3q > y$. It forms a matching with $E_i\cup E_j$ because
  no edge in $E_i\cup E_j$ is incident to $C_{j+2q}$, while the only edges incident to $C_{j-q}$
  could be the edges in $E_i$ provided that $C_{i}=C_{j-2q}$. However, $x < q$, so $x$
  is not an endpoint of any edge in $E_{j-2q}$.
  This concludes the proof of (3) in the current subcase.
 
  Let us proceed to the second subcase.
  Let $C_i,C_j$ be two distinct $p$-classes such that $c_i=c_j = A$ and $c_{i+q}=c_{j+q} = B$.
  It is easy to see that $E_i \cup E_j$ forms a bichromatic matching in $\mathbf{G}$, so $H(n,p,q) \ge |E_{i}| + |E_{j}| \ge H^s(n,p,q)$.
  Thus, it remains to prove (3) in this subcase.
  
  If $p\ge 5$, then there is a third $p$-class $C_k$ with color $A$.
  Moreover, we may choose $k$ so that $c_{k-q}=B$ and $c_{k}=A$.
  We extend $E_i\cup E_j$ with an edge $(x,y)$ where $x=(k-q)\bmod p \in C_{k-q}$ and $y = x+q \in C_k$.
  This edge exists because $n > q+p > y$. It forms a matching with $E_i\cup E_j$ because
  no edge in $E_i \cup E_j$ is incident to $C_{k}$, while the only edges incident to $C_{k-q}$
  might be the edges in $E_i$ or $E_j$ provided that $i=k-2q$ or $j=k-2q$. However, $x<q$, so $x$
  is not an endpoint of any edge in $E_{k-2q}$.
  This concludes the proof of (3) in the current subcase, and the proof of the entire lemma.   
 \end{proof}

\begin{lemma}\label{lem:hdb}
If $n \ge q$, then
$$ \floor{\tfrac{n}{q}}+\floor{\tfrac{n}{p}} -2\ceil{\tfrac{n}{pq}} \le H^d(n,p,q) \le \floor{\tfrac{n}{q}}+\floor{\tfrac{n-q}{p}} + \floor{\tfrac{q-1}{p}} - 1.$$
\end{lemma}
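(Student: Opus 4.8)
The plan is to invoke the exact formula $H^d(n,p,q)=\min_{l=0}^{n-1}\bigl(G(l,p,q)+G(n-l-1,p,q)\bigr)$ from \cref{lem:simple_hd}(a) and to bound the inner sum in both directions. Setting $m=n-l-1$, the first step I would take is a combinatorial reformulation: for $r\in\{p,q,pq\}$ let $n_r(l)$ denote the size of the residue class of $l$ modulo $r$ inside $\{0,\ldots,n-1\}$. Since $\floor{l/r}$ counts the members of that class lying below $l$ and $\floor{m/r}$ counts those lying above $l$, we get $\floor{l/r}+\floor{m/r}=n_r(l)-1$; summing the three contributions in the definition of $G$, the $\pm1$ offsets cancel and yield the identity $G(l,p,q)+G(m,p,q)=n_p(l)+n_q(l)-2\,n_{pq}(l)$. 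This identity is what makes the estimates tight enough: the naive subadditivity $\floor{a}+\floor{b}\ge\floor{a+b}-1$ loses precision exactly when $p\mid n$ or $q\mid n$, so I would deliberately avoid it.

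For the lower bound I would then use only that any residue class modulo $r$ in a window of $n$ consecutive integers has between $\floor{n/r}$ and $\ceil{n/r}$ elements. Hence $n_p(l)\ge\floor{n/p}$, $n_q(l)\ge\floor{n/q}$, and $n_{pq}(l)\le\ceil{n/(pq)}$ hold simultaneously for \emph{every} $l$, so $G(l,p,q)+G(m,p,q)\ge\floor{n/p}+\floor{n/q}-2\ceil{n/(pq)}$ uniformly in $l$. Passing to the minimum over $l$ gives precisely the claimed lower bound.

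For the upper bound I would exhibit a single convenient value of $l$, namely $l=q-1$ (admissible since $n\ge q$), and compute directly from the definition of $G$. Here $G(q-1,p,q)=\floor{(q-1)/p}$, because $q-1<q\le pq$ annihilates the other two floors, while $G(n-q,p,q)=\floor{(n-q)/p}+\bigl(\floor{n/q}-1\bigr)-2\floor{(n-q)/(pq)}\le\floor{(n-q)/p}+\floor{n/q}-1$ after dropping the nonnegative $pq$-term. Adding the two contributions produces $\floor{n/q}+\floor{(n-q)/p}+\floor{(q-1)/p}-1$, exactly the stated upper bound; since $H^d$ is a minimum over $l$, this particular $l$ furnishes an upper estimate.

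The only genuinely delicate point is the first step: securing the \emph{exact} identity $\floor{l/r}+\floor{m/r}=n_r(l)-1$ rather than a one-sided floor inequality, so that the three offsets cancel cleanly and the lower bound emerges with no slack. Everything after that is routine: the upper bound follows at once once one guesses to place the special position at $l=q-1$ (equivalently, to split the word just before the first full $q$-block), and checking the boundary cases $n=q$ and small $n$ is a quick verification rather than an obstacle.
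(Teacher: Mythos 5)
Your proof is correct and follows essentially the same route as the paper's: both rest on \cref{lem:simple_hd}(a), bound $G(l,p,q)+G(n-l-1,p,q)$ from below uniformly in $l$, and obtain the upper bound by evaluating at the single split point $l=q-1$ (where $G(q-1,p,q)=\floor{\tfrac{q-1}{p}}$). The only divergence is in how the uniform lower bound is justified---you use the exact counting identity $\floor{\tfrac{l}{r}}+\floor{\tfrac{n-l-1}{r}}=n_r(l)-1$ together with $\floor{\tfrac{n}{r}}\le n_r(l)\le\ceil{\tfrac{n}{r}}$, whereas the paper rewrites the floors as ceilings (via $\floor{\tfrac{x}{p}}=\ceil{\tfrac{x-p+1}{p}}$) and invokes superadditivity of the ceiling---but both devices sidestep the lossy naive estimate in exactly the way you describe and yield the identical intermediate inequalities, hence the same final bound.
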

\begin{proof}
Recall that $H^d(n,p,q)=\min_{l=0}^{n-1} \left(G(l,p,q)+G(n-l-1,p,q)\right)$ due to \cref{lem:simple_hd}.
The first part of the claim holds because for $0\le l < n$ we have:
\begin{multline*}G(l,p,q)+G(n-l-1,p,q)=\floor{\tfrac{l}{p}}+\floor{\tfrac{l}{q}}-2\floor{\tfrac{l}{pq}}+\floor{\tfrac{n-l-1}{p}}+\floor{\tfrac{n-l-1}{q}}-2\floor{\tfrac{n-l-1}{pq}}=\\
\ceil{\tfrac{l-p+1}{p}}+\ceil{\tfrac{l-q+1}{q}}-2\floor{\tfrac{l}{pq}}+\ceil{\tfrac{n-l-p}{p}}+\floor{\tfrac{n-l-q}{q}}-2\floor{\tfrac{n-l-1}{pq}}\ge \\
\ceil{\tfrac{n-2p+1}{p}}+\ceil{\tfrac{n-2q+1}{q}}-2\floor{\tfrac{n-1}{pq}}=
\floor{\tfrac{n}{p}}-1+\floor{\tfrac{n}{q}}-1-2\ceil{\tfrac{n}{pq}}+2=\floor{\tfrac{n}{q}}+\floor{\tfrac{n}{p}} -2\ceil{\tfrac{n}{pq}}.\end{multline*}
As for the second part, due to $n\ge q$ we have:
\begin{multline*}
H^d(n,p,q)\ge G(q-1,p,q)+G(n-q,p,q)=
\floor{\tfrac{q-1}{p}}+\floor{\tfrac{n-q}{p}}+\floor{\tfrac{n-q}{q}}-2\floor{\tfrac{n-q}{pq}}\le \\
\floor{\tfrac{q-1}{p}}+\floor{\tfrac{n-q}{p}}+\floor{\tfrac{n}{q}}-1,
\end{multline*}
which completes he proof.
\end{proof}

\begin{theorem}\label{thm:wr}
Let $p$ and $q$ be relatively prime integers such that $2<p<q$.
For each integer $n\ge p+q-2$, we have
$$
H(n,p,q)=
\begin{cases}
H^s(n,p,q) & \text{if }n \le  q +p\lceil{\frac{q}{p}}\rceil-1 \mbox{ or } 3q \le n \le q+3p-1, \\
H^d(n,p,q) & \text{otherwise. }
\end{cases}
$$
Moreover, for each integer $h \ge 0$:
$$
L(h,p,q) =
\begin{cases}
L^s(h,p,q) & \text{if } \frac{q}{p} > \ceil{\frac{h}{2}}\text{ or } (h=4 \mbox{ and }\frac{q}{p} < \frac{3}{2})\\
L^d(h,p,q) & \text{otherwise.}
\end{cases}
$$
\end{theorem}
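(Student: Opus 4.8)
\textbf{The plan is to} prove the characterization of $H(n,p,q)$ first, and then derive the formula for $L$ by taking generalized inverses. The two auxiliary functions $H^s$ and $H^d$ are already known upper bounds for $H$ (by \cref{lem:hs,obs:Hdlb}), so the whole content of the theorem is a matching lower bound on $H$ together with a determination of which of the two upper bounds is actually attained in each range of $n$. The main tool is \cref{lem:main}, which supplies exactly the lower bounds $H \ge H^s$ (for $n<2q$), $H \ge \min(H^d,H^s)$ (for $n\ge 2q$, $p\ge 3$), and $H \ge \min(H^d, H^s+1)$ (for $n\ge 4q$, $p\ge 5$). Combined with the upper bounds, these pin $H$ to whichever of $H^s$, $H^d$ is smaller, except that in the regime $n\ge 4q$ the improved bound $H\ge \min(H^d, H^s+1)$ forces $H=H^d$ whenever $H^d \le H^s$. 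So the crux is a purely arithmetic comparison of $H^s(n,p,q)$ and $H^d(n,p,q)$ across the relevant intervals of $n$.

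\textbf{First I would} organize the argument by the ranges of $n$ appearing in \cref{lem:main}. For $p+q-2 \le n < 2q$ the partial word is short and \cref{lem:main}(1) gives $H \ge H^s$; since $n \le q + p\lceil q/p\rceil - 1$ holds automatically in a large initial segment (as $q + p\lceil q/p\rceil - 1 \ge 2q-1$), I expect $H = H^s$ here, matching the first branch. For $2q \le n < 4q$ I would invoke \cref{lem:main}(2) and compare $H^s$ with $H^d$ directly using the estimates of \cref{lem:hdb}: the upper bound $H^d \le \lfloor n/q\rfloor + \lfloor (n-q)/p\rfloor + \lfloor (q-1)/p\rfloor - 1$ and the lower bound $H^d \ge \lfloor n/p\rfloor + \lfloor n/q\rfloor - 2\lceil n/(pq)\rceil$ are tailored to decide the sign of $H^d - H^s$ on the thresholds $n = q + p\lceil q/p\rceil - 1$, $n=3q$, and $n = q+3p-1$ that appear in the statement. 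These are the boundary values where the ``winning'' function switches, so the calculation reduces to checking the inequalities at and around these breakpoints. For $n \ge 4q$ (requiring $p\ge 5$) I would use \cref{lem:main}(3): here the strict improvement $H^s+1$ in the lower bound means that whenever $H^d \le H^s$ we still get $H = H^d$, and when $H^d > H^s$ one must verify $H^d \ge H^s+1$, i.e.\ that $H^d$ never ties $H^s$ from above in this range, again via \cref{lem:hdb}.

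\textbf{The main obstacle} will be the detailed floor-function bookkeeping in the transition region $2q \le n \le q+3p-1$ and around the special interval $3q \le n \le q+3p-1$, which is nonempty precisely when $q < 3p$, i.e.\ $q/p < 3$; this is the source of the exceptional second disjunct in the first branch and, dually, of the ``$h=4$ and $q/p<3/2$'' exception in the formula for $L$. Here $H^s$ can dip below $H^d$ even though $n$ is fairly large, and one must show the window in which this happens is exactly $[3q, q+3p-1]$. I would handle this by writing $q = p\lfloor q/p\rfloor + r$ with $0 < r < p$ and evaluating both functions explicitly on this interval, using that $\lfloor n/q\rfloor$ is constant ($=3$) there, so that $H^d$ grows linearly in steps governed by $p$ while $H^s$ grows in the paired-floor pattern from its definition. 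Once the equality $H = \min$-with-the-right-tie-breaking is established on all ranges, the formula for $L$ follows mechanically: since $L = \widetilde{H}$ and generalized inverse reverses the comparison, $L(h,p,q)$ equals $L^s$ exactly on the complement of the $H=H^d$ region, which one translates into the stated conditions $q/p > \lceil h/2\rceil$ (the image under inversion of $n \le q + p\lceil q/p\rceil -1$, using \cref{lem:ls}) and the isolated exception at $h=4$, $q/p<3/2$ coming from the small special interval. I would finish by checking these translations against \cref{tab:gld} and \cref{TABELKI2} for $p=5$, $q=7$ as a sanity test.
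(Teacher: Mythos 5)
Your overall skeleton is the same as the paper's: upper bounds from \cref{lem:hs} and \cref{obs:Hdlb}, lower bounds from \cref{lem:main}, an arithmetic comparison of $H^s$ and $H^d$ via \cref{lem:hdb} over ranges of $n$, and then inversion to obtain the formula for $L$. However, there are two genuine gaps, both in the regime $n\ge 4q$. First, your plan covers $n\ge 4q$ only through \cref{lem:main}(3), which requires $p\ge 5$; for $p\in\{3,4\}$ and $n\ge 4q$ you have no applicable lower bound at all. The paper closes this by a separate case $n\ge pq$ (noting that $pq\le 4q\le n$ when $p\le 4$), where \cref{lem:hdb} gives $H^s\ge H^d$ and \cref{lem:main}(2) suffices. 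Second, and more seriously, your treatment of the subcase $H^d>H^s$ (for $n\ge 4q$, $p\ge 5$) is vacuous: you propose to ``verify $H^d\ge H^s+1$'', but for integers this is automatic from $H^d>H^s$ and yields nothing. What is actually needed is the \emph{opposite} inequality, $H^d\le H^s+1$, valid throughout this range, so that $\min(H^d,H^s+1)=H^d$ and \cref{lem:main}(3) gives $H\ge H^d$, matching the upper bound $H\le H^d$. Note that here you cannot fall back on ``$H$ is pinned to the smaller of the two upper bounds'': the bound $H\le H^s$ from \cref{lem:hs} holds only for $n\le p+q+2p\lfloor q/p\rfloor$, which equals $3p+q<4q$ when $q<2p$, so $H^s$ is not an upper bound in this regime. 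The inequality $H^s\ge H^d-1$ is exactly what the paper's Case 7 extracts from \cref{lem:hdb}.

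Two smaller points. Your claim that the interval $[3q,\,q+3p-1]$ is nonempty ``precisely when $q<3p$, i.e.\ $q/p<3$'' is wrong: nonemptiness requires $3q\le q+3p-1$, i.e.\ $2q\le 3p-1$, which for integers is equivalent to $q/p<3/2$ (consistent with the $h=4$, $q/p<3/2$ exception you correctly tie it to). Also, the final inversion step is not quite ``mechanical'': to translate the $n$-ranges into $h$-ranges one must evaluate $H$ at the breakpoints --- for instance that $H^s=H^d=2\lceil q/p\rceil-1$ at $n=q+p\lceil q/p\rceil-1$, that $H^s(q+3p-1,p,q)=H^d(q+3p-1,p,q)=5$, and that $H(3q-1,p,q)\le 4\le H^s(3q,p,q)$ --- since these boundary values are what isolate exactly $h=4$ (rather than $h=3$ or $h=5$) as the exceptional case when $q/p<3/2$. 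The paper devotes a careful paragraph to precisely these evaluations, and your write-up should too.
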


\begin{proof}
First, we prove the claim concerning $H$ by analyzing several cases.
\subsubsection*{Case 0.\ $pq \le n$.}
By \cref{obs:Hdlb}, we have $H(n,p,q)\le H^d(n,p,q)$.
Moreover, using \cref{lem:hdb}, we obtain:
\begin{multline*}
H^s(n,p,q) = \floor{\tfrac{n-q}{p}}+\floor{\tfrac{n-q+1}{p}} \ge \floor{\tfrac{n-q}{p}}+\floor{\tfrac{n-pq}{p}}+\floor{\tfrac{q-1}{p}}+\floor{\tfrac{q(p-2)+2}{p}}\ge\\
\ge \floor{\tfrac{n-q}{p}}+\floor{\tfrac{n}{q}}-p+\floor{\tfrac{q-1}{p}}+\floor{\tfrac{(p+1)(p-2)+2}{p}}=\floor{\tfrac{n-q}{p}}+\floor{\tfrac{n}{q}}+\floor{\tfrac{q-1}{p}}+\floor{\tfrac{p^2-p-p^2}{p}}=\\
=\floor{\tfrac{n-q}{p}}+\floor{\tfrac{n}{q}}+\floor{\tfrac{q-1}{p}}-1 \ge H^d(n,p,q).
\end{multline*}
Finally, \cref{lem:main}(2) yields that $H(n,p,q)\ge \min(H^d(n,p,q), H^s(n,p,q)) = H^d(n,p,q)$,
which completes the proof.

Henceforth we assume that $n<pq$.

\subsubsection*{Case 1. $p+q-1 \le n < 2q$.}
We get $H(n,p,q)=H^s(n,p,q)$ directly from \cref{lem:main}(1) and \cref{lem:hs}.

\subsubsection*{Case 2. $2q \le n \le q+\ceil{\frac{q}{p}}p-1$.}
Note that $n \le q+\ceil{\frac{q}{p}}p-1 = p+q+\floor{\frac{q}{p}}p-1 < p+q+2p\floor{\frac{q}{p}}$,
so $H(n,p,q)\le H^s(n,p,q)$ due to \cref{lem:hs}.
Moreover, using \cref{lem:hdb}, we obtain:
\begin{multline*}
H^s(n,p,q) = \floor{\tfrac{n-q}{p}}+\floor{\tfrac{n-q+1}{p}}\le \floor{\tfrac{\ceil{\frac{q}{p}}p-1}{p}}+ \floor{\tfrac{n-q+1}{p}}=\ceil{\tfrac{q}{p}}-1+ \floor{\tfrac{n-q+1}{p}} =\floor{\tfrac{q-1}{p}}+\floor{\tfrac{n-q+1}{p}} \le \\
\floor{\tfrac{n}{p}}\le
 \floor{\tfrac{n}{p}}+\floor{\tfrac{n}{q}}-2 \le H^d(n,p,q).
\end{multline*}
Finally, \cref{lem:main}(2) yields that $H(n,p,q)\ge \min(H^d(n,p,q), H^s(n,p,q)) = H^s(n,p,q)$,
which completes the proof.

\subsubsection*{Case 3. $q+\ceil{\frac{q}{p}}p-1\le n < 3q$.}
By \cref{obs:Hdlb}, we have $H(n,p,q)\le H^d(n,p,q)$.
Moreover, using \cref{lem:hdb}, we obtain:
\begin{multline*}
H^s(n,p,q) = \floor{\tfrac{n-q}{p}}+\floor{\tfrac{n-q+1}{p}} \ge \floor{\tfrac{n-q}{p}}+\floor{\tfrac{\ceil{\tfrac{q}{p}}p}{p}}= \floor{\tfrac{n-q}{p}}+\ceil{\tfrac{q}{p}} =\\
\floor{\tfrac{n-q}{p}}+\floor{\tfrac{q-1}{p}}+1+\floor{\tfrac{n}{q}}-2 \ge H^d(n,p,q).
\end{multline*}
Finally, \cref{lem:main}(2) yields that $H(n,p,q)\ge \min(H^d(n,p,q), H^s(n,p,q)) = H^d(n,p,q)$,
which completes the proof.

\subsubsection*{Case 4. $3q\le n \le 3p+q-1$.}
Note that $n \le 3p+q-1 < p+q + 2p \le p+q+2p\floor{\frac{q}{p}}$,
so $H(n,p,q)\le H^s(n,p,q)$ due to \cref{lem:hs}.
Moreover, using \cref{lem:hdb}, we obtain:
\begin{multline*}
H^s(n,p,q) = \floor{\tfrac{n-q}{p}}+\floor{\tfrac{n-q+1}{p}}\le \floor{\tfrac{3p-1}{p}}+\floor{\tfrac{n-q+1}{p}} = 2+\floor{\tfrac{n-q+1}{p}} \le 1+\floor{\tfrac{q-1}{p}}+\floor{\tfrac{n-q+1}{p}}\le \\
\le 1+\floor{\tfrac{n}{p}} = 3 +\floor{\tfrac{n}{p}}-2 \le \floor{\tfrac{n}{q}}+\floor{\tfrac{n}{p}}-2 \le H^d(n,p,q).
\end{multline*}
Finally, \cref{lem:main}(2) yields that $H(n,p,q)\ge \min(H^d(n,p,q), H^s(n,p,q)) = H^s(n,p,q)$,
which completes the proof.

\subsubsection*{Case 5. $\max(3q,3p+q-1)\le n < 4q$ and $p<q < 2p$.}
By \cref{obs:Hdlb}, we have $H(n,p,q)\le H^d(n,p,q)$.
Moreover, using \cref{lem:hdb}, we obtain:
\begin{multline*}
H^s(n,p,q)\!=\! \floor{\tfrac{n-q}{p}}+\floor{\tfrac{n-q+1}{p}} \ge \floor{\tfrac{n-q}{p}}+\floor{\tfrac{3p}{p}}= \floor{\tfrac{n-q}{p}}+3  = \floor{\tfrac{n-q}{p}}+\floor{\tfrac{q-1}{p}}+\floor{\tfrac{n}{q}}-1 \ge H^d(n,p,q).
\end{multline*}
Finally, \cref{lem:main}(2) yields that $H(n,p,q)\ge \min(H^d(n,p,q), H^s(n,p,q)) = H^d(n,p,q)$,
which completes the proof.

\subsubsection*{Case 6. $3q \le n$ and $q > 2p$.}
By \cref{obs:Hdlb}, we have $H(n,p,q)\le H^d(n,p,q)$.
Moreover, using \cref{lem:hdb}, we obtain:
\begin{multline*}
H^s(n,p,q) \ge 2\floor{\tfrac{n-q}{p}} \ge \floor{\tfrac{n-q}{p}}+\floor{\tfrac{n-3q}{p}}+2\floor{\tfrac{q}{p}}
\ge \floor{\tfrac{n-q}{p}}+\floor{\tfrac{n-3q}{q}}+2\floor{\tfrac{q}{p}} \ge\\\ge \floor{\tfrac{n-q}{p}}+\floor{\tfrac{n}{q}}-3+\floor{\tfrac{q-1}{p}}+2= \floor{\tfrac{n-q}{p}}+\floor{\tfrac{q-1}{p}}+\floor{\tfrac{n}{q}}-1 \ge H^d(n,p,q).
\end{multline*}
Finally, \cref{lem:main}(2) yields that $H(n,p,q)\ge \min(H^d(n,p,q), H^s(n,p,q)) = H^d(n,p,q)$,
which completes the proof.

\subsubsection*{Case 7.\ $4q \le n$ and $p \ge 5$ (and $q < 2p$).}
By \cref{obs:Hdlb}, we have $H(n,p,q)\le H^d(n,p,q)$.
Moreover, using \cref{lem:hdb}, we obtain:
\begin{multline*}
H^s(n,p,q) =\floor{\tfrac{n-q}{p}}+\floor{\tfrac{n+1-q}{p}} \ge \floor{\tfrac{n-q}{p}}+\floor{\tfrac{n-2q}{p}}+\floor{\tfrac{q-1}{p}}\ge \floor{\tfrac{n-q}{p}}+\floor{\tfrac{n-2q}{q}}+\floor{\tfrac{q-1}{p}}=\\
=\floor{\tfrac{n-q}{p}}+\floor{\tfrac{n}{q}}-2+\floor{\tfrac{q-1}{p}} \ge H^d(n,p,q)-1.
\end{multline*}
Finally, \cref{lem:main}(3) yields that $H(n,p,q)\ge \min(H^d(n,p,q), H^s(n,p,q)+1) = H^d(n,p,q)$,
which completes the proof.

\medskip
The only remaining case, that $4q \le n$ and $p<5$, is a subcase of Case 0.
This completes the proof of the formula for $H(n,p,q)$.

\newpage

The characterization of $L(h,p,q)$ is relatively easy to derive from that of $H(n,p,q)$.
 Recall that $L$, $L^s$, and $L^d$ are generalized inverses of $H$, $H^s$, and $H^d$, respectively.
  Note that Cases 1.\ and 2.\ yield $H(n,p,q)=H^s(n,p,q)$ for $n \le q+p\lceil{\tfrac{q}{p}\rceil}-1$ while Case 3.\ additionally implies
  $$H^d(q+p\lceil{\tfrac{q}{p}\rceil}-1,p,q)=H^s(q+p\lceil{\tfrac{q}{p}\rceil}-1,p,q)=2\lceil{\tfrac{q}{p}\rceil}-1.$$
  Consequently, $L(h,p,q)=L^s(h,p,q)$ if $h < 2\lceil{\frac{q}{p}}\rceil-1$, i.e., if $ \frac{q}{p} > \ceil{\frac{h}{2}}$.
  Moreover, if $\frac32 < \frac{q}{p}$, then $3q>q+3p-1$,
  and therefore $H(n,p,q)=H^d(n,p,q)$ for $n \ge q+p\lceil{\tfrac{q}{p}\rceil}-1$ due to Cases 0.\ and 5.--7.
  Hence, if $h \ge 2\lceil{\tfrac{q}{p}}\rceil-1$, i.e., $\frac32 < \frac{q}{p}<\ceil{\frac{h}{2}}$, then $L(h,p,q)=L^d(h,p,q)$.
  
  Now, it suffices to consider the case of $\frac{q}{p} < \frac32 < \ceil{\frac{h}{2}}$.
  Then, by Cases 5., 7., and 0., $H(n,p,q)=H^d(n,p,q)$ for $n\ge q+3p-1$.
  Case 4.\ additionally yields
   $H^d(q+3p-1,p,q)=H^s(q+3p-1,p,q)=5$, so $L(h,p,q)=L^d(h,p,q)$ if $h\ge 5$.
  Moreover,  by Case 4., $H(n,p,q)=H^d(n,p,q)$ for $q+p-1\le n < 3q$, so $L(3,p,q) = L^d(3,p,q)$ due to $H^s(3q,p,q) \ge 4$.
  Finally, we note that Case 3.\ yields $H(n,p,q)=H^s(n,p,q)$ for $3q \le n \le 3p+q-1$,
  so $L(4,p,q)=L^s(4,p,q)$ due to $H(3q-1,p,q)\le H^s(3q-1,p,q)\le 4$.
 \end{proof}

The remaining cases have already been well understood:
\begin{fact}[\cite{ShurGamzova04,DBLP:journals/jda/Blanchet-SadriMS12}]\label{fct:gcd}
If $p,q>1$ are integers such that $\gcd(p,q)\notin \{p,q\}$,
then
$$L(h,p,q) = \gcd(p,q)\cdot L\left(h,\tfrac{p}{\gcd(p,q)},\tfrac{q}{\gcd(p,q)}\right).$$
\end{fact}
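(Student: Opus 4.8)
The plan is to prove the identity by a divisor-stripping reduction based on interleaving. Write $d=\gcd(p,q)$, $p=dp'$, and $q=dq'$, so that $\gcd(p',q')=1$; the hypothesis $\gcd(p,q)\notin\{p,q\}$ guarantees $p',q'>1$. For a partial word $X$ of length $n$ and a residue $r\in\{0,\dots,d-1\}$, let $X_r$ be the subword formed by the positions $i\equiv r\pmod d$, read in increasing order, so that the $j$-th symbol of $X_r$ is $X[r+jd]$. The key combinatorial observation is the period correspondence: since $(r+jd)\bmod p = r + d\,(j\bmod p')$ and likewise $(r+jd)\bmod q = r + d\,(j\bmod q')$, the word $X$ has period $p$ if and only if every $X_r$ has period $p'$, and $X$ has period $q$ if and only if every $X_r$ has period $q'$. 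Moreover, a string period of length $d$ assigns a single solid symbol to each residue class, so $X$ has period $d=\gcd(p,q)$ if and only if every $X_r$ has period $1=\gcd(p',q')$.

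First I would use this correspondence to compute $H$. The number of holes of $X$ is the sum of the numbers of holes of the subwords $X_r$, and $X$ violates \cref{lem:perlem2} (it has periods $p,q$ but not $d$) exactly when every $X_r$ has periods $p',q'$ while at least one $X_r$ lacks period $1$. To minimise the total number of holes it is optimal to turn all but one subword into a hole-free unary word, which trivially has periods $p',q'$ and period $1$, and to let the remaining subword be a minimum-hole $(p',q')$-counterexample of its own length. Since the subword lengths are all equal to $\floor{n/d}$ or $\ceil{n/d}$ and $H(\cdot,p',q')$ is non-decreasing in the length, placing the distinguished counterexample in a shortest subword yields $H(n,p,q)=H(\floor{n/d},p',q')$. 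It then remains to pass to generalised inverses: using $L=\widetilde{H}$ together with monotonicity of $H(\cdot,p',q')$, we have $H(\floor{n/d},p',q')>h$ if and only if $\floor{n/d}\ge L(h,p',q')$, that is, $n\ge d\,L(h,p',q')$. The least such $n$ equals $d\,L(h,p',q')$, which is precisely the claimed formula.

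The main obstacle is the careful verification of the period correspondence, in particular the assertion that $X$ has period $dp'$ exactly when each residue-class subword has period $p'$; this has to be checked directly against the definition of a string period for partial words, including the matching relation $\approx$ at the ends of each class. A secondary point requiring attention is the optimality argument for $H$: one must confirm that no configuration with two or more non-unary subwords can decrease the hole count, and that at the lengths relevant to the generalised inverse the shortest subword is long enough to host a $(p',q')$-counterexample with finitely many holes. This last feasibility holds because, near the threshold, $\floor{n/d}$ lands at $L(h,p',q')$ or just below it, where $H(\cdot,p',q')\le h$ by the definition of $L^{\phantom{s}}=\widetilde{H}$, so the manipulation of generalised inverses remains valid.
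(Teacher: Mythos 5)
The paper never actually proves \cref{fct:gcd}: it is imported as a known result from \cite{ShurGamzova04,DBLP:journals/jda/Blanchet-SadriMS12}, so there is no in-paper argument to compare yours against, and your proposal has to stand on its own. In essence it does. The residue decomposition $X\mapsto(X_0,\ldots,X_{d-1})$ with $d=\gcd(p,q)$, $p=dp'$, $q=dq'$ is the right tool, and your period correspondence is valid for the strong (string-period) notion used in this paper: a string period $P$ of length $dp'$ of $X$ restricts on each residue class to a string period of length $p'$ of $X_r$, and conversely string periods $Q_0,\ldots,Q_{d-1}$ of the subwords interleave into a string period of $X$, via $(r+jd)\bmod dp' = r+d(j\bmod p')$; similarly $X$ has period $d$ iff every $X_r$ has period $1$. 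Holes add up over the subwords, the unary-filler construction gives the upper bound on $H(n,p,q)$, projection onto a violating subword gives the lower bound, and the generalized-inverse step then yields $L(h,p,q)=d\cdot L(h,p',q')$.

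Two points, however, need repair before the argument is airtight. First, the identity $H(n,p,q)=H(\floor{n/d},p',q')$ is false as stated in the range $d<n<2d$: there the left-hand side equals $0$, because the \emph{solid} word $\al^{d}\bl\al^{n-d-1}$ has periods $p$ and $q$ (its length satisfies $n<2d\le\min(p,q)$) but lacks period $d$, whereas $\floor{n/d}=1$ and no word of length $1$ can lack period $1$, so the right-hand side is a minimum over the empty set. Your conclusion survives because $d\cdot L(h,p',q')\ge d\,(p'+q'-1)\ge 4d$, so the threshold lies deep inside the region $n\ge 2d$ where the identity and the equivalence $H(\floor{n/d},p',q')>h \Leftrightarrow \floor{n/d}\ge L(h,p',q')$ are correct; but your final sentence (``the least such $n$ equals $d\,L(h,p',q')$'') silently discards all $n<2d$, and this must be said explicitly (equivalently, one adopts the convention that $H$ is $0$ at lengths admitting no counterexample, which is also what is needed for the paper's own Observation $L=\widetilde{H}$ to hold literally). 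Second, monotonicity of $H(\cdot,p',q')$ in the length, which you invoke both in the lower bound and in the inverse step, is nowhere proved in this paper either — it is precisely what the unproved Observation $L=\widetilde{H}$ presupposes — so your use of it is fair within the paper's framework, but the dependence should be flagged or cited rather than passed over.
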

\begin{fact}[\cite{DBLP:conf/mfcs/ShurK01}]\label{fct:two}
If $q,h$ are integers such that $q>2$, $2\nmid q$, and $h\ge 0$,
then
$$L(h,2,q) = (2p+1)\floor{\tfrac{h}{p}}+h \bmod p.$$
\end{fact}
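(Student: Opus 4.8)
The plan is to treat $p=2$ entirely separately, as Theorem \ref{thm:wr} assumes $p>2$, and to reuse the vertex-separator viewpoint of \cref{sec:char}. Since $L=\widetilde H$, it suffices to pin down $H(n,2,q)$, the minimum size of a vertex separator of the $(n,2,q)$-graph $\mathbf G$, and then invert. The decisive simplification for $p=2$ is that there are only two $p$-classes, the even and the odd positions, each of which is a clique. Consequently any two-coloring satisfying \cref{fct:color} merely assigns one color to ``even'' and one to ``odd''; in particular the branch ``at least two $p$-classes per color'' in the proof of \cref{lem:main} cannot arise, and the machinery of that lemma (whose parts assume $p\ge3$ and $p\ge5$) must be replaced by a direct argument.

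First I would show that a minimum separator splits $\mathbf G$ along parity. Deleting all vertices of one parity leaves the other parity as a single clique, which stays connected, so to disconnect $\mathbf G$ one must keep a nonempty set $E'$ of even and a nonempty set $O'$ of odd positions with no surviving $q$-edge between them. As $q$ is odd, every $q$-edge joins an even and an odd position, so a separator $U$ is exactly a vertex cover of the $q$-edges, subject to the constraint that neither parity is covered entirely. The $q$-edges decompose into $q$ vertex-disjoint paths, one per residue class modulo $q$ (the $q$-chains), whose vertex numbers step by $q$ and hence alternate in parity.

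Next I would compute the optimum. A path on $t$ vertices has minimum vertex cover $\floor{t/2}$, realized by deleting the minority parity of the chain, so $H(n,2,q)=\sum_{r=0}^{q-1}\floor{t_r/2}$, where $t_r$ counts positions congruent to $r$ modulo $q$. Writing $n=aq+b$ with $0\le b<q$ makes this sum explicit: $b$ chains have length $a+1$ and $q-b$ have length $a$, so $H(n,2,q)=qm$ on each even block $a=2m$ and grows linearly from $qm$ to $q(m+1)-1$ on each odd block $a=2m+1$. Inverting this step function gives $L(h,2,q)=\min\{n:H(n,2,q)>h\}$; the two cases $q\mid(h+1)$ and $q\nmid(h+1)$ collapse into one piecewise-linear-in-$h$ expression, which one then rewrites into the claimed closed form. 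The same computation can be phrased through \cref{lem:simple_hd}: the increment positions of $G(\cdot,2,q)$ are the integers that are even exclusive-or a multiple of $q$, exactly $q$ of them per block of length $2q$, which yields an explicit $\widetilde G(k,2,q)$ and hence $L^{d}(h,2,q)=\max_{k}\bigl(\widetilde G(k,2,q)+\widetilde G(h-k,2,q)\bigr)$, the counterpart for $p=2$ of the split achieved in \cref{lem:blanchet}.

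I expect the main obstacle to be boundary bookkeeping rather than any conceptual difficulty, and it has two parts. First, the nonemptiness constraint (both $E'$ and $O'$ survive, equivalently the counterexample is non-unary) must be checked not to raise the cover over the relevant range of $n$; this uses $q\ge3$, which guarantees that the majority parities of distinct odd-length chains realize both an $\al$ and a $\bl$. Second, because parts (2) and (3) of \cref{lem:main} are unavailable for $p=2$, the lower bound $H(n,2,q)\ge\sum_r\floor{t_r/2}$ must be argued directly from the two-class structure above, so that it matches the upper bound from the explicit construction. Once these points are settled, matching the resulting step function for $H$ against its inverse delivers the stated formula.
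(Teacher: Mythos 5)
The paper gives no proof of \cref{fct:two} at all --- it is quoted from Shur and Konovalova with a citation --- so your attempt must stand on its own, and its overall plan (minimum vertex separators of the $(n,2,q)$-graph, the two parity cliques against the residue classes modulo $q$, then inverting a step function) is the right one. However, two of your structural claims are false as written, and they blur the weak/strong period distinction. It is not true that every $q$-edge joins an even and an odd position: an edge of difference $2q$ joins equal parities. More importantly, the $q$-edges do not decompose into vertex-disjoint \emph{paths}: each $q$-class is a \emph{clique}, as the paper states explicitly, because a strong period $q$ forces all solid symbols in a residue class mod $q$ to be equal, not merely consecutive ones. Consequently a separator must cover all cross-parity pairs inside each $q$-class, and these form a complete bipartite graph $K_{\lceil t_r/2\rceil,\lfloor t_r/2\rfloor}$, not a path; your assertion that ``a separator $U$ is exactly a vertex cover of the $q$-edges'' fails in both directions. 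For instance, with $q=3$ and the class $\{0,3,6,9\}$, the set $\{3,6\}$ covers every consecutive $q$-step (so your path model accepts it) yet leaves the cross-parity edge $(0,9)$ uncovered, so it is not a separator; meanwhile the same-parity $q$-edge $(0,6)$ never needs covering. You are rescued only by two coincidences: the path bound $\floor{t_r/2}$ equals $\min(\lceil t_r/2\rceil,\lfloor t_r/2\rfloor)$, so the lower bound (path edges being a subset of the true constraints) gives the right number, and your specific cover --- deleting the entire minority parity of each class --- does cover the complete bipartite graph. So $H(n,2,q)=\sum_r\floor{t_r/2}$ is correct, but the justification must be rebuilt around the clique structure.

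The decisive gap is at the end. Your step function is right: with $n=aq+b$, one gets $H(n,2,q)=qm$ for $a=2m$ and $H(n,2,q)=qm+b$ for $a=2m+1$, and inverting yields $L(h,2,q)=\left(2\floor{\tfrac{h}{q}}+1\right)q+(h\bmod q)+1=2h+q+1-(h\bmod q)$. This does \emph{not} ``rewrite into the claimed closed form'': the formula in \cref{fct:two} uses a symbol $p$ that is never defined in the statement, and under the only plausible reading ($p$ a typo for $q$) it evaluates to $0$ at $h=0$, contradicting $L(0,2,q)=q+1$ from \cref{lem:perlem}; moreover its jumps ($p+2$ at multiples of $p$) cannot match the correct jumps ($q+1$ at multiples of $q$) for any choice of $p$. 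The statement as printed is garbled, and your final sentence asserts a rewriting that cannot be carried out. A complete proof along your lines should end by deriving $L(h,2,q)=2q\floor{\tfrac{h}{q}}+(h\bmod q)+q+1$ --- which passes the sanity checks $L(0,2,q)=q+1$, $L(1,2,q)=q+2$ (Berstel--Boasson), and $L(h,2,q)=L^s(h,2,q)=h+q+1$ for $h<q$ in agreement with \cref{lem:ls} --- and should flag the discrepancy with the printed formula rather than asserting agreement. As written, your proposal establishes a correct formula but not the literal statement, and does not notice the difference.
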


The results above lead to our first algorithm for computing $L(h,p,q)$.

\begin{corollary}\label{cor:comph}
Given integers $p,q>1$ such that $\gcd(p,q)\notin \{p,q\}$ and an integer $h\ge 0$,
the value $L(h,p,q)$ can be computed in $\Oh(h+\log p +\log q)$ time.
\end{corollary}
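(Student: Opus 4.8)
The plan is to reduce the general instance to a coprime one and then dispatch to the closed-form characterization of \cref{thm:wr}, performing the only nontrivial computation---that of $L^d$---by a linear-time merge. First I would compute $g=\gcd(p,q)$ with the Euclidean algorithm in $\Oh(\log p + \log q)$ arithmetic operations and, invoking \cref{fct:gcd}, replace $(p,q)$ with the coprime pair $(p',q')=(p/g,q/g)$, remembering to scale the final answer back by $g$. Since $\gcd(p,q)\notin\{p,q\}$ forces both $p/g>1$ and $q/g>1$, we obtain coprime integers $p',q'>1$; using the symmetry of $L$ in its last two arguments we reorder them so that $p'<q'$. The degenerate case $p'=2$ is then settled in constant time by the explicit formula of \cref{fct:two}.

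For the main case $2<p'<q'$ I would apply \cref{thm:wr}. Deciding whether $L(h,p',q')$ equals $L^s(h,p',q')$ or $L^d(h,p',q')$ reduces to the integer tests $q'>p'\ceil{\frac{h}{2}}$ and the exceptional condition ``$h=4$ and $2q'<3p'$'', all of which are $\Oh(1)$ comparisons. In the $L^s$ branch, \cref{lem:ls} yields $L^s(h,p',q')=\ceil{\frac{h+1}{2}}p'+q'-(h+1)\bmod 2$ in constant time, so this branch is immediate.

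The only step that costs more than $\Oh(\log p + \log q)$ is the $L^d$ branch. By \cref{lem:simple_hd}(b) we have $L^d(h,p',q')=\max_{k=0}^{h}\bigl(\widetilde{G}(k,p',q')+\widetilde{G}(h-k,p',q')\bigr)$, so it suffices to tabulate $\widetilde{G}(0),\ldots,\widetilde{G}(h)$ and then scan over $k$. The key observation is that $\widetilde{G}(k,p',q')$ is exactly the $(k+1)$-st positive integer divisible by precisely one of $p',q'$: as $n$ grows, $G(n,p',q')$ increases by $1$ at every multiple of $p'$ or $q'$ that is not a multiple of $p'q'$, and stays constant elsewhere, so $\widetilde{G}(k)$ is the position of the $(k+1)$-st such jump. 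I would therefore run a two-pointer merge of the progressions $(p',2p',\ldots)$ and $(q',2q',\ldots)$, emitting the smaller frontier value and advancing the corresponding pointer, but skipping (advancing both pointers without emitting) whenever the two frontiers coincide at a common multiple of $p'q'$. Halting after $h+1$ emissions produces all of $\widetilde{G}(0),\ldots,\widetilde{G}(h)$. Since $\widetilde{G}(h)=\Oh(hp')$, the number of skipped common multiples is at most $\widetilde{G}(h)/(p'q')=\Oh(h/q')$, so the merge performs $\Oh(h)$ pointer advances, and the final maximization also runs in $\Oh(h)$ time. Summing the $\Oh(\log p+\log q)$ reduction with the $\Oh(h)$ evaluation gives the claimed bound.

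I expect the main thing to get right to be the correctness and cost of the merge: one must verify that consecutive values of $\widetilde{G}$ correspond to consecutive single-divisibility multiples (equivalently, that $G$ increases by at most $1$ per step), that the multiples of $p'q'$ are excluded rather than double-counted, and that exactly $h+1$ emissions suffice while the pointer advances remain $\Oh(h)$. The remaining case-analysis and the two $\Oh(1)$ formulae are routine given the earlier lemmas.
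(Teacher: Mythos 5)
Your proposal is correct and takes essentially the same route as the paper: reduce via \cref{fct:gcd} to a coprime pair, handle $p'=2$ by \cref{fct:two}, and otherwise dispatch through \cref{thm:wr} with \cref{lem:ls} for $L^s$ and \cref{lem:simple_hd}(b) for $L^d$, generating $\widetilde{G}(0),\ldots,\widetilde{G}(h)$ as the sorted merge of multiples of $p'$ and $q'$ excluding multiples of $p'q'$. Your two-pointer description simply spells out in more detail the $\Oh(h)$ generation step that the paper states in one sentence.
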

\begin{proof}
First, we apply \cref{fct:gcd} to reduce the computation to $L(h,p',q')$ such that $\gcd(p',q')=1$
and, without loss of generality, $1<p'<q'$.
This takes $\Oh(\log p + \log q)$ time.
If $p'=2$, we use \cref{fct:two},
while for $p'>2$ we rely on the characterization of \cref{thm:wr},
using \cref{lem:ls,lem:simple_hd} for computing $L^s$ and $L^d$, respectively.
The values $\widetilde{G}(h',p',q')$ form a sorted sequence of multiples of $p'$ and $q'$, but not of $p'q'$.
Hence, it takes $\Oh(h)$ time to generate them for $0\le h'\le h$. The overall running time is $\Oh(h+\log p + \log q)$.
\end{proof}

\section{Faster Algorithm for Evaluating $L$}\label{sec:fast}
A more efficient algorithm for evaluating $L$ relies on the theory of continued fractions; we refer to \cite{Khinchin} and \cite{Farey} for a self-contained yet compact introduction.
A finite continued fraction is a sequence $[\gamma_0;\gamma_1,\ldots,\gamma_m]$,
where $\gamma_0,m\in \mathbb{Z}_{\ge 0}$ and $\gamma_i \in \mathbb{Z}_{\ge 1}$ for $1\le i \le m$.
We associate it with the following rational number:
$$[\gamma_0;\gamma_1,\ldots,\gamma_m]=\gamma_0 + \tfrac{1}{\gamma_1+\tfrac{1}{\ddots+\tfrac{1}{\gamma_m}}}.$$
Depending on the parity of $m$, we distinguish odd and even continued fractions.
Often, an improper continued fraction $[;]=\frac{1}{0}$ is also introduced and assumed to be odd.
Each positive rational number has exactly two representations as a continued fraction, one as an even continued fraction,
and one as an odd continued fraction.
For example, $\frac{5}{7}=[0;1,2,2]=[0;1,2,1,1]$.

Consider a continued fraction $[\gamma_0;\gamma_1,\ldots,\gamma_m]$.
Its \emph{convergents} are continued fractions of the form $[\gamma_0;\gamma_1,\ldots,\gamma_{m'}]$ for $0\le m' < m$,
and $[;]=\frac{1}{0}$.
The \emph{semiconvergents} also include continued fractions of the form $[\gamma_0;\gamma_1,\ldots,\gamma_{m'-1},\gamma'_{m'}]$, 
where $0\le m'\le m$ and $0<\gamma'_{m'}< \gamma_{m'}$.
The two continued fractions representing a positive rational number have the same semiconvergents.
 
\begin{example}
The semiconvergents of $[0;1,2,2]=\frac{5}{7}=[0;1,2,1,1]$ are $[;]=\frac{1}{0}$, $[0;]=\frac{0}{1}$, $[0;1]=\frac{1}{1}$,
 $[0;1,1]=\frac{1}{2}$, $[0;1,2]=\frac{2}{3}$, and
 $[0;1,2,1]=\frac{3}{4}$.
\end{example}

Semiconvergents of $\frac{p}{q}$ can be generated using the (slow) \emph{continued fraction algorithm},
which produces a sequence of \emph{Farey pairs} $(\frac{a}{b},\frac{c}{d})$ such that $\frac{a}{b}<\frac{p}{q}<\frac{c}{d}$.

\begin{center}
\begin{minipage}{0.59\textwidth}\begin{algorithm}[H]
$(\frac{a}{b},\frac{c}{d}):= (\frac{0}{1},\frac{1}{0})$\;
\While{\KwSty{true}}{
Report a Farey pair $(\frac{a}{b},\frac{c}{d})$\;
\lIf{$\frac{a+c}{b+d}<\frac{p}{q}$}{$\frac{a}{b}:= \frac{a+c}{b+d}$}
\lElseIf{$\frac{a+c}{b+d}=\frac{p}{q}$}{\KwSty{break}}
\lElse{$\frac{c}{d}:= \frac{a+c}{b+d}$}
}
\caption{Farey process for a rational number $\frac{p}{q}>0$}
\end{algorithm}
\end{minipage}
\end{center}
\begin{example}
For $\frac{p}{q}=\frac{5}{7}$, the Farey pairs are
$(\frac{0}{1},\frac{1}{0})\leadsto (\frac{0}{1},\frac{1}{1})\leadsto (\frac{1}{2},\frac{1}{1})\leadsto (\frac{2}{3},\frac{1}{1})\leadsto (\frac{2}{3},\frac{3}{4})$.
The process terminates at $\frac{2+3}{3+4}=\frac{5}{7}$.
\end{example}

Consider the set $\Fr = \{\frac{a}{b} : a,b\in \mathbb{Z}_{\ge 0},\, \gcd(a,b)=1\}$ of reduced fractions (including $\frac{1}{0}$).
We denote $\Fr_k = \{\frac{a}{b}\in \Fr :  a+b \le k\}$ and, for each $x\in \mathbb{R}_+$:
$$\pred_k(x)=\max\{a \in \Fr_k\,:\,a \le x\}\quad\text{and}\quad\succ_k(x)=\min\{a \in \Fr_k\,:\,a \ge x\}.$$
We say that $\frac{a}{b}<x$ is a \emph{best left approximation} of $x$ if $\frac{a}{b}=\pred_k(x)$
for some $k\in \mathbb{Z}_{\ge 0}$. Similarly, $\frac{c}{d}>x$ is
a \emph{best right approximation} of $x$ if $\frac{c}{d}=\succ_k(x)$.
\begin{example}
We have 
$\Fr_7= (\tfrac{0}{1},\tfrac{1}{6},\tfrac{1}{5},\tfrac{1}{4},\tfrac{1}{3},\tfrac{2}{5},\tfrac{1}{2},\tfrac{2}{3},\tfrac{3}{4},\tfrac{1}{1},\frac{4}{3},\frac{3}{2},\tfrac{2}{1},\tfrac{5}{2},\tfrac{3}{1},\tfrac{4}{1},\tfrac{5}{1},\tfrac{6}{1},\tfrac{1}{0}).$
Here, $\pred_7(\tfrac{5}{7})\,=\, \frac{2}{3}$ and $\succ_7(\tfrac{5}{7}) = \tfrac{3}{4}$
are best approximations of $\frac{5}{7}$.
\end{example}
We heavily rely  on the following extensive characterization of semiconvergents:
\begin{fact}[{\cite{Khinchin}}, {\cite[Theorem 3.3]{Ravenstein1988}}, {\cite[Theorem 2]{Farey}}]\label{fct:farey}
Let $\frac{p}{q}\in \Fr\setminus\{\frac{1}{0},\frac{0}{1}\}$. The following conditions are equivalent
for reduced fractions $\frac{a}{b}<\frac{p}{q}$:
\begin{enumerate}[label={\rm(\alph*)}]
  \item the Farey process for $\frac{p}{q}$ generates a pair $(\frac{a}{b},\frac{c}{d})$ for some $\frac{c}{d}\in \Fr$,
  \item $\frac{a}{b}$ is an even semiconvergent of $\frac{p}{q}$,
  \item $\frac{a}{b}$ is a best left approximation of $\frac{p}{q}$,
  \item $b=\floor{\frac{aq}{p}}+1$ and $aq \bmod p > iq \bmod p$ for $0\le i < a$.
\end{enumerate}
By symmetry, the following conditions are equivalent
for reduced fractions $\frac{c}{d}>\frac{p}{q}$:
\begin{enumerate}[label={\rm(\alph*)}]
  \item the Farey process for $\frac{p}{q}$ generates a pair $(\frac{a}{b},\frac{c}{d})$ for some $\frac{a}{b}\in \Fr$,
  \item $\frac{c}{d}$ is an odd semiconvergent of $\frac{p}{q}$,
  \item $\frac{c}{d}$ is a best right approximation of $\frac{p}{q}$,
  \item $c=\floor{\frac{dp}{q}}+1$ and $dp \bmod q > ip \bmod q$ for $0\le i<d$.
\end{enumerate}
\end{fact}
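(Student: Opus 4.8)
The plan is to prove the four left-hand conditions equivalent for \emph{every} $\tfrac pq\in\Fr\setminus\{\tfrac10,\tfrac01\}$ and then read off the right-hand conditions by applying the left-hand equivalences to $\tfrac qp$ and transporting back along the reciprocal bijection $\tfrac ab\mapsto\tfrac ba$ of $\Fr$. This map is order-reversing, sends $\tfrac pq$ to $\tfrac qp$, and interchanges even and odd semiconvergents (reciprocation shifts the continued fraction by one index, flipping parity), best left and best right approximations, and left and right Farey endpoints $(\tfrac ab,\tfrac cd)\mapsto(\tfrac dc,\tfrac ba)$; it also matches the two forms of~(d), since the condition for $\tfrac cd>\tfrac pq$ is literally the condition for $\tfrac dc<\tfrac qp$ with the roles of numerator and denominator swapped. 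So it suffices to treat reduced fractions $\tfrac ab<\tfrac pq$. The engine of the whole argument is the unimodular invariant $bc-ad=1$ satisfied by every Farey pair $(\tfrac ab,\tfrac cd)$ the process produces: it is preserved by each mediant step, and it is what simultaneously controls denominators and links the process to the continued fraction of $\tfrac pq$.

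For (a)$\iff$(b) I would induct on the number of mediant steps, grouping each maximal run of same-direction updates into one partial quotient of $\tfrac pq$. A run of left-endpoint updates then produces exactly the fractions $[\gamma_0;\gamma_1,\dots,\gamma_{m'}]$ of even length, i.e.\ the even semiconvergents, while the right endpoints run through the odd semiconvergents; the base case is checked on $(\tfrac01,\tfrac10)$, and one uses that the process halts precisely when the mediant equals $\tfrac pq$.

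For (a)$\iff$(c) the unimodular relation does the work. If $(\tfrac ab,\tfrac cd)$ is a Farey pair, then any reduced fraction strictly between $\tfrac ab$ and $\tfrac cd$ has numerator-plus-denominator at least $(a+c)+(b+d)$, strictly exceeding $a+b$; since $\tfrac pq$ lies in this open interval, nothing in $\Fr_{a+b}$ can be squeezed strictly between $\tfrac ab$ and $\tfrac pq$, so $\tfrac ab=\pred_{a+b}(\tfrac pq)$ is a best left approximation. Conversely, a reduced fraction not generated by the process is dominated at its Farey level by the left endpoint then bracketing $\tfrac pq$, hence fails to be a best left approximation; this is the one-sided best-approximation theory of semiconvergents \cite{Khinchin}, adapted from the denominator to the measure $a+b$ through the same mediant estimate.

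Finally, (c)$\iff$(d) is the technical crux. Fixing the numerator $a$, the \emph{largest} fraction $\tfrac ab\le\tfrac pq$ uses the smallest admissible denominator, $b=\ceil{aq/p}=\floor{aq/p}+1$ (here $p\nmid a$ since $0<a<p$ for a semiconvergent numerator), which is the first clause of~(d) and also forces $\gcd(a,b)=1$. Writing the error as
$$\frac pq-\frac ab=\frac{bp-aq}{bq}=\frac{p-(aq\bmod p)}{bq},$$
a larger value of $aq\bmod p$ shrinks its numerator, so the record condition $aq\bmod p>iq\bmod p$ for all $0\le i<a$ expresses that raising the numerator up to $a$ keeps improving the one-sided approximation. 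Turning this into genuine Pareto-optimality with respect to $a+b$ is the main obstacle, precisely because the error denominators $bq$ also vary; I would resolve it by showing that $a$ is a left-to-right maximum of $(iq\bmod p)_i$ exactly when $\tfrac ab$ admits a unimodular right neighbour $\tfrac cd>\tfrac pq$ with $bc-ad=1$, that is, exactly when $\tfrac ab$ appears as a left endpoint of a Farey pair, thereby closing the loop back to~(a). This residue statement is of three-distance type and is the content of \cite[Theorem 3.3]{Ravenstein1988}, which may alternatively be invoked directly.
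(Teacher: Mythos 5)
A preliminary remark: the paper never proves \cref{fct:farey}; it is imported as a known result with citations to \cite{Khinchin}, \cite{Ravenstein1988}, and \cite{Farey}. So your attempt is automatically a different route, and the question is whether it stands on its own. Most of it does. The reciprocal reduction is valid: $\frac{a}{b}\mapsto\frac{b}{a}$ preserves the level $a+b$ defining $\Fr_k$, reverses order, conjugates the Farey process for $\frac{p}{q}$ into the one for $\frac{q}{p}$ (pairs map to swapped reciprocals), and flips continued-fraction parity by prepending or stripping a leading partial quotient; the two versions of (d) are indeed literally exchanged. Your (a)$\iff$(b) is the standard run-grouping induction, and your (a)$\iff$(c) is correct in both directions via the mediant estimate (any reduced fraction strictly inside a unimodular interval $(\frac{a}{b},\frac{c}{d})$ has numerator at least $a+c$ and denominator at least $b+d$); the same estimate gives $p+q\ge a+b+c+d$, which is needed to rule out $\frac{p}{q}\in\Fr_{a+b}$.

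The genuine gap is (c)$\iff$(d), exactly where you locate it: you reduce it to the bridge claim that $a$ is a strict prefix maximum of $(iq\bmod p)_i$ if and only if $\frac{a}{b}$ admits a unimodular right neighbour above $\frac{p}{q}$, but you never prove that claim; you cite \cite[Theorem 3.3]{Ravenstein1988} for it, which is the very reference the paper attaches to this Fact. As a self-contained argument, condition (d) is therefore never connected to the other three. This is avoidable with the machinery you already deployed. For (a)$\Rightarrow$(d): given a Farey pair $(\frac{a}{b},\frac{c}{d})$, if some $0\le i<a$ had $iq\bmod p\ge aq\bmod p$, choose $m$ with $0<mp-iq\le bp-aq$, write $(i,m)=\lambda(a,b)+\mu(c,d)$ in this unimodular basis, and evaluate the form $(x,y)\mapsto yp-xq$: the cases $\mu=0$, $\mu\ge 1$, $\mu\le -1$ force $i=a$, $i\ge a+c$, and $mp-iq\ge (bp-aq)+(cq-dp)$ respectively, each a contradiction. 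For (d)$\Rightarrow$(a): if $\frac{a}{b}$ is not a left endpoint, it lies strictly between two consecutive left endpoints $L_j<L_{j+1}$ of the process (or above the last one, whence $a\ge p$ and $i=a-p$ already ties the record); writing $(a,b)$ with both coefficients $\ge 1$ in the unimodular basis $\{L_j,L_{j+1}\}$, the index $i$ obtained by subtracting the numerator of $L_{j+1}$ from $a$ satisfies $i<a$ and $iq\bmod p>aq\bmod p$, killing the record condition. Finally, one small error: the first clause of (d) does not by itself force $\gcd(a,b)=1$ as you assert (for $p=5$, $q=7$, $a=4$ it gives $b=6$); reducedness is a hypothesis of the Fact, not a consequence.
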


\begin{example}
For $\frac{p}{q}=\frac{5}{7}$, the prefix maxima of $(iq \bmod p)_{i=0}^{p-1}=(0,2,4,1,3)$ 
are attained for $i=0,1,2$ (numerators of $\frac{0}{1},\frac{1}{2},\frac{2}{3}$)
while the prefix maxima of $(ip \bmod q)_{i=0}^{q-1}=(0,5,3,1,6,4,2)$
are attained for $i=0,1,4$ (denominators $\frac{1}{0},\frac{1}{1},\frac{3}{4}$).
\end{example}

Due to \cref{fct:farey}, the best approximations can be efficiently computed using the \emph{fast} continued fraction algorithm; see \cite{Farey}.
\begin{corollary}\label{cor:fast_farey}
Given $\frac{p}{q}\in \Fr$ and a positive integer $k$, $1\le k< p+q$, the values $\pred_k(\frac{p}{q})$ and $\succ_k(\frac{p}{q})$ can be computed in $\Oh(\log k)$ time.
\end{corollary}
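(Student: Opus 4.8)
The plan is to establish that $\pred_k(\frac{p}{q})$ and $\succ_k(\frac{p}{q})$ are best one-sided approximations of $\frac{p}{q}$, and then to locate them by running the Farey process of \cref{fct:farey} in a \emph{batched} fashion, charging $\Oh(1)$ time to each partial quotient of $\frac{p}{q}$. First I would argue that $\frac{a}{b} := \pred_k(\frac{p}{q})$ is a best left approximation: since $\pred_{k'}(\frac{p}{q})$ is nondecreasing in $k'$ and $\frac{a}{b}$ is itself a candidate of complexity $a+b$, we get $\frac{a}{b} \le \pred_{a+b}(\frac{p}{q}) \le \pred_k(\frac{p}{q}) = \frac{a}{b}$, so $\frac{a}{b} = \pred_{a+b}(\frac{p}{q})$ is a best left approximation. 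By \cref{fct:farey} it is therefore an even semiconvergent, appearing as the left endpoint of some Farey pair, and it is the best left approximation of largest complexity $a+b \le k$. The symmetric statement identifies $\succ_k(\frac{p}{q})$ with the best right approximation of largest complexity $\le k$.

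Next I would run the Farey process but collapse each maximal run of moves in a single direction into one batch. In a maximal run of left moves the right endpoint $\frac{c}{d}$ stays fixed while the left endpoint traverses $\frac{a+jc}{b+jd}$ for $j=1,\dots,\gamma$, where $\gamma$ is the next partial quotient, obtained by a single integer division; each such fraction is reduced (since $bc-ad=1$ forces $\gcd(a+jc,b+jd)=1$) and is an even semiconvergent, hence a best left approximation, of complexity $(a+b)+j(c+d)$, strictly increasing in $j$. Keeping a running candidate for $\pred_k(\frac{p}{q})$, initialized to the left endpoint $\frac{0}{1}$ of the first Farey pair, I maintain the invariant that at the start of each left run the current left endpoint has complexity $\le k$, and set $j_{\max}=\floor{\frac{k-(a+b)}{c+d}}\ge 0$. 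If $j_{\max}\ge \gamma$, the whole run stays within $k$, so I advance the left endpoint by the full batch (updating the candidate) and continue; otherwise the threshold $k$ is crossed inside the run, and $\frac{a+j_{\max}c}{b+j_{\max}d}$ is the best left approximation of largest complexity $\le k$, i.e.\ $\pred_k(\frac{p}{q})$, so I freeze the left side. The right batches are handled symmetrically for $\succ_k(\frac{p}{q})$. Crucially, even after one side is frozen, the descent must continue with the true Farey pair so that the opposite endpoint evolves correctly; the descent stops once both sides are frozen or the next mediant equals $\frac{p}{q}$, at which point the two running candidates are returned.

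For the running time I would use that $k<p+q$, so the process cannot reach $\frac{p}{q}$ (of complexity $p+q$) before the relevant thresholds are met; hence both answers are genuinely produced by the bookkeeping above. Since consecutive convergents alternate between left and right endpoints and their complexities grow at least like the Fibonacci numbers, the left (resp.\ right) endpoint complexity exceeds $k$ after $\Oh(\log k)$ completed left (resp.\ right) runs; thus the descent halts after $\Oh(\log k)$ batches, each processed with $\Oh(1)$ arithmetic operations, giving the claimed $\Oh(\log k)$ bound.

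The main obstacle is not the algorithm, which is a direct application of \cref{fct:farey}, but two bookkeeping points: verifying that the truncated mediant is literally $\pred_k(\frac{p}{q})$ or $\succ_k(\frac{p}{q})$ — which rests on best one-sided approximations exhausting all record-setting fractions on their side, so that the last one of complexity $\le k$ is the answer — and confirming that the geometric (Fibonacci) growth of convergent complexities caps the number of batches at $\Oh(\log k)$ rather than $\Oh(\log(p+q))$, while edge cases (e.g.\ a side that never truncates) are absorbed by initializing the candidates to the endpoints of the first Farey pair and finalizing them at the moment the descent terminates.
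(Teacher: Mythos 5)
Your proposal is correct and takes essentially the same route as the paper: the paper's proof of \cref{cor:fast_farey} is just the observation that, by \cref{fct:farey}, the best one-sided approximations are exactly the endpoints of Farey pairs, so they can be found by the \emph{fast} (batched) continued-fraction algorithm of \cite{Farey} --- which is precisely the run-collapsed Stern--Brocot descent you describe. Your write-up merely fills in the correctness and Fibonacci-growth details that the paper delegates to that citation.
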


Next, we characterize the function $L^d$.
\begin{lemma}\label{lem:ld}
Let $p,q>2$ be relatively prime integers and let $h < p+q-3$.
If $\frac{a}{b} = \pred_{h+3}(\frac{p}{q})$
and $\frac{c}{d} = \succ_{h+3}(\frac{p}{q})$,
then, assuming $G(-1,p,q)=0$:
$$L^d(h,p,q)=\begin{cases}
\widetilde{G}(a+b-2,p,q)+\widetilde{G}(c+d-2,p,q) & \text{if }a+b+c+d = h+4,\\
\widetilde{G}(h+2,p,q) & \text{otherwise.}
\end{cases}$$
\end{lemma}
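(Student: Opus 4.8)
The plan is to start from the exact formula of \cref{lem:simple_hd}(b), namely $L^d(h,p,q)=\max_{k=0}^{h}\bigl(\widetilde{G}(k,p,q)+\widetilde{G}(h-k,p,q)\bigr)$, and to recast the maximization in terms of multiples of $p$ and $q$. Since $\gcd(p,q)=1$ and $h<p+q-3$, each quantity $\widetilde{G}(k,p,q)$ appearing here is the position of one of the first $p+q-2$ positive multiples of $p$ or of $q$, all lying strictly below $pq$, so none is divisible by both and $G(n,p,q)=\floor{n/p}+\floor{n/q}$ throughout. Writing $n_1=\widetilde{G}(k,p,q)$ and $n_2=\widetilde{G}(h-k,p,q)$, each $n_i$ is a multiple of exactly one of $p,q$, with $G(n_1,p,q)+G(n_2,p,q)=h+2$, and $L^d(h,p,q)=\max(n_1+n_2)$ over all such pairs. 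I would also record the identity $n_i=\tfrac{pq}{p+q}\bigl(G(n_i,p,q)+\{n_i/p\}+\{n_i/q\}\bigr)$; summing it and using that each $n_i$ contributes one vanishing fractional part gives $n_1+n_2<\tfrac{pq}{p+q}(h+4)\le pq$, so such sums never cross the first common multiple.

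Second, the carry dichotomy. I set $t=G(n_1+n_2,p,q)-G(n_1,p,q)-G(n_2,p,q)$, which (as $n_1+n_2<pq$) equals the number of carries created when adding $n_1$ and $n_2$ in base $p$ and in base $q$, hence lies in $\{0,1,2\}$. A carry in base $p$ requires $p\nmid n_1,n_2$, i.e.\ both multiples of $q$; a carry in base $q$ requires both multiples of $p$; these are incompatible, so $t\le1$, and $t=1$ forces $n_1,n_2$ to be of the same type. When $t=1$ the sum $n_1+n_2$ is again a multiple of that type with $G(n_1+n_2,p,q)=h+3$, so it is the $(h+3)$-rd multiple, namely $\widetilde{G}(h+2,p,q)$; when $t=0$ one only gets $n_1+n_2<\widetilde{G}(h+2,p,q)$. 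Thus $L^d(h,p,q)=\widetilde{G}(h+2,p,q)$ as soon as some feasible pair carries, and otherwise $L^d(h,p,q)$ equals the maximum of $n_1+n_2$ over \emph{mixed} pairs (one multiple of $p$, one of $q$, necessarily with $t=0$), which is strictly smaller.

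Third, evaluating the mixed maximum via best approximations. Using the identity above, maximizing a single multiple for a fixed value of $G$ amounts to maximizing $\{n/p\}+\{n/q\}$; for a multiple of $q$ this is $(jq\bmod p)/p$ and for a multiple of $p$ it is $(ip\bmod q)/q$. By \cref{fct:farey}(d) the running maxima of $jq\bmod p$ occur exactly at the numerators $a$ of the best left approximations $\tfrac{a}{b}$ of $\tfrac{p}{q}$, and those of $ip\bmod q$ at the denominators $d$ of the best right approximations $\tfrac{c}{d}$. The same fact gives $G(aq,p,q)=a+(b-1)$ and $G(dp,p,q)=(c-1)+d$, hence $\widetilde{G}(a+b-2,p,q)=aq$ and $\widetilde{G}(c+d-2,p,q)=dp$ (the convention $G(-1,p,q)=0$ absorbing the degenerate endpoints $\tfrac{a}{b}=\tfrac{0}{1}$ and $\tfrac{c}{d}=\tfrac{1}{0}$). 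Therefore the optimal mixed pair combines the multiple $aq$ with the multiple $dp$ at total cost $G(aq,p,q)+G(dp,p,q)=a+b+c+d-2$, and it is admissible precisely when $a+b+c+d\le h+4$.

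Finally I would settle the case split through the Farey process. The fractions $\tfrac{a}{b}=\pred_{h+3}(\tfrac{p}{q})$ and $\tfrac{c}{d}=\succ_{h+3}(\tfrac{p}{q})$ are the left and right endpoints reached after performing every update of size at most $h+3$; they form a Farey pair (so $bc-ad=1$), both are proper because $h+3<p+q$, and their mediant has size $a+b+c+d$ and is the next best approximation, whence $a+b+c+d\ge h+4$ always. In the boundary equality $a+b+c+d=h+4$ the record pair $(aq,dp)$ exhausts the budget and yields $L^d(h,p,q)=\widetilde{G}(a+b-2,p,q)+\widetilde{G}(c+d-2,p,q)$; in the strict case $a+b+c+d>h+4$ the record pair is too expensive, but the extra Farey room lets one split the $(h+3)$-rd multiple into two multiples of a common type with exactly one carry, so a carrying pair exists and $L^d(h,p,q)=\widetilde{G}(h+2,p,q)$. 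I expect the main obstacle to be exactly this last correspondence: proving that a same-type split of the $(h+3)$-rd multiple with a single carry exists iff $a+b+c+d>h+4$, together with the companion claim that, when $a+b+c+d=h+4$, the Farey pair $(\pred_{h+3},\succ_{h+3})$ dominates every other admissible pairing. Both reduce to translating the record inequalities of \cref{fct:farey}(d) and the invariant $bc-ad=1$ into precise statements about carries and about the monotonicity of $aq+dp$ along the sequence of approximations, which is where the continued-fraction bookkeeping becomes indispensable.
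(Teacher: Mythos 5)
Your setup is sound: recasting \cref{lem:simple_hd}(b) as a maximization over pairs of multiples $(n_1,n_2)$ with $G(n_1,p,q)+G(n_2,p,q)=h+2$, and the carry dichotomy ($t\le 1$; a pair with $t=1$ must be same-type and sums to exactly $\widetilde{G}(h+2,p,q)$; mixed pairs never carry) is a correct and even pleasant reformulation of what the paper does less explicitly. Likewise, $G(aq,p,q)=a+b-1$ and $G(dp,p,q)=c+d-1$, hence $aq=\widetilde{G}(a+b-2,p,q)$ and $dp=\widetilde{G}(c+d-2,p,q)$, match the paper. But there is a genuine gap: the two claims you postpone to your final paragraph are not bookkeeping to be "translated" later — they \emph{are} the lemma, and you give no argument for either. (i) You never prove that a carrying pair exists when $a+b+c+d>h+4$. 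The paper does this by an explicit construction: writing $\widetilde{G}(h+2,p,q)=\alpha p$ (w.l.o.g.), it shows $d<\alpha<b+d$, invokes \cref{fct:farey}(d) to get $\alpha p\bmod q<dp\bmod q$, and deduces $G((\alpha-d)p,p,q)=h+3-c-d$, so that $\big((\alpha-d)p,\,dp\big)$ is an admissible pair summing to $\alpha p$; the paper then needs \cref{lem:hdb} for the matching upper bound $L^d(h,p,q)\le\widetilde{G}(h+2,p,q)$ (your bound $t\le 1$ would actually replace that step, a small simplification — but only once (i) is established). (ii) You never prove that when $a+b+c+d=h+4$ the pair $(aq,dp)$ dominates every other admissible pair. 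Your heuristic "take the record-setters of $iq\bmod p$ and of $jp\bmod q$" ignores that the constraint $G(iq,p,q)+G(jp,p,q)=h+2$ couples $i$ and $j$, so the two fractional parts cannot be maximized independently. The paper's argument here is the chain $h+4=\ceil{\tfrac{x}{p}}+\ceil{\tfrac{y}{p}}+\ceil{\tfrac{x}{q}}+\ceil{\tfrac{y}{q}}\ge\ceil{\tfrac{x+y}{p}}+\ceil{\tfrac{x+y}{q}}\ge a+b+c+d$, whose forced equalities, combined again with \cref{fct:farey}(d), exclude same-type optima and pin down $x=aq$, $y=dp$; nothing in your sketch substitutes for this.

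Two smaller repairs would also be needed. First, the pair $(aq,dp)$ is admissible precisely when $a+b+c+d=h+4$ (the constraint in \cref{lem:simple_hd}(b) is an equality), not when $a+b+c+d\le h+4$. Second, your assertion that in the no-carry case the maximum is attained by a \emph{mixed} pair requires an argument: it follows from noting that a same-type pair with $t=0$ sums to a multiple $m$ of $p$ or of $q$ with $G(m,p,q)=h+2$, i.e.\ to exactly $\widetilde{G}(h+1,p,q)$, which every mixed pair weakly dominates — but this is nowhere in your text, and in the degenerate situation $\pred_{h+3}(\frac{p}{q})=\frac{0}{1}$ or $\succ_{h+3}(\frac{p}{q})=\frac{1}{0}$ mixed pairs do not exist at all, so that case needs the separate treatment the paper gives it at the start of its proof.
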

\begin{proof}
Let us start with a special case of $\frac{a}{b}=\frac{0}{1}$.
Then $\frac{c}{d}=\frac{1}{h+2}$, so $q > (h+2)p$ and $\widetilde{G}(k,p,q)=(k+1)p$ for $k \le h+1$.
Consequently, by \cref{lem:simple_hd},
$$L^d(h,p,q)=\min_{k=0}^{h}\left(\widetilde{G}(k,p,q)+\widetilde{G}(h-k,p,q)\right)=(h+2)p.$$
Due to $a+b+c+d=0+1+1+h+2 = h+4$, this is equal to the claimed value of $\widetilde{G}(-1,p,q)+\widetilde{G}(h+1,p,q)=0+(h+2)p.$
Symmetrically, the lemma holds if $\frac{c}{d}=\frac{1}{0}$.
Thus, below we assume $\frac{1}{h+2}< \frac{p}{q}<\frac{h+2}{1}$.

By \cref{fct:farey}, $\frac{a+c}{b+d}$ is a best (left or right) approximation of $\frac{p}{q}$,
so $\max(a+b,c+d)\le h+3 < a+b+c+d$.
Moreover, 
$$G(aq,p,q)=\floor{\tfrac{aq}{p}}+\floor{\tfrac{aq}{q}}=b-1+a\quad\text{and}
\quad G(dp,p,q)=\floor{\tfrac{dp}{p}}+\floor{\tfrac{dp}{q}}=d+c-1,$$
so $\widetilde{G}(a+b-2,p,q)+\widetilde{G}(c+d-2,p,q)=aq+dp$

First, suppose that $a+b+c+d< h+4$.
Assume without loss of generality that $\widetilde{G}(h+2,p,q)=\alpha p$ is a multiple of $p$.
Note that $d<\alpha<b+d$ due to
 $$G((b+d)p,p,q)=b+d+\floor{\tfrac{(b+d)p}{q}}\ge a+b+c+d-1\ge h+4.$$
Consequently, \cref{fct:farey} yields $\alpha p \bmod q < dp \bmod q$. Hence
 $$G((\alpha-d)p,p,q)=(\alpha-d)+\floor{\tfrac{\alpha p - dp}{q}}=(\alpha-d)+\floor{\tfrac{\alpha p}{q}}+\floor{\tfrac{-dp}{q}}=h+3-c-d,$$
and therefore
$$L^d(h,p,q)\ge \widetilde{G}(h+2-c-d,p,q)+\widetilde{G}(c+d-2,p,q)=(\alpha-d)p+dp= \widetilde{G}(h+2,p,q).$$
On the other hand, \cref{lem:hdb} yields
$H^d(\alpha p,p,q)\ge G(\alpha p, p, q)-2 = h+1$, so $L^d(h,p,q)\le \widetilde{G}(h+2,p,q)$.

Finally, suppose that $a+b+c+d=h+4$.
\cref{lem:simple_hd} immediately yields $L^d(h,p,q)\ge \widetilde{G}(a+b-2,p,q)+\widetilde{G}(c+d-2,p,q)=aq+dp$.
For the proof of the inverse inequality, let us take $k$ such that $L^d(h,p,q)=\widetilde{G}(k,p,q)+\widetilde{G}(h-k,p,q)$,
and define $x=\widetilde{G}(k,p,q)$ and $y=\widetilde{G}(h-k,p,q)$.
Consequently,
\begin{multline*}
a+b+c+d= h+4= \floor{\tfrac{x-1}{p}}+\floor{\tfrac{x-1}{q}}+\floor{\tfrac{y-1}{p}}+\floor{\tfrac{y-1}{q}}+4=
\ceil{\tfrac{x}{p}}+\ceil{\tfrac{y}{p}}+\ceil{\tfrac{x}{q}}+\ceil{\tfrac{y}{q}}\ge\\
\ceil{\tfrac{x+y}{p}}+\ceil{\tfrac{x+y}{q}}\ge \ceil{\tfrac{aq+dp}{p}}+\ceil{\tfrac{aq+dp}{q}}=
d+\ceil{\tfrac{aq}{p}}+a+\ceil{\tfrac{dp}{q}}=d+b+a+c.$$
\end{multline*}
Each intermediate inequality must therefore be an equality, so we conclude that
$$\ceil{\tfrac{x}{p}}+\ceil{\tfrac{y}{p}}=\ceil{\tfrac{x+y}{p}}=\ceil{\tfrac{aq+dp}{p}}=b+d\quad\text{and}\quad
\ceil{\tfrac{x}{q}}+\ceil{\tfrac{y}{q}}=\ceil{\tfrac{x+y}{q}}=\ceil{\tfrac{aq+dp}{q}}=a+c.$$

If $p\mid x$ and $p \mid y$, then $\frac{x+y}{p} = b+d$,
so $\ceil{\frac{(b+d)p}{q}}=a+c$.
Hence $\frac{a+c}{b+d}\ge \frac{p}{q}$, and consequently
$\frac{a+c}{b+d}$ is either a right semiconvergent of $\frac{p}{q}$ or is equal to $\frac{p}{q}$.
In both cases, \cref{fct:farey} implies $(-(b+d)p)\bmod q < \min((-x)\bmod q, (-y)\bmod q)$.
This lets us derive a contradiction:
$$\ceil{\tfrac{x}{q}}+\ceil{\tfrac{y}{q}}=\tfrac{x+y+(-x)\bmod q+(-y)\bmod q}{q}>\tfrac{(b+d)p+2((-(b+d)p)\bmod q)}{q}\ge\ceil{\tfrac{(b+d)p}{q}}.$$
Symmetrically, $q\mid x$ and $q \mid y$ yields an analogous contradiction.

Thus, without loss of generality we may assume $p \mid x$ and $q\mid y$.
However, the conditions  $x+y \ge aq+dp$ and $\ceil{\frac{x+y}{p}}=\ceil{\frac{aq+dp}{p}}$
yield $(-y)\bmod p = (-(x+y))\bmod p \le (-(aq+dp))\bmod p = (-dp)\bmod p$.
By \cref{fct:farey}, this implies $y=dp$. Symmetrically, $x=aq$.
Thus, $L^d(h,p,q)=aq+dp$, as claimed.
\end{proof}

\cref{lem:ld} applies to $h<p+q-3$;
the following fact lets us deal with $h\ge p+q-3$.
It appeared in~\cite{DBLP:journals/jda/Blanchet-SadriMS12}, but we provide an alternative proof for completeness.
\begin{fact}[{\cite[Theorem 4]{DBLP:journals/jda/Blanchet-SadriMS12}}]\label{fct:reduce}
Let $p,q$ be relatively prime positive integers.
For each $h\ge 0$, we have 
$$L^d(h,p,q) = L^d(h \bmod (p+q-2),p,q) + \floor{\tfrac{h}{p+q-2}} \cdot pq.$$
Moreover, $L^d(p+q-3,p,q)=pq$.
\end{fact}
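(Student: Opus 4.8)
The statement to prove is \cref{fct:reduce}, which has two parts: a periodicity-type recurrence for $L^d$ under shifts of $h$ by $p+q-2$, and the boundary value $L^d(p+q-3,p,q)=pq$.

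The plan is to work through the generalized-inverse characterization $L^d(h,p,q)=\max_{k=0}^h(\widetilde{G}(k,p,q)+\widetilde{G}(h-k,p,q))$ from \cref{lem:simple_hd}(b), so everything reduces to understanding the function $\widetilde{G}$. First I would establish the key periodicity of $\widetilde{G}$ itself: since $G(n,p,q)=\floor{n/p}+\floor{n/q}-2\floor{n/(pq)}$, the function $G$ increases by exactly $p+q-2$ each time $n$ advances by $pq$ (in one full period $\{1,\ldots,pq\}$ there are $q$ multiples of $p$, $p$ multiples of $q$, and one multiple of both, giving $q+p-2$ strict multiples). This should yield the clean identity $\widetilde{G}(k+(p+q-2),p,q)=\widetilde{G}(k,p,q)+pq$, and by iteration $\widetilde{G}(k+j(p+q-2),p,q)=\widetilde{G}(k,p,q)+j\,pq$. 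I expect this to be the main technical engine; verifying that $G$ advances by exactly $p+q-2$ per period, with no off-by-one drift at the boundaries, is the step that needs the most care.

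With that periodicity of $\widetilde{G}$ in hand, the recurrence for $L^d$ should fall out almost formally. Writing $h=r+j(p+q-2)$ with $r=h\bmod(p+q-2)$ and $j=\floor{h/(p+q-2)}$, I would take the optimal splitting index $k$ realizing $L^d(r,p,q)$ and shift it by $j(p+q-2)$ on one side; using the $\widetilde{G}$ periodicity, the two summands each pick up a multiple of $pq$ summing to $j\,pq$. Conversely, for any splitting $k+(h-k)=h$ of the larger argument, reducing both indices modulo $p+q-2$ and subtracting the corresponding $pq$ multiples gives a valid splitting of $r$, so the maximum cannot exceed $L^d(r,p,q)+j\,pq$. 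The two inequalities together give the claimed equality; the only delicate point is confirming that the reduction of indices preserves the constraint $k\ge 0$ and $h-k\ge 0$ and that the floor identities line up, so that the maximum over the shifted problem coincides exactly with the maximum over the base problem.

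For the boundary value, I would compute $L^d(p+q-3,p,q)$ directly from the same formula. Here $h=p+q-3$ is one less than the full period, so $r=p+q-3$ and $j=0$; the claim $L^d(p+q-3,p,q)=pq$ can be checked by taking the splitting $k=q-2,\ h-k=p-1$ (or a symmetric choice) and evaluating $\widetilde{G}(q-2,p,q)$ and $\widetilde{G}(p-1,p,q)$ explicitly. Since $\widetilde{G}(q-2,p,q)$ is the smallest $n$ with $G(n,p,q)>q-2$, and the first $q-1$ strict multiples of $p$ or $q$ below $pq$ are exactly the multiples of $p$ up to $(q-1)p$, I expect $\widetilde{G}(q-2,p,q)=(q-1)p$ and symmetrically $\widetilde{G}(p-1,p,q)=(p-1)q$ or a matching value, whose sum telescopes to $pq$; one then checks no other splitting beats this. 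The main obstacle throughout is bookkeeping of the $\pm2\floor{n/(pq)}$ correction term near multiples of $pq$, but since all relevant arguments here stay within a single period this correction is well controlled.
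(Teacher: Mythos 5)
Your overall strategy---replacing the paper's detour through $H^d$ by the periodicity of $\widetilde{G}$ and the max-formula of \cref{lem:simple_hd}(b)---is reasonable, and its first half is sound: the identity $\widetilde{G}(k+(p+q-2),p,q)=\widetilde{G}(k,p,q)+pq$ is correct, and your lower bound $L^d(h,p,q)\ge L^d(r,p,q)+j\cdot pq$ (where $r=h\bmod(p+q-2)$ and $j=\floor{h/(p+q-2)}$) follows from it. The gap is in the converse inequality. You claim that reducing both indices of an arbitrary splitting $k+(h-k)=h$ modulo $p+q-2$ ``gives a valid splitting of $r$''. It does not: the reduced indices $k_0,k_1$ satisfy only $k_0+k_1\equiv r \pmod{p+q-2}$ with $0\le k_0+k_1\le 2(p+q-3)$, so either $k_0+k_1=r$ (fine) or $k_0+k_1=r+(p+q-2)$ (not a splitting of $r$). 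The second case really occurs: for $p=5$, $q=7$, $h=12$ (so $r=2$, $j=1$), the splitting $k=5$, $h-k=7$ has both indices already below $p+q-2=10$, and they sum to $12=r+10$. Your argument then gives no bound on $\widetilde{G}(5,5,7)+\widetilde{G}(7,5,7)=20+25=45$ in terms of $L^d(2,5,7)+pq=15+35=50$. Nor can this be patched by the natural subadditivity $\widetilde{G}(a+b+1,p,q)\le\widetilde{G}(a,p,q)+\widetilde{G}(b,p,q)$: that inequality is false, e.g.\ $\widetilde{G}(10,5,7)=40>35=\widetilde{G}(5,5,7)+\widetilde{G}(4,5,7)$. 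Handling these ``straddling'' splittings is the heart of the proof, not bookkeeping. The paper does it on the min side: it proves $H^d(n+pq,p,q)=H^d(n,p,q)+p+q-2$ via \cref{lem:simple_hd}(a), where the problematic case (every optimal cut $l$ satisfying $n\le l<pq$) is excluded by an explicit floor-function estimate showing such a cut is no better than the extreme cut $l=n+pq-1$; the identity for $L^d$ then follows by taking generalized inverses. You need some substitute for this step.

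The boundary value $L^d(p+q-3,p,q)=pq$ is also not established by your sketch. Your explicit evaluation is wrong: for $p=5$, $q=7$ one has $\widetilde{G}(q-2,p,q)=\widetilde{G}(5,5,7)=20$, not $(q-1)p=30$; the first $q-1$ integers counted by $G$ are not just multiples of $p$, because multiples of $q$ interleave with them. More importantly, since $L^d$ is a maximum over all splittings, exhibiting one splitting of value $pq$ only gives a lower bound, and the phrase ``one then checks no other splitting beats this'' is exactly what has to be proved. The paper gets both directions at once from the symmetry identity $\widetilde{G}(k,p,q)+\widetilde{G}(p+q-3-k,p,q)=pq$ for all $0\le k\le p+q-3$ (the integers in $[1,pq]$ divisible by exactly one of $p,q$ are symmetric under $x\mapsto pq-x$), so every splitting has value exactly $pq$. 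Incidentally, this identity is also the natural tool for attacking the straddling case above, so it deserves to be the centerpiece of any proof along the lines you propose.
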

 \begin{proof}
First, note that $\widetilde{G}(k,p,q)+\widetilde{G}(p+q-3-k,p,q)=pq$ holds for $0\le k \le p+q-3$.
Hence, $L^d(p+q-3,p,q)=pq$  holds as claimed due to \cref{lem:simple_hd}.

For the first part of the statement, it suffices to prove that $H^d(n+pq,p,q)=H^d(n,p,q)+p+q-2$ for each $n\ge q$.
The function $G$ satisfies an analogous equality, so \cref{lem:simple_hd} immediately yields
$H^d(n+pq,p,q)\le p+q+2+H^d(n,p,q)$. The other inequality also follows from \cref{lem:simple_hd}
unless each optimum value $l$ for $n+pq$ satisfies $n \le l < pq$.
However, for such $l$ (and $q<n<pq$), we have
\begin{multline*}
G(l,p,q)+G(n+pq-l-1,p,q)=\floor{\tfrac{l}{p}}+\floor{\tfrac{l}{q}}+\floor{\tfrac{n+pq-l-1}{p}}+\floor{\tfrac{n+pq-l-1}{p}}\ge\\
\floor{\tfrac{n+pq}{p}}-1 + \floor{\tfrac{n+pq}{q}}-1 = G(n+pq,p,q)+G(0,p,q),
\end{multline*}
a contradiction. This concludes the proof.
\end{proof}

 \begin{theorem}\label{thm:complogh}
Given integers $p,q\ge 1$ such that $\gcd(p,q)\notin \{p,q\}$ and an integer $h\ge 0$,
the value $L(h,p,q)$ can be computed in $\Oh(\log p +\log q)$ time.
  \end{theorem}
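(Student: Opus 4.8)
The plan is to assemble the $\Oh(\log p + \log q)$-time algorithm from the structural results already established, reducing the general case to the base case $\gcd(p,q)=1$, $2<p<q$ handled by \cref{thm:wr}, and then to invoke the continued-fraction machinery to evaluate $L^s$ and $L^d$ without the $\Oh(h)$ overhead of \cref{cor:comph}. First I would apply \cref{fct:gcd} to reduce to the case $\gcd(p',q')=1$; computing $\gcd(p,q)$ and dividing takes $\Oh(\log p + \log q)$ time via the Euclidean algorithm. If $p'=2$, the closed form in \cref{fct:two} is evaluated in $\Oh(1)$ arithmetic operations (each on numbers of size $\Oh(\log p + \log q)$ bits, so within budget). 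The remaining interesting case is $2<p'<q'$, where I use the dichotomy of \cref{thm:wr}: deciding between the $L^s$ and $L^d$ branches requires only comparing $\frac{q'}{p'}$ against $\ceil{\frac{h}{2}}$ and $\frac{3}{2}$, which is immediate.

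Evaluating $L^s$ is trivial from the closed form of \cref{lem:ls}, costing $\Oh(1)$ operations. The substance lies in computing $L^d$ fast. Here I would combine \cref{fct:reduce} with \cref{lem:ld}. Using \cref{fct:reduce}, I reduce $h$ modulo $p'+q'-2$, writing
$$L^d(h,p',q') = L^d(h \bmod (p'+q'-2),p',q') + \floor{\tfrac{h}{p'+q'-2}} \cdot p'q',$$
so it suffices to evaluate $L^d$ for an argument $h' = h \bmod (p'+q'-2) < p'+q'-2$, which also satisfies the hypothesis $h'<p'+q'-3$ of \cref{lem:ld} (the boundary value $h'=p'+q'-3$ is handled directly by the second part of \cref{fct:reduce}, which gives $p'q'$). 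For $h' < p'+q'-3$, I compute $\frac{a}{b} = \pred_{h'+3}(\frac{p'}{q'})$ and $\frac{c}{d} = \succ_{h'+3}(\frac{p'}{q'})$ using the fast continued-fraction algorithm of \cref{cor:fast_farey} in $\Oh(\log(h'+3)) = \Oh(\log p' + \log q')$ time. Then \cref{lem:ld} gives $L^d(h',p',q')$ either as $aq'+dp'$ (in the case $a+b+c+d=h'+4$, which can be read off from the formula $\widetilde{G}(a+b-2,p',q')+\widetilde{G}(c+d-2,p',q')$) or as $\widetilde{G}(h'+2,p',q')$.

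The one remaining ingredient is evaluating $\widetilde{G}(k,p',q')$ at a single argument in logarithmic time, since the proof of \cref{lem:ld} leaves $L^d$ expressed through such values. Because $\widetilde{G}(k,p',q')$ is the $(k+1)$-st smallest positive integer that is a multiple of $p'$ or $q'$ but not both, one can locate it by a binary search: for a candidate bound $N$, the count of such integers up to $N$ is exactly $G(N,p',q') = \floor{\frac{N}{p'}}+\floor{\frac{N}{q'}}-2\floor{\frac{N}{p'q'}}$, computable in $\Oh(1)$ arithmetic operations, and $G$ is nondecreasing in $N$; binary search over $N \in [1,(k+2)\max(p',q')]$ then yields $\widetilde{G}(k,p',q')$ in $\Oh(\log(k\,q'))=\Oh(\log p'+\log q')$ time. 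Putting these pieces together bounds the total running time by $\Oh(\log p + \log q)$. The main obstacle I anticipate is the careful bookkeeping at the branch boundaries: confirming that the reduced argument $h'$ from \cref{fct:reduce} always lands in the regime where \cref{lem:ld} applies (or is the explicitly handled value $p'+q'-3$), and verifying that the case distinction of \cref{thm:wr} is exhaustive and correctly routed after the $\gcd$ reduction — in particular that the degenerate subcases ($h=4$, $\frac{q'}{p'}<\frac{3}{2}$) are dispatched to the correct closed form.
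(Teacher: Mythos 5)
Your proposal is correct and follows essentially the same route as the paper's proof: reduce via \cref{fct:gcd}, dispatch $p'=2$ through \cref{fct:two}, route between $L^s$ and $L^d$ via \cref{thm:wr}, and evaluate $L^d$ by combining \cref{fct:reduce}, \cref{lem:ld}, \cref{cor:fast_farey}, and a binary search for $\widetilde{G}$ using the counting function $G$. Your explicit treatment of the boundary value $h'=p'+q'-3$ (not covered by \cref{lem:ld} but resolved by the second part of \cref{fct:reduce}) is a detail the paper's proof leaves implicit.
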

  \begin{proof}
  We proceed as in the proof of \cref{cor:comph},
  except that we apply \cref{fct:reduce,lem:ld} to compute $L^d(h,p,q)$.
  \cref{fct:reduce} reduces the problem to determining $L^d(h',p,q)$,
  where $h'=h \bmod (p+q-2)$.
  We use \cref{cor:fast_farey} to compute $\pred_{h'+3}(\frac{p}{q})$
  and $\succ_{h'+3}(\frac{p}{q})$ in $\Oh(\log h')$ time.
  The values $\widetilde{G}(r,p,q)$ can be determined in $\Oh(\log r)$ time using binary search (restricted to multiples of $p$ or $q$).
  The overall running time for $L^d(h,p,q)$ is $\Oh(\log h')=\Oh(\log p+ \log q)$,
  so for $L(h,p,q)$ it is also $\Oh(\log p+ \log q)$.
  \end{proof}

\section{Closed-Form Formula for $L(h,\cdot,\cdot)$}\label{sec:closed}

In this section we show how to compute a compact representation of the function
$L(h,\cdot,\cdot)$ in $\Oh(h\log h)$ time. 
We start with such representations for $\widetilde{G}$ and $L^d$.

Assume that $h < p+q - 3$.
  For $0<i\le h+4$, let us define fractions
 $$l_i = \tfrac{i-1}{h+4-i},\ \  
  m_i=\tfrac{i}{h+4-i},$$ 
called the $h$-special points and the $h$-middle points, respectively.
Now, The function $\widetilde{G}$ can be expressed as follows (see~\cref{fig:g}):

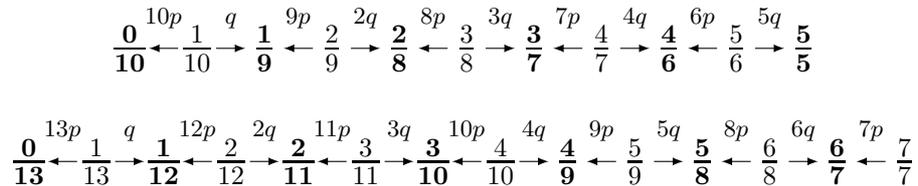
\begin{figure}[b!]
\begin{center}
\Large
\begin{tikzpicture}
[xscale=0.886,
left/.style={latex-,shorten <=0.25cm,shorten >=0.25cm},
right/.style={-latex,shorten <=0.25cm,shorten >=0.25cm},
label/.style={above=0.1cm, font=\small}]
  \draw (-3,0) node {$\bf \frac{0}{10}$};
  \draw (-2,0) node {$\frac{1}{10}$};
  \draw (-1,0) node {$\bf \frac{1}{9}$};
  \draw (0,0) node {$\frac{2}{9}$};
  \draw (1,0) node {$\bf \frac{2}{8}$};
  \draw (2,0) node {$\frac{3}{8}$};
  \draw (3,0) node {$\bf \frac{3}{7}$};
  \draw (4,0) node {$\frac{4}{7}$};
  \draw (5,0) node {$\bf \frac{4}{6}$};
  \draw (6,0) node {$\frac{5}{6}$};
  \draw (7,0) node {$\bf \frac{5}{5}$};
  \draw[left] (-3,0) -- node[label] {$10p$} (-2,0);
  \draw[right] (-2,0) -- node[label] {$q$} (-1,0);
  \draw[left] (-1,0) -- node[label] {$9p$} (0,0);
  \draw[right] (0,0) -- node[label] {$2q$} (1,0);
  \draw[left] (1,0) -- node[label] {$8p$} (2,0);
  \draw[right] (2,0) -- node[label] {$3q$} (3,0);
  \draw[left] (3,0) -- node[label] {$7p$} (4,0);
  \draw[right] (4,0) -- node[label] {$4q$} (5,0);
  \draw[left] (5,0) -- node[label] {$6p$} (6,0);
  \draw[right] (6,0) -- node[label] {$5q$} (7,0);
  
  \begin{scope}[yshift=-1.5cm,xshift=-1.5cm]
    \draw (-3,0) node {$\bf \frac{0}{13}$};
  \draw (-2,0) node {$\frac{1}{13}$};
  \draw (-1,0) node {$\bf \frac{1}{12}$};
  \draw (0,0) node {$\frac{2}{12}$};
  \draw (1,0) node {$\bf \frac{2}{11}$};
  \draw (2,0) node {$\frac{3}{11}$};
  \draw (3,0) node {$\bf \frac{3}{10}$};
  \draw (4,0) node {$\frac{4}{10}$};
  \draw (5,0) node {$\bf \frac{4}{9}$};
  \draw (6,0) node {$\frac{5}{9}$};
  \draw (7,0) node {$\bf \frac{5}{8}$};
  \draw (8,0) node {$\frac{6}{8}$};
  \draw (9,0) node {$\bf \frac{6}{7}$};
  \draw (10,0) node {$\frac{7}{7}$};
  \draw[left] (-3,0) -- node[label] {$13p$} (-2,0);
  \draw[right] (-2,0) -- node[label] {$q$} (-1,0);
  \draw[left] (-1,0) -- node[label] {$12p$} (0,0);
  \draw[right] (0,0) -- node[label] {$2q$} (1,0);
  \draw[left] (1,0) -- node[label] {$11p$} (2,0);
  \draw[right] (2,0) -- node[label] {$3q$} (3,0);
  \draw[left] (3,0) -- node[label] {$10p$} (4,0);
  \draw[right] (4,0) -- node[label] {$4q$} (5,0);
  \draw[left] (5,0) -- node[label] {$9p$} (6,0);
  \draw[right] (6,0) -- node[label] {$5q$} (7,0);
   \draw[left] (7,0) -- node[label] {$8p$} (8,0);
  \draw[right] (8,0) -- node[label] {$6q$} (9,0);
   \draw[left] (9,0) -- node[label] {$7p$} (10,0);
  
  \end{scope}
\end{tikzpicture}
\end{center}
\caption{
Graphical representations of the closed-form formulae for $\widetilde{G}(9,p,q)$ (above)
and $\widetilde{G}(12,p,q)$ (below) for $p<q$: partitions of $[0,1]$ into intervals w.r.t. $p/q$ and
linear functions of $p$ and $q$ for each interval.
The respective special points are shown in bold.
}\label{fig:g}
\end{figure}%

\begin{lemma}\label{lem:g}
If $ \gcd(p,q)=1$ and $h< p+q-3$, then
$$\widetilde{G}(h+2,p,q) =\begin{cases}
\; (h+4-i) \cdot p &\text{ if }\ l_i \le \frac{p}{q}\le m_i, \\
\; i \cdot q &\text{ if }\ m_i \le \frac{p}{q}\le l_{i+1}.
\end{cases}
$$
\end{lemma}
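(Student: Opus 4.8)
The plan is to read off $\widetilde{G}(h+2,p,q)$ as the position of a single multiple of $p$ or $q$ inside their sorted union. First I would record that
$G(n,p,q)-G(n-1,p,q)$ equals $1$ when $n$ is a multiple of exactly one of $p,q$, and equals $0$ when $n$ is a multiple of both (i.e.\ of $pq$, since $\gcd(p,q)=1$) or of neither. Hence $G(\cdot,p,q)$ is nondecreasing, starts at $G(0,p,q)=0$, and jumps by exactly $1$ at each \emph{special multiple} (a multiple of $p$ or $q$ but not of $pq$). Consequently $\widetilde{G}(k,p,q)=\min\{n:G(n,p,q)>k\}$ is precisely the $(k+1)$-st special multiple, so $\widetilde{G}(h+2,p,q)$ is the $(h+3)$-rd special multiple. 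The role of the hypothesis $h<p+q-3$ is to keep this multiple below $pq$: there are exactly $p+q-2$ special multiples in $(0,pq)$, namely $p,2p,\dots,(q-1)p$ and $q,2q,\dots,(p-1)q$, so the correction term $\floor{n/pq}$ plays no role and the special multiples are just the sorted union of $\{jp\}$ and $\{iq\}$.

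Assuming $p<q$ (so $\tfrac pq<1$; the case $p>q$ is symmetric), the heart of the argument is a rank computation. For a candidate $N=jp$ with $1\le j\le q-1$ (so $q\nmid N$ and $N<pq$), the number of special multiples not exceeding $N$ is the $j$ multiples of $p$ in $[1,N]$ plus the $\floor{jp/q}$ multiples of $q$ in $[1,N]$, with no overlap, so $G(jp,p,q)=j+\floor{jp/q}$. Thus $jp$ is the $(h+3)$-rd special multiple exactly when $j+\floor{jp/q}=h+3$; writing $j=h+4-i$ turns this into $\floor{(h+4-i)p/q}=i-1$, which is equivalent to $\tfrac{i-1}{h+4-i}\le\tfrac pq<\tfrac{i}{h+4-i}$, i.e.\ $l_i\le\tfrac pq<m_i$, and on this range $\widetilde{G}(h+2,p,q)=(h+4-i)p$. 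Symmetrically, for $N=iq$ with $1\le i\le p-1$ one gets $G(iq,p,q)=i+\floor{iq/p}$, so $iq$ is the $(h+3)$-rd special multiple exactly when $\floor{iq/p}=h+3-i$, equivalent to $m_i<\tfrac pq\le l_{i+1}$ (using $l_{i+1}=\tfrac{i}{h+3-i}$), and there $\widetilde{G}(h+2,p,q)=iq$.

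It remains to glue the pieces. The points $0=l_1\le m_1\le l_2\le m_2\le\cdots$ sweep across $[0,1]$, and the two half-open ranges derived above together account for every ratio except the isolated values $\tfrac pq=m_i$; at such a point cross-multiplying gives $(h+4-i)p=iq$, so both neighbouring formulas coincide and the closed intervals in the statement are harmless (the same check at $\tfrac pq=l_i$ gives $(h+4-i)p=(i-1)q$). The main obstacle — and exactly where $\gcd(p,q)=1$ together with $h<p+q-3$ must be invoked with care — is confirming that the chosen candidate $jp$ (resp.\ $iq$) is a genuine special multiple strictly below $pq$, rather than a multiple of $pq$: an equality $\tfrac pq=m_i$ or $\tfrac pq=l_i$ forces (by coprimality) $i$, respectively $i-1$, to be a multiple of $p$, which the range $h<p+q-3$ pushes to the very top of the admissible indices. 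Verifying that no such degeneration produces a value $\ge pq$ within the stated range, so that no wraparound special multiple is miscounted, is the delicate bookkeeping that completes the proof.
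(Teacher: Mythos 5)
Your core computation follows essentially the same route as the paper's proof: identify $\widetilde{G}(h+2,p,q)$ with a multiple $jp$ or $iq$, compute its rank among the multiples of $p$ and $q$ via floor functions, and translate the rank condition into an interval condition on $\frac{p}{q}$. Your half-open intervals $[l_i,m_i)$ and $(m_i,l_{i+1}]$ are exactly the paper's Cases 2 and 1 (the paper writes closed intervals, but the content is the same), and this part of your argument is correct.

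The genuine gap is precisely the step you defer at the end as ``delicate bookkeeping,'' and it cannot in fact be carried out. Your early claim that $h<p+q-3$ keeps the $(h+3)$-rd special multiple below $pq$ is off by one: the hypothesis only gives $h+3\le p+q-1$, while there are just $p+q-2$ special multiples below $pq$. The slack is realized at $h=p+q-4$, which lies inside the stated range. There, coprimality forces $\frac{p}{q}=m_p$ exactly (from $iq=(h+4-i)p$ one gets $h+4=t(p+q)$, so $t=1$, $i=p$), both branches of the formula return $pq$ --- but $pq$ is \emph{not} a special multiple, and the true value is $\widetilde{G}(p+q-2,p,q)=pq+\min(p,q)$. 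Concretely, for $p=3$, $q=5$, $h=4<p+q-3$: the formula gives $15$, yet $G(15,3,5)=G(17,3,5)=6$ and $G(18,3,5)=7$, so $\widetilde{G}(6,3,5)=18$. Thus the statement itself fails at $h=p+q-4$, and any correct proof must either tighten the hypothesis to $h<p+q-4$ or treat that single value separately. In fairness, the paper's own proof has the same blind spot: it writes the defining conditions for $\widetilde{G}(h+2,p,q)=jq$ as $\floor{\frac{jq}{p}}+j\ge h+3$ and $\floor{\frac{jq-1}{p}}+j-1\le h+2$, silently dropping the $-2\floor{\cdot/pq}$ correction in $G$, which is legitimate only for $j<p$; the case $j=p$ is reachable exactly when $h=p+q-4$, which is where both its argument and the statement break. (The defect appears harmless downstream: at $h=p+q-4$ one has $a+b+c+d=h+4$ in \cref{lem:ld}, so $L^d$ is computed through the case that never consults $\widetilde{G}(h+2,p,q)$.) So your proof is sound on $h\le p+q-5$, matching the paper there, but the boundary case you postponed is not bookkeeping --- it is a counterexample.
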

\begin{proof}
Note that $\widetilde{G}(h+2,p,q)=n$ is equivalent to $G(n-1,p,q)\le h+2 < G(n,p,q)$.
Additionally, observe that $\widetilde{G}(h+2,p,q)$ is a multiple of $p$ or $q$.
We have two cases.
\paragraph{Case 1:} The condition $\widetilde{G}(h+2,p,q)=j\cdot q$ for $j\in \mathbb{Z}_{>0}$
is equivalent to:
$$\floor{\tfrac{jq}{p}}+j \ge h+3 \quad\quad\text{and}\quad\quad \floor{\tfrac{jq-1}{p}}+j-1\le h+2,$$
i.e.,
$$\floor{\tfrac{jq}{p}}\ge h+3-j \quad\quad \text{and}\quad\quad \ceil{\tfrac{jq}{p}}= \floor{\tfrac{jq-1}{p}}+1 \le h+4-j.$$
In other words, we have
$h+3-j \le \frac{jq}{p} \le h+4-j$, i.e.,
$$m_j=\tfrac{j}{h+4-j}\le \tfrac{p}{q} \le \tfrac{j}{h+3-j}=l_{j+1}.$$

\paragraph{Case 2:} The condition $\widetilde{G}(h+2,p,q)=j\cdot p$ for $j\in \mathbb{Z}_{>0}$
is equivalent to:
$$\floor{\tfrac{j p}{q}}+j \ge h+3 \quad\quad\text{and}\quad\quad \floor{\tfrac{jp-1}{q}}+j-1\le h+2,$$
i.e.,
$$\floor{\tfrac{jp}{q}}\ge h+3-j \quad\quad \text{and}\quad\quad \ceil{\tfrac{jq}{p}}= \floor{\tfrac{jp-1}{q}}+1 \le h+4-j.$$
In other words, we have
$h+3-j \le \frac{jp}{q} \le h+4-j$, i.e.,
$$l_{h+4-j}=\tfrac{h+3-j}{j}\le \tfrac{p}{q} \le \tfrac{h+4-j}{j}=m_{h+4-j}.$$

\smallskip
The family of intervals $[m_i,l_{i+1}]$ and $[l_i,m_i]$ has
the property that any two distinct intervals in this family have disjoint interiors.
Hence, the values of $\widetilde{G}(h,p,q)$ are as claimed.     \end{proof}

Combined with \cref{lem:ld}, \cref{lem:g} yields a closed-form formula for $L^d$.
Note that for each $i$, we have $l_i \le \pred_{h+3}(m_i)\le m_i \le \succ_{h+3}(m_i)\le l_{i+1}$,
but none of the inequalities is strict in general.
In particular, $\pred_{h+3}(m_i)= m_i = \succ_{h+3}(m_i)$
if $\gcd(i, h+4-i)>1$.%
\begin{corollary}\label{cor:main1}
Let $p,q$ be relatively prime positive integers and let $h\le p+q-3$ be a non-negative integer.
Suppose that $l_i \le \frac{p}{q} \le l_{i+1}$ and define reduced fractions
 $\frac{a_i}{b_i} = \pred_{h+3}(m_i)$ and $\frac{c_i}{d_i}=\succ_{h+3}(m_i)$.
 Then:
$$L^d(h,p,q) =
\begin{cases}
(h+4-i) \cdot p & \text{ if } l_i \le \frac{p}{q} \le \frac{a_i}{b_i},\\
a_iq + d_i p & \text{ if }\frac{a_i}{b_i} < \frac{p}{q} < \frac{c_i}{d_i},\\
i \cdot q & \text{ if } \frac{c_i}{d_i} \le \frac{p}{q} \le l_{i+1}.
\end{cases}
$$
\end{corollary}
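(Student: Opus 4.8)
The plan is to combine the closed-form description of $\widetilde{G}(h+2,p,q)$ from \cref{lem:g} with the case distinction in \cref{lem:ld}, deducing the three-part formula for $L^d$ by matching up the regions of $[0,1]$ into which $\frac{p}{q}$ can fall. The anchor is the hypothesis $l_i \le \frac{p}{q} \le l_{i+1}$, which by \cref{lem:g} means that on the subregion $[l_i,m_i]$ we have $\widetilde{G}(h+2,p,q)=(h+4-i)p$, while on $[m_i,l_{i+1}]$ we have $\widetilde{G}(h+2,p,q)=i\,q$. The key observation is that $\frac{a_i}{b_i}=\pred_{h+3}(m_i)$ and $\frac{c_i}{d_i}=\succ_{h+3}(m_i)$ are best left and right approximations of $m_i$, hence of any $\frac{p}{q}$ strictly between them (since $\frac{a_i}{b_i}<\frac{p}{q}<\frac{c_i}{d_i}$ forces both to be best one-sided approximations of $\frac{p}{q}$ as well), so they are exactly the fractions $\frac{a}{b}=\pred_{h+3}(\frac{p}{q})$ and $\frac{c}{d}=\succ_{h+3}(\frac{p}{q})$ that appear in \cref{lem:ld}.

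First I would treat the middle region $\frac{a_i}{b_i}<\frac{p}{q}<\frac{c_i}{d_i}$. Here $\frac{a_i}{b_i}$ and $\frac{c_i}{d_i}$ are consecutive best approximations straddling $m_i$, so they form a Farey pair for $\frac{p}{q}$ with $\frac{a_i+c_i}{b_i+d_i}=m_i$; this yields $a_i+b_i+c_i+d_i=h+4$ (using $b_i+d_i=h+4-i$ and $a_i+c_i=i$, read off from $m_i=\frac{i}{h+4-i}$ being the mediant). This puts us squarely in the first branch of \cref{lem:ld}, giving $L^d(h,p,q)=\widetilde{G}(a_i+b_i-2,p,q)+\widetilde{G}(c_i+d_i-2,p,q)=a_iq+d_ip$, exactly as the proof of \cref{lem:ld} already computes. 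Next I would handle the two outer regions $l_i\le\frac{p}{q}\le\frac{a_i}{b_i}$ and $\frac{c_i}{d_i}\le\frac{p}{q}\le l_{i+1}$. On these, $\frac{p}{q}$ is no longer strictly between its best approximations $\pred_{h+3}(\frac{p}{q})$ and $\succ_{h+3}(\frac{p}{q})$ in the way needed for the mediant condition $a+b+c+d=h+4$ to hold, so we fall into the second branch of \cref{lem:ld} and obtain $L^d(h,p,q)=\widetilde{G}(h+2,p,q)$; then \cref{lem:g} evaluates this to $(h+4-i)p$ on the left region (which lies in $[l_i,m_i]$) and to $i\,q$ on the right region (which lies in $[m_i,l_{i+1}]$).

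The main obstacle I anticipate is the bookkeeping at the boundaries, namely verifying that the three regions $[l_i,\frac{a_i}{b_i}]$, $(\frac{a_i}{b_i},\frac{c_i}{d_i})$, and $[\frac{c_i}{d_i},l_{i+1}]$ genuinely tile $[l_i,l_{i+1}]$ and agree with the two formulas $(h+4-i)p$ and $i\,q$ where they overlap. This requires the remark preceding the corollary: $l_i\le\frac{a_i}{b_i}\le m_i\le\frac{c_i}{d_i}\le l_{i+1}$ always holds, but the inequalities may degenerate. In the degenerate case $\gcd(i,h+4-i)>1$ we have $\frac{a_i}{b_i}=m_i=\frac{c_i}{d_i}$, so the middle region is empty and only the two outer branches survive; one must check consistency at $\frac{p}{q}=m_i$, where both outer formulas yield $(h+4-i)p=i\,q\cdot\frac{h+4-i}{i}$, i.e.\ they meet continuously because $m_i p = (h+4-i)^{-1} i p$ forces the two linear expressions to coincide precisely at the mediant. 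The only real care needed is confirming that at the shared endpoints $\frac{p}{q}=\frac{a_i}{b_i}$ and $\frac{p}{q}=\frac{c_i}{d_i}$ the value computed by the middle branch $a_iq+d_ip$ agrees with the outer-branch values, which follows from $b_i=\floor{\frac{a_iq}{p}}+1$ and $c_i=\floor{\frac{d_ip}{q}}+1$ at equality (a direct substitution, left as a routine check).
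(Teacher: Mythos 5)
Your overall strategy is the paper's own: match the trichotomy of the corollary against the two cases of \cref{lem:ld} and then evaluate $\widetilde{G}(h+2,p,q)$ via \cref{lem:g}; your treatment of the middle region (the two consecutive elements of $\Fr_{h+3}$ straddling $m_i$ have mediant $m_i$, hence $a_i+b_i+c_i+d_i=h+4$) is sound and mirrors the paper's argument. However, there are two genuine gaps. First, the corollary allows $h\le p+q-3$, while both \cref{lem:ld} and \cref{lem:g} require $h<p+q-3$; the boundary case $h=p+q-3$ is not covered by your plan at all, and it genuinely misbehaves: there $\frac{p}{q}=l_{p+1}$ lies in $\Fr_{h+3}$, so $\pred_{h+3}(\frac{p}{q})=\frac{p}{q}$ and your best-approximation reasoning breaks down. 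The paper disposes of this case separately, using $L^d(p+q-3,p,q)=pq$ from \cref{fct:reduce} and checking that this agrees with the closed branches of the formula.

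Second, your argument for the outer regions does not establish what is needed. You prove the implication ``middle region implies $a+b+c+d=h+4$'', but for the outer regions you need its converse, ``$a+b+c+d=h+4$ implies middle region'', so that outside the middle region the second branch of \cref{lem:ld} applies; these are different statements, and the converse requires its own proof. Your stated justification --- that on the outer regions ``$\frac{p}{q}$ is no longer strictly between its best approximations'' --- is factually false: whenever $h<p+q-3$, the fraction $\frac{p}{q}$ is not in $\Fr_{h+3}$ and therefore always lies strictly between $\pred_{h+3}(\frac{p}{q})$ and $\succ_{h+3}(\frac{p}{q})$. The missing argument, which the paper supplies, is: if $a+b+c+d=h+4$, then the mediant $\frac{a+c}{b+d}$ is a reduced fraction in $\Fr_{h+4}\setminus\Fr_{h+3}$, hence equals $m_j$ for some $j$, forcing $\frac{a}{b}=\frac{a_j}{b_j}$, $\frac{c}{d}=\frac{c_j}{d_j}$, and $\frac{a_j}{b_j}<\frac{p}{q}<\frac{c_j}{d_j}$; disjointness of the interiors of the intervals $[l_j,l_{j+1}]$ then gives $j=i$, contradicting membership in an outer region. (Your endpoint checks are also off: the middle branch is an open interval, so no consistency between the middle and outer branches at $\frac{p}{q}=\frac{a_i}{b_i}$ or $\frac{p}{q}=\frac{c_i}{d_i}$ is required, and the identity you write in the degenerate case is garbled --- at $\frac{p}{q}=m_i$ one simply has $(h+4-i)\cdot p=i\cdot q$.)
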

 \begin{proof}
First, observe that for $h=p+q-3$, we have $\frac{p}{q}=l_{p+1}$
and $m_p < \frac{p}{q} < m_{p+1}$, so $\frac{c_p}{d_p}\le \frac{p}{q}\le \frac{a_{p+1}}{b_{p+1}}$.
As claimed, $L^d(h,p,q)=(h+4-(p+1))\cdot p= p \cdot q$.

Below, we assume $h < p+q-3$.
Let $\frac{a}{b}=\pred_{h+3}(\frac{p}{q})$ and $\frac{c}{d}=\succ_{h+3}(\frac{p}{q})$.
We shall prove that $a+b+c+d=h+4$ if and only if $\frac{a_i}{b_i} < \frac{p}{q} < \frac{c_i}{d_i}$
for some $i$.

First, suppose that $a+b+c+d=h+4$. This means that $\frac{a+c}{b+d}\in \Fr_{h+4}\setminus \Fr_{h+3}$,
so $\frac{a+c}{b+d}=m_i$ for some $i$, and therefore $\frac{a}{b}=\frac{a_i}{b_i}$ and $\frac{c}{d}=\frac{c_i}{d_i}$.
Consequently, $\frac{a_i}{b_i} < \frac{p}{q} < \frac{c_i}{d_i}$.
In the other direction, $\frac{a_i}{b_i} < \frac{p}{q} < \frac{c_i}{d_i}$
implies $\frac{a}{b}=\frac{a_i}{b_i}$ and $\frac{c}{d}=\frac{c_i}{d_i}$,
so $\frac{a}{b} < m_i < \frac{c}{d}$. By \cref{fct:farey}, this yields $a+b+c+d \le h+4$.
Moreover, $\frac{a+c}{b+d}\notin \Fr_{h+3}$, so $a+b+c+d=h+4$.

Since $G(a_iq,p,q) = a_i + b_i - 1$ and $G(d_i p,p,q ) = c_i + d_i - 1$ by \cref{fct:farey},
we have $a_iq+d_ip=\widetilde{G}(a_i+b_i-2,p,q)+\widetilde{G}(c_i+d_i-2,p,q)$.
Now, \cref{lem:g,lem:ld} yield the final formula.
\end{proof}

\begin{figure}[t]
\begin{center}
\vspace{-.35cm}
\Large
\begin{tikzpicture}[xscale=0.951,
left/.style={latex-,shorten <=0.25cm,shorten >=0.25cm},
right/.style={-latex,shorten <=0.25cm,shorten >=0.25cm},
middle/.style={latex-latex,shorten <=0.25cm,shorten >=0.25cm},
init/.style={shorten <=0.25cm,shorten >=0.25cm},
label/.style={above=0.25cm, font=\small}]

\begin{scope}[xshift=1.6cm]
  \draw (-.8,0) node {$0$};
  \draw (0.4,0) node {$\frac{1}{4}\!=$};
  \draw (1,0) node {$\bf \frac{2}{8}$};
  \draw (2,0) node {$\frac{1}{3}$};
  \draw (3,0) node {$\frac{2}{5}$};
  \draw (4,0) node {$\bf \frac{3}{7}$};
  \draw (5,0) node {$\frac{1}{2}$};
  \draw (6,0) node {$\frac{3}{5}$};
  \draw (7,0) node {$\bf \frac{4}{6}$};
  \draw (8,0) node {$\frac{4}{5}$};
  \draw (9,0) node {$\frac{1}{1}$};
  \draw (9.6,0) node {$=\!\bf\frac{5}{5}$};
  \draw[init] (-.8,0) -- node[label] {$q\!+\!4p$} (0.2,0);
  \draw[left] (1,0) -- node[label] {$8p$} (2,0);
  \draw[middle] (2,0) -- node[label] {$q\!+\!5p$} (3,0);
  \draw[right] (3,0) -- node[label] {$3q$} (4,0);
  \draw[left] (4,0) -- node[label] {$7p$} (5,0);
  \draw[middle] (5,0) -- node[label] {$q\!+\!5p$} (6,0);
  \draw[right] (6,0) -- node[label] {$4q$} (7,0);
  \draw[left] (7,0) -- node[label] {$6p$} (8,0);
  \draw[middle] (8,0) -- node[label] {$4q\!+\!p$} (9,0);
  \end{scope}
  \begin{scope}[yshift=-1.5cm]
    \draw (0,0) node {$0$};
  \draw (1,0) node {$\frac{1}{5}$};
  \draw (2,0) node {$\frac{1}{4}$};
  \draw (3,0) node {$\frac{2}{7}$};
  \draw (4,0) node {$\bf \frac{3}{10}$};
  \draw (5,0) node {$\frac{2}{5}$};
  \draw (6,0) node {$\bf \frac{4}{9}$};
  \draw (7,0) node {$\frac{1}{2}$};
  \draw (8,0) node {$\frac{4}{7}$};
  \draw (9,0) node {$\bf\frac{5}{8}$};
  \draw (10,0) node {$\frac{3}{4}$};
  \draw (11,0) node {$\bf \frac{6}{7}$};
  \draw (12,0) node {$\frac{7}{7}$};
  \draw[init] (0,0) -- node[label] {$q\!+\!6p\!-\!\!1$} (1,0);
  \draw[left] (1,0) -- node[label] {$11p$} (2,0);
  \draw[middle] (2,0) -- node[label] {$q\!+\!7p$} (3,0);
  \draw[right] (3,0) -- node[label] {$3q$} (4,0);
  \draw[left] (4,0) -- node[label] {$10p$} (5,0);
  \draw[right] (5,0) -- node[label] {$4q$} (6,0);
  \draw[left] (6,0) -- node[label] {$9p$} (7,0);
  \draw[middle] (7,0) -- node[label] {$q\!+\!7p$} (8,0);
  \draw[right] (8,0) -- node[label] {$5q$} (9,0);
  \draw[left] (9,0) -- node[label] {$8p$} (10,0);
  \draw[right] (10,0) -- node[label] {$6q$} (11,0);
  \draw[left] (11,0) -- node[label] {$7p$} (12,0);
  
  \end{scope}

  \begin{scope}[yshift=1.8cm, xshift=2cm]
    \draw (0,0) node {$\frac{\bf a}{\bf b}$};
    \draw (2,0) node {$\frac{c}{d}$};
    \draw[left] (0,0) -- node[label,above=0cm] {$b \cdot p$} node[below=0.2cm] {\small left subinterval} (2,0);

    \begin{scope}[xshift=3cm]
    \draw (0,0) node {$\frac{a}{b}$};
    \draw (2,0) node {$\frac{c}{d}$};
    \draw[middle] (0,0) -- node[label,above=0cm] {$a\cdot q + d \cdot p$} node[below=0.2cm] {\small middle subinterval} (2,0);
    \end{scope}

    \begin{scope}[xshift=6cm]
    \draw (0,0) node {$\frac{a}{b}$};
    \draw (2,0) node {$\frac{\bf c}{\bf d}$};
    \draw[right] (0,0) -- node[label,above=0cm] {$c\cdot q$} node[below=0.2cm] {\small right subinterval} (2,0);
    \end{scope}

  \end{scope}
\end{tikzpicture}
\vspace{-.35cm}
\end{center}

\caption{
Graphical representations of the closed-form formulae for $L(7,p,q)$ (middle)
and $L(10,p,q)$ (below). Compared to $\widetilde{G}(9,p,q)$ and $\widetilde{G}(12,p,q)$, respectively,
an initial subinterval and several middle subintervals are added. 
A general pattern for the left, middle, and right subintervals, 
is presented above. However, the left subinterval $(\tfrac1{5},\tfrac1{4})$ within $L(10,p,q)$ is an exception
because is has been trimmed by the initial interval.}\label{fig:l}
\end{figure}

\begin{theorem}\label{thm:table}
Let $2<p<q$ be relatively prime and let $4<h< p+q-2$.
Suppose that $l_i \le \frac{p}{q} \le l_{i+1}$ and define reduced fractions
 $\frac{a_i}{b_i} = \pred_{h+3}(m_i)$ and $\frac{c_i}{d_i}=\succ_{h+3}(m_i)$.
 Then:
$$L(h,p,q)=
\begin{cases}
\ceil{\tfrac{h+1}{2}} p + q - (h+1)\bmod 2 & \text{if}\ 0<\frac{p}{q}< 1/\ceil{\tfrac{h}{2}} \ \text{else}\\
(h+4-i) \cdot p & \text{if } l_i \le \frac{p}{q} \le \frac{a_i}{b_i},\\
a_iq + d_i p & \text{if }\frac{a_i}{b_i} < \frac{p}{q} < \frac{c_i}{d_i},\\
i \cdot q & \text{if } \frac{c_i}{d_i} \le \frac{p}{q} \le l_{i+1}.
\end{cases}
$$
This compact representation of $L(h,p,q)$ (see \cref{fig:l} for an example)
for a given $h$ has size $\Oh(h)$ and can be computed in time $\Oh(h \log h)$.
\end{theorem}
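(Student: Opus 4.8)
The plan is to assemble the formula directly from the characterization of $L$ in \cref{thm:wr}, combined with the closed form for $L^d$ in \cref{cor:main1} and for $L^s$ in \cref{lem:ls}. Since we assume $h>4$, the exceptional clause ``$h=4$ and $q/p<\tfrac32$'' in \cref{thm:wr} never triggers, so that characterization collapses to the clean dichotomy
$$L(h,p,q)=\begin{cases}L^s(h,p,q)&\text{if }\tfrac{p}{q}<\tfrac{1}{\ceil{h/2}},\\ L^d(h,p,q)&\text{otherwise.}\end{cases}$$
Substituting \cref{lem:ls} into the first branch yields exactly the first line of the claimed formula, while the hypothesis $h<p+q-2$ (i.e.\ $h\le p+q-3$) makes \cref{cor:main1} applicable, and its three cases reproduce the remaining three lines verbatim. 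Because the first branch is tested first (the ``else''), for $\tfrac{p}{q}\ge\tfrac{1}{\ceil{h/2}}$ we may apply \cref{cor:main1} unchanged, so no piece is double-covered and the formula follows.

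The one point requiring care --- and the main (modest) obstacle --- is the interaction between the initial interval $(0,\tfrac{1}{\ceil{h/2}})$ coming from $L^s$ and the subinterval structure of $L^d$. The threshold $\tfrac{1}{\ceil{h/2}}$ need not coincide with any breakpoint $l_i$, $\tfrac{a_i}{b_i}$, $m_i$, or $\tfrac{c_i}{d_i}$; in general it falls in the interior of a left, middle, or right subinterval of a single index $i_0$ (for instance, for $h=10$ the threshold $\tfrac15$ lies strictly inside the left subinterval $[l_3,\tfrac{a_3}{b_3}]=[\tfrac{2}{11},\tfrac14]$; see \cref{fig:l}). Consequently, when materializing the piecewise description we discard every $L^d$-subinterval lying entirely to the left of $\tfrac{1}{\ceil{h/2}}$ and trim the unique subinterval straddling it by replacing its left endpoint with $\tfrac{1}{\ceil{h/2}}$; the initial $L^s$-interval then fills $(0,\tfrac{1}{\ceil{h/2}})$. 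Since $L=L^s$ exactly on this initial range and $L=L^d$ beyond it, the resulting description is correct, including at the shared endpoint $\tfrac{p}{q}=\tfrac{1}{\ceil{h/2}}$, where the $L^d$ value applies.

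It remains to bound the size and the running time. The breakpoints $l_i=\tfrac{i-1}{h+4-i}$ are strictly increasing (the numerator of $l_{i+1}-l_i$ equals $h+3>0$), and $l_i<1$ forces $i<\tfrac{h+5}{2}$; hence only $\Oh(h)$ indices $i$ contribute an interval $[l_i,l_{i+1}]\subseteq[0,1]$, each carrying at most three subintervals, so the representation has size $\Oh(h)$. To build it we iterate over these $\Oh(h)$ indices: for each we compute $m_i=\tfrac{i}{h+4-i}$ in constant time and then obtain $\tfrac{a_i}{b_i}=\pred_{h+3}(m_i)$ and $\tfrac{c_i}{d_i}=\succ_{h+3}(m_i)$ in $\Oh(\log h)$ time each via the fast continued-fraction algorithm of \cref{cor:fast_farey}. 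The trimming against the threshold described above is $\Oh(1)$ per generated subinterval. Summing over all indices gives the claimed $\Oh(h\log h)$ total time with $\Oh(h)$ output size.
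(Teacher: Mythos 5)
Your proof is correct and takes essentially the same route as the paper's, which likewise obtains the formula by combining \cref{thm:wr} (with the $h=4$ exception ruled out by $h>4$) with \cref{lem:ls} and \cref{cor:main1}, and invokes \cref{cor:fast_farey} over the $\Oh(h)$ indices for the $\Oh(h\log h)$ bound. Your added discussion of trimming the subinterval straddling $1/\ceil{h/2}$ is a point the paper only acknowledges in the caption of \cref{fig:l}, so it is a welcome elaboration rather than a deviation.
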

\begin{proof}
The formula follows from the formulae for $L^s$ (\cref{lem:ls}) and $L^d$ (\cref{cor:main1})
combined using \cref{thm:wr}.
To compute the table for $L$ efficiently, we determine $\frac{a_i}{b_i}=\pred_{h+3}(m_i)$ and $\frac{c_i}{d_i}=\succ_{h+3}(m_i)$ using 
\cref{cor:fast_farey}.
\end{proof}

\section{Relation to Standard Sturmian Words}
For a finite \emph{directive sequence} $\gamma=(\gamma_1,\ldots,\gamma_m)$ of positive integers,
a Sturmian word $\St(\gamma)$ is recursively defined as $X_m$, where
$X_{-1}=\q$, $X_0 = \p$, and $X_i = X_{i-1}^{\gamma_i}X_{i-2}$ for $1\le i \le m$; see~\cite[Chapter~2]{Lothaire2002}.
We classify directive sequences $\gamma$ (and the Sturmian words $\St(\gamma)$) into \emph{even} and \emph{odd} 
based on the \emph{parity} of $m$.
\begin{observation}
Odd Sturmian words of length at least 2 end with $\p\q$, while even Sturmian words of length at least 2 end with $\q\p$.
\end{observation}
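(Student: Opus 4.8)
The plan is to prove, by induction on $i$, the slightly more uniform statement that for every $i\ge 1$ the length-$2$ suffix of $X_i$ equals $\p\q$ when $i$ is odd and $\q\p$ when $i$ is even. The Observation is then just the instance $i=m$, which is meaningful precisely when $|X_m|\ge 2$; since $|X_m|\ge 2$ for all $m\ge 1$, this covers every odd Sturmian word and every even Sturmian word except the degenerate $\St()=X_0=\p$ excluded by the length hypothesis.

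First I would isolate the driving fact behind the recurrence: because $X_i=X_{i-1}^{\gamma_i}X_{i-2}$ with each $\gamma_i\ge 1$, the word $X_i$ \emph{ends} with $X_{i-2}$. A trivial preliminary induction off the base cases $X_{-1}=\q$ and $X_0=\p$ then shows that the final symbol of $X_i$ is $\q$ for odd $i$ and $\p$ for even $i$. More to the point, whenever $|X_{i-2}|\ge 2$ the entire length-$2$ suffix of $X_i$ coincides with that of $X_{i-2}$, and since $i$ and $i-2$ share the same parity, the inductive claim propagates verbatim.

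The only place this propagation can fail is when $X_{i-2}$ is too short to supply two symbols, i.e.\ $|X_{i-2}|=1$, which happens exactly for $i\in\{1,2\}$ (as $|X_1|=\gamma_1+1\ge 2$ forces $|X_j|\ge 2$ for all $j\ge 1$). These two cases I would verify by hand, and they anchor the induction: $X_1=\p^{\gamma_1}\q$ ends with $\p\q$ by $\gamma_1\ge 1$, matching odd parity, while $X_2=X_1^{\gamma_2}\p$ ends with $\q\p$, since $X_1$ ends with $\q$ and hence so does $X_1^{\gamma_2}$ (using $\gamma_2\ge 1$), matching even parity. For $i\ge 3$ we have $i-2\ge 1$, so $|X_{i-2}|\ge 2$, and the induction hypothesis for $i-2$ closes the step.

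The main, and essentially only, obstacle is the bookkeeping around these short base cases: one must recognize that the second-to-last symbol of $X_i$ is inherited from $X_{i-2}$ only once $X_{i-2}$ has length at least $2$, and must otherwise be read off from the last symbol of $X_{i-1}^{\gamma_i}$. Once this distinction is drawn and the cases $i=1,2$ are checked, the remainder is a routine two-step induction.
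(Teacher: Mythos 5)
Your proof is correct: the two-step induction on $i$, anchored by the hand-checked cases $X_1=\p^{\gamma_1}\q$ and $X_2=X_1^{\gamma_2}\p$ and propagated via the fact that $X_i=X_{i-1}^{\gamma_i}X_{i-2}$ ends with $X_{i-2}$ (of the same parity), is exactly the argument the paper's recursive definition invites, which is why the paper states this as an \emph{Observation} without proof. Your careful handling of the short cases $|X_{i-2}|=1$ (i.e.\ $i\in\{1,2\}$) and of the degenerate word $\St()=\p$ excluded by the length hypothesis fills in precisely the bookkeeping the paper leaves implicit.
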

For a directive sequence $\gamma=(\gamma_1,\ldots,\gamma_m)$,
we define $\fr(\gamma)=[0;\gamma_1,\ldots,\gamma_m]$.
\begin{fact}[{\cite[Proposition 2.2.24]{Lothaire2002}}]
If $\fr(\gamma)=\frac{p}{q}$,
then $\St(\gamma)$ contains $p$ characters $\q$ and $q$ characters $\p$.
\end{fact}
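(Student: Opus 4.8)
The plan is to prove the claim by induction on the recursive construction of $\St(\gamma)$, identifying the two letter counts with the numerators and denominators of the convergents of $[0;\gamma_1,\ldots,\gamma_m]$. For each index $i\in\{-1,0,\ldots,m\}$, I would let $P_i$ and $Q_i$ denote the number of occurrences of $\p$ and of $\q$ in $X_i$, respectively, and track the pair $(P_i,Q_i)$ through the recursion.

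First I would record the base cases directly from the definition: since $X_{-1}=\q$ and $X_0=\p$, we have $(P_{-1},Q_{-1})=(0,1)$ and $(P_0,Q_0)=(1,0)$. Next, because $X_i=X_{i-1}^{\gamma_i}X_{i-2}$ is the concatenation of $\gamma_i$ copies of $X_{i-1}$ with a single copy of $X_{i-2}$, and occurrence counts are additive under concatenation, the counts satisfy the linear recurrences
$$P_i=\gamma_i P_{i-1}+P_{i-2}, \qquad Q_i=\gamma_i Q_{i-1}+Q_{i-2} \qquad (1\le i\le m).$$

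The key observation is that these are precisely the recurrences generating the denominators and numerators of the convergents of $[0;\gamma_1,\ldots,\gamma_m]$. Concretely, starting the standard convergent recursion from the formal values $\tfrac{1}{0}$ and $\tfrac{0}{1}$ yields numerators $N_{-1}=1,\,N_0=0$ and denominators $D_{-1}=0,\,D_0=1$, which coincide exactly with $(Q_{-1},Q_0)=(1,0)$ and $(P_{-1},P_0)=(0,1)$; both sequences then obey the same two-term recurrence. A routine induction therefore gives $Q_i=N_i$ and $P_i=D_i$, so that $\tfrac{Q_i}{P_i}=[0;\gamma_1,\ldots,\gamma_i]$ is the $i$-th convergent for every $i\ge 1$. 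Taking $i=m$ yields $\tfrac{Q_m}{P_m}=\fr(\gamma)=\tfrac{p}{q}$. Since consecutive convergents satisfy $N_iD_{i-1}-N_{i-1}D_i=\pm1$, each convergent is a reduced fraction, so $\gcd(Q_m,P_m)=1$; as $\tfrac{p}{q}$ is likewise in lowest terms, we conclude $Q_m=p$ and $P_m=q$, i.e.\ $\St(\gamma)=X_m$ contains exactly $p$ letters $\q$ and $q$ letters $\p$.

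There is no deep obstacle here; the only point requiring care is the bookkeeping of the two base indices $-1$ and $0$, namely checking that the word recurrence and the convergent recurrence are seeded identically (with $\q$-counts aligned to numerators and $\p$-counts to denominators). Once this alignment is fixed, the inductive step is immediate, and the reducedness of convergents, a standard fact from the theory of continued fractions (see \cite{Khinchin}) already used in \cref{sec:fast}, supplies the final identification $(Q_m,P_m)=(p,q)$.
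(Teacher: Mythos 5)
The paper never proves this fact: it is imported directly from Lothaire (Proposition 2.2.24) with a citation and no argument, so there is no in-paper proof to compare against. Your proposal supplies a correct, self-contained proof, and it is the standard one for this classical statement. The letter-count recurrences $P_i=\gamma_i P_{i-1}+P_{i-2}$ and $Q_i=\gamma_i Q_{i-1}+Q_{i-2}$ do follow from additivity of occurrence counts under concatenation; your seeds $(P_{-1},Q_{-1})=(0,1)$ and $(P_0,Q_0)=(1,0)$ align exactly with the Wallis--Euler seeds for the convergents of $[0;\gamma_1,\ldots,\gamma_m]$ (with $a_0=0$ giving the zeroth convergent $\tfrac{0}{1}$), so the induction $Q_i=N_i$, $P_i=D_i$ goes through; and the determinant identity $N_iD_{i-1}-N_{i-1}D_i=\pm 1$ indeed forces $\gcd(Q_m,P_m)=1$, which together with the (implicit, but consistent with the paper's usage) assumption that $\tfrac{p}{q}$ is in lowest terms yields $(Q_m,P_m)=(p,q)$. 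A sanity check against the paper's example confirms the bookkeeping: for $\gamma=(1,2,2)$ one gets $(P_i,Q_i)=(1,1),(3,2),(7,5)$, matching $\St(1,2,2)=\p\q\p\q\p\p\q\p\q\p\p\q$ and the convergents $\tfrac{1}{1},\tfrac{2}{3},\tfrac{5}{7}$ of $[0;1,2,2]$. What your approach buys over the paper's bare citation is a proof that uses only the recursive definition of $\St(\gamma)$ and standard continued-fraction facts already invoked in Section 5, making the statement verifiable without consulting Lothaire; the one point you could make fully explicit is that the statement tacitly takes $\tfrac{p}{q}$ reduced, which is how the paper uses it throughout.
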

\begin{example}
We have $\frac{5}{7}=[0;1,2,2]=[0;1,2,1,1]$, so the Sturmian words with 5 $\q$'s and 7 $\p$'s are:
$\St(1,2,2)=\p\q\p\q\p\p\q\p\q\p\p\q$ and $\St(1,2,1,1)=\p\q\p\q\p\p\q\p\q\p\q\p$.
\end{example}
For relatively prime integers $1< p<q$, we define $\St_{p,q}$ as a Sturmian word with $\fr(\gamma)=\frac{p}{q}$.
Note that we always have two possibilities for $\St_{p,q}$ (one odd and one even), but they differ in the last two positions only.
In fact, the first $p+q-2$ characters of $\St_{p,q}$ are closely related to the values $\widetilde{G}(i,p,q)$.
\begin{fact}[{\cite[Proposition 2.2.15]{Lothaire2002}}]
Let $1< p<q$ be relatively prime integers.
If $i\le p+q-3$, then
$$\St_{p,q}[i] = \begin{cases}
\p & \text{if }p \mid \widetilde{G}(i,p,q),\\
\q & \text{if }q \mid \widetilde{G}(i,p,q).
\end{cases}
$$\end{fact}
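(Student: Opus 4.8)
The final statement is \cref{fct:farey}'s cousin: we must show that for relatively prime $1<p<q$ and $i\le p+q-3$, the $i$-th letter of the standard Sturmian word $\St_{p,q}$ is $\p$ exactly when $p\mid\widetilde{G}(i,p,q)$ and $\q$ exactly when $q\mid\widetilde{G}(i,p,q)$. The plan is to connect the combinatorial structure of the standard word to the arithmetic of $\widetilde{G}$ via the well-known \emph{mechanical} (cutting-sequence) description of Sturmian words. Since the paper attributes this to Lothaire, I would present a self-contained argument grounded in the two facts already available: the characterization of $\widetilde{G}$ through Farey/semiconvergent structure (\cref{fct:farey}) and the recursive definition of $\St(\gamma)$.

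First I would recall that $\widetilde{G}(i,p,q)$ is, by definition, the $(i+1)$-st smallest positive integer that is a multiple of $p$ or of $q$ but not of both (equivalently, not of $pq$); since $\gcd(p,q)=1$ and $i\le p+q-3$, all these values stay below $pq$, so each is a multiple of exactly one of $p,q$, and the dichotomy $p\mid\widetilde{G}(i,p,q)$ versus $q\mid\widetilde{G}(i,p,q)$ is genuinely exhaustive and mutually exclusive. Thus the map $i\mapsto\St_{p,q}[i]$ defined by the right-hand side is well-defined, and the $\St_{p,q}[i]=\q$ positions are precisely the indices $i$ for which the $(i+1)$-st element of the merged sorted multiset of multiples is a multiple of $q$. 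Counting: among the first $i+1$ such values, the number that are multiples of $q$ equals $\floor{\widetilde{G}(i,p,q)/q}$ and the number that are multiples of $p$ equals $\floor{\widetilde{G}(i,p,q)/p}$, which lets one translate ``which divisor'' into a balance/slope condition $\approx\frac{p}{q}$.

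The core of the argument is to show this arithmetic labelling coincides with the standard-word letters. I would do this by showing both sequences satisfy the same mechanical (Beatty-type) description on the range $0\le i\le p+q-3$: specifically that $\St_{p,q}[i]=\q$ iff the $i$-th step in interleaving the sorted multiples of $p$ and $q$ picks a $q$-multiple, which is governed by comparing consecutive candidates $\ceil{\cdot}q$ versus $\ceil{\cdot}p$. Concretely, if $\widetilde{G}(i,p,q)=jq$ then \cref{lem:g} pins $\frac{p}{q}$ to the interval $[m_j,l_{j+1}]$, while $\widetilde{G}(i,p,q)=jp$ pins it to $[l_{h+4-j},m_{h+4-j}]$; these are exactly the interval boundaries that the standard word's slope condition reads off, so matching the two amounts to verifying that the greedy merge of the two arithmetic progressions reproduces the $X_i=X_{i-1}^{\gamma_i}X_{i-2}$ recursion. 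I would therefore induct on the continued-fraction expansion $\frac{p}{q}=[0;\gamma_1,\dots,\gamma_m]$: the base case is a single block $\p^{\gamma_1-1}\q$ (or its reversal) corresponding to the first $\gamma_1$ multiples being $p,2p,\dots$ before the first multiple of $q$ appears, and the inductive step uses that appending the next partial quotient $\gamma_{k+1}$ corresponds to the next best approximation furnished by \cref{fct:farey}, which relabels exactly one more block of $p$- and $q$-multiples consistently with the standard-word recursion.

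The main obstacle I anticipate is the bookkeeping at the \emph{junctions} between the $p$-labelled and $q$-labelled blocks: the standard word's recursion $X_i=X_{i-1}^{\gamma_i}X_{i-2}$ glues blocks in a way whose boundary letters must match the arithmetic merge exactly, and off-by-one effects at the convergents are precisely where the restriction $i\le p+q-3$ matters (beyond $p+q-2$ one reaches $pq$ and the ``multiple of both'' case breaks the dichotomy). I would control this by invoking \cref{fct:farey}(d), which says the numerators/denominators of the best approximations are exactly the prefix-maxima positions of $(iq\bmod p)$ and $(ip\bmod q)$; these prefix maxima are the indices at which the merge switches which progression is ``ahead,'' i.e.\ the block boundaries, so they align automatically with the semiconvergent structure that generates $\St_{p,q}$ through its directive sequence. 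Verifying this alignment at the two endpoints of each block is the one genuinely delicate computation, but it is a finite check per convergent and follows from the strict inequalities in \cref{fct:farey}(d).
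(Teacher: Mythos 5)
The paper gives no proof of this fact at all --- it is imported verbatim from Lothaire~\cite[Prop.~2.2.15]{Lothaire2002} --- so your attempt can only be judged on its own merits; in outline it follows the classical route, namely the equivalence between standard words (defined by the recursion $X_k=X_{k-1}^{\gamma_k}X_{k-2}$) and mechanical/cutting-sequence words, which is indeed the content of the cited proposition. Your first two paragraphs are sound: $\widetilde{G}(i,p,q)$ is the $(i+1)$-st element of the sorted merge of the multiples of $p$ and of $q$, and for $i\le p+q-3$ all these elements lie below $pq$, so exactly one of $p,q$ divides each of them and the case distinction in the statement is exhaustive and exclusive.

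The genuine gap is in your third and fourth paragraphs: the induction on the continued-fraction expansion is announced but never carried out. The inductive step --- that appending the next partial quotient ``relabels exactly one more block of $p$- and $q$-multiples consistently with the standard-word recursion'' --- \emph{is} the theorem; nothing in \cref{fct:farey}(d) performs this gluing by itself, since the prefix-maxima characterization locates the (semi)convergents but says nothing about the merge word obeying $X_k=X_{k-1}^{\gamma_k}X_{k-2}$. To close the induction you would need to state and prove an explicit invariant, for instance that for each convergent $\frac{p_k}{q_k}$ of $\frac{p}{q}$ the merge word of the multiples of $p_k$ and $q_k$ (truncated before $p_kq_k$) equals $X_k$ up to its last two letters, together with prefix-compatibility between consecutive convergents in the spirit of \cref{fct:pref}; deferring this to ``a finite check per convergent'' is not a proof, because without a stated induction hypothesis it is not even determined what is being checked, and that check carries all the content. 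A concrete symptom of the missing bookkeeping is that your base case is already off by one: since $\gamma_1=\floor{q/p}$, the merge begins with $\p^{\floor{q/p}}\q=\p^{\gamma_1}\q=X_1$, not $\p^{\gamma_1-1}\q$ --- exactly the junction arithmetic you flagged as delicate but did not do.
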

As a result, the values $\widetilde{G}(i,p,q)$ can be derived from $\St_{p,q}$; see~\cref{tab:sturmian}.

\begin{table}[t]
  \begin{center}
    \begin{tabular}{rp{0.5cm}p{0.5cm}p{0.5cm}p{0.5cm}p{0.5cm}p{0.5cm}p{0.5cm}p{0.5cm}p{0.5cm}p{0.5cm}p{.7cm}p{.7cm}}
    \small
       & 0 & 1 & 2 & 3 & 4 & 5 & 6 & 7 & 8 & 9 & 10 & 11\\
    \normalsize
      $\St_{p,q}[i]\;\;$ & $\p$ & $\q$ & $\p$ & $\q$ & $\p$ & $\p$ & $\q$ & $\p$ & $\q$ & $\p$ & $\p/\q$ & $\q/\p$\\
      $\widetilde{G}(i,p,q)\;\;$ & $p$ & $q$ & $2p$ & $2q$ & $3p$ & $4p$ & $3q$ & $5p$ & $4q$ & $6p$ &  & \\
      $\widetilde{G}(i,p,q)\;\;$ & $5$ & $7$ & $10$ & $14$ & $15$ & $20$ & $21$ & $25$ & $28$ & $30$ &  & 
    \end{tabular}
  \end{center}\caption{The Sturmian words $\St_{p,q}$ for $p=5$ and $q=7$ and the corresponding values of $\widetilde{G}(i,p,q)$
  for $i<p+q-2$.}\label{tab:sturmian}
\end{table}

\begin{fact}[{\cite[Exercise 2.2.9]{Lothaire2002}}]\label{fct:pref}
$\St(\gamma'_0,\ldots,\gamma'_{m'})$ is a prefix of $\St(\gamma)$
if and only if $[0;\gamma'_0,\ldots,\gamma'_{m'}]$ is a semiconvergent of $\fr(\gamma)$.
\end{fact}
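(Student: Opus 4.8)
The plan is to prove the two implications separately, using the recursion $X_i=X_{i-1}^{\gamma_i}X_{i-2}$ for the ``semiconvergent $\Rightarrow$ prefix'' direction and the Sturmian/Farey facts recorded above for the converse. Throughout I read ``prefix'' as \emph{proper} prefix, since $\St(\gamma)$ is trivially a prefix of itself whereas $\fr(\gamma)$ is not among its own semiconvergents. First I would observe that every semiconvergent of $\fr(\gamma)=[0;\gamma_1,\ldots,\gamma_m]$ is either a convergent $[0;\gamma_1,\ldots,\gamma_j]$, whose standard word is $X_j$, or an intermediate fraction $[0;\gamma_1,\ldots,\gamma_{j-1},t]$ with $1\le t<\gamma_j$, whose standard word is $X_{j-1}^{t}X_{j-2}$.

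To show these are prefixes of $\St(\gamma)=X_m$, I would first establish by induction that $X_{j-2}$ is a prefix of $X_{j-1}$ for every $j\ge 2$ (indeed $X_{j-1}=X_{j-2}^{\gamma_{j-1}}X_{j-3}$ begins with $X_{j-2}$), which yields the chain $X_0\preceq X_1\preceq\cdots\preceq X_m$ and, more generally, that $X_{j-1}^{t}X_{j-2}$ is a prefix of $X_{j-1}^{\gamma_j}X_{j-2}=X_j\preceq X_m$. The one place where this argument breaks is the first level $j=1$, where the base word $X_{-1}=\q$ is \emph{not} a prefix of $X_0=\p$; the endpoints $\tfrac{1}{0},\tfrac{0}{1}$ together with the first-level intermediate fractions $[0;t]$, $t<\gamma_1$, must therefore be handled by direct inspection of the leading block $\p^{\gamma_1}\q$ of $\St(\gamma)$.

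For the converse I would use that a standard word is determined by its slope up to its last two letters: the number of $\p$'s and $\q$'s is fixed by $\fr$, the first $p'+q'-2$ letters of $\St_{p',q'}$ are pinned down by the recorded formula $\St_{p',q'}[i]=\p\iff p'\mid\widetilde{G}(i,p',q')$, and the final two letters are $\p\q$ or $\q\p$ according as the directive sequence is odd or even. So if $\St(\delta)=\St_{p',q'}$ is a proper prefix of $\St(\gamma)=\St_{p,q}$, then comparing letters through position $p'+q'-1$ and applying the same formula to $\St_{p,q}$ turns the prefix condition into the assertion that the interleaving pattern of the multiples of $p'$ and $q'$ (the values $\widetilde{G}(\cdot,p',q')$) agrees with the initial segment of the interleaving pattern for $p,q$, with the side determined by the parity forced on $\delta$ by its last two letters. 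I would then identify this with condition~(d) of \cref{fct:farey}: both the denominator identity ($q'=\floor{p'q/p}+1$ in the left case, symmetrically on the right) and the prefix-maximum condition on $(iq\bmod p)$ at $i=p'$ (left) or on $(ip\bmod q)$ at $i=q'$ (right), which is exactly what characterizes $\tfrac{p'}{q'}$ as an even (left) or odd (right) semiconvergent of $\tfrac{p}{q}$.

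I expect this converse to be the crux: converting ``the two letter sequences agree'' into the arithmetic prefix-maximum condition of \cref{fct:farey}(d) needs careful tracking of the ranks of the multiples in the merged order (the multiple $kq$ occupies position $k-1+\floor{kq/p}$, and symmetrically $jp$ occupies $j-1+\floor{jp/q}$), together with the observation that the first position at which $\St_{p',q'}$ and $\St_{p,q}$ could disagree is precisely the first place where $\tfrac{p'}{q'}$ would fail the maximum test. The remaining difficulties are boundary bookkeeping: the endpoint $\tfrac{1}{0}$, whose ``standard word'' $\q=X_{-1}$ is not of the form $\St(\delta)$; the degenerate slope $1$ and the first partial quotient; and the top value $\fr(\gamma)$, all of which I would settle by direct computation rather than the general argument.
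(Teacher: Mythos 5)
You should first note that the paper itself offers no proof of \cref{fct:pref}: it is imported verbatim from Lothaire (Exercise 2.2.9), so your proposal has to stand entirely on its own merits, and as written it does not. Your forward direction (the induction giving $X_0\preceq X_1\preceq\cdots\preceq X_m$ and $X_{j-1}^{t}X_{j-2}\preceq X_j$) is the standard argument and is correct for every semiconvergent $[0;\gamma_1,\ldots,\gamma_{j-1},t]$ with $j\ge 2$. But the level-one case, which you defer to ``direct inspection of the leading block $\p^{\gamma_1}\q$,'' is not routine bookkeeping: inspection \emph{refutes} the implication there. Under the paper's definition of semiconvergents (taking $m'=1$), $[0;t]=\tfrac{1}{t}$ with $0<t<\gamma_1$ is a semiconvergent of $\fr(\gamma)$ — indeed it is a best right approximation, consistent with \cref{fct:farey} — yet $\St(t)=\p^{t}\q$ disagrees with $\St(\gamma)$ at position $t$, where $\St(\gamma)$ carries $\p$. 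Concretely, $\tfrac{1}{1}=[0;1]$ is a semiconvergent of $\tfrac{2}{5}=[0;2,2]$, but $\St(1)=\p\q$ is not a prefix of $\St(2,2)=\p\p\q\p\p\q\p$. So with this paper's conventions the ``if'' direction is false whenever $\gamma_1\ge 2$, and no direct computation can rescue it; a correct treatment must either exclude the level-one intermediate fractions (equivalently restrict to $\gamma_1=1$, i.e.\ $\tfrac12<\fr(\gamma)<1$) or pass to Lothaire's normalization of directive sequences, under which the exercise is literally true. Detecting this mismatch between the two conventions is the one genuinely delicate point in this direction, and your proposal walks past it.

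Your converse (prefix $\Rightarrow$ semiconvergent) is, by your own admission, a plan rather than a proof. The step you flag as the crux — translating letterwise agreement of $\St_{p',q'}$ with $\St_{p,q}$ (where the positional formula controls $\St_{p',q'}[i]$ only for $i\le p'+q'-3$, so the last two letters need the separate parity argument you sketch) into condition (d) of \cref{fct:farey}, i.e.\ $b=\floor{\frac{aq}{p}}+1$ together with the prefix-maximum property of $(iq\bmod p)$ — is asserted, not carried out, and that is exactly where all the work lies: one must actually prove that the first disagreement between the two merged sequences of multiples occurs precisely at the first failure of the maximum test, rather than observe that the two conditions ``should'' line up. Until that equivalence is established, and the boundary objects are genuinely settled (the endpoint $\tfrac{1}{0}$, whose word $\q=X_{-1}$ is not of the form $\St(\delta)$, and the degenerate level-one slopes discussed above), the proposal proves neither implication in full.
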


\begin{example}
The semiconvergents of $[0;1,2,2]=\frac{5}{7}=[0;1,2,1,1]$ are $[0;1,2,1]=\frac{3}{4}$, $[0;1,2]=\frac{2}{3}$, $[0;1,1]=\frac{1}{2}$, $[0;1]=1$,
 $[0;]=\frac{0}{1}$ (and $\frac{1}{0}$).
They correspond to the following Sturmian prefixes of $\St(1,2,2)=\p\q\p\q\p\p\q\p\q\p\p\q$:
$\St(1,2,1)=\p\q\p\q\p\p\p\q$, $\St(1,2)=\p\q\p\q\p$, $\St(1,1)=\p\q\p$, $\St(1)=\p\q$, and $\St()=\p$.
\end{example}

\begin{corollary}\label{cor:pred}
Consider a proper prefix $P$ of Sturmian word $\St(\gamma)$.
Moreover, let $\frac{a}{b}=\pred_{|P|}(\fr(\gamma))$ and $\frac{c}{d}=\succ_{|P|}(\fr(\gamma))$.
The longest even Sturmian prefix of $P$ has length $a+b$,
whereas the longest odd Sturmian prefix of $P$ has length $c+d$.
\end{corollary}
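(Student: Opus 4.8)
The plan is to translate the combinatorial statement about Sturmian prefixes into a purely arithmetic statement about semiconvergents, and then read off the lengths directly from the definitions of $\pred$ and $\succ$. First I would record the basic dictionary. Since $P$ is a prefix of $\St(\gamma)$, every prefix of $P$ is a prefix of $\St(\gamma)$, so an even Sturmian word $W$ is a prefix of $P$ exactly when $W=\St(\gamma')$ for an even directive sequence $\gamma'$ with $\fr(\gamma')$ a semiconvergent of $\fr(\gamma)$ (by \cref{fct:pref}) and $|W|\le|P|$. Using the even/odd classification together with the fact that $\St(\gamma')$ has $a'$ letters $\q$ and $b'$ letters $\p$ when $\fr(\gamma')=\frac{a'}{b'}$, such $W$ are in bijection with the \emph{even} semiconvergents $\frac{a'}{b'}$ of $\fr(\gamma)$ satisfying $a'+b'\le|P|$, and then $|W|=a'+b'$. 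By the equivalence (b)$\Leftrightarrow$(c) of \cref{fct:farey}, the even semiconvergents are precisely the best left approximations of $\fr(\gamma)$; the mirror statement identifies odd Sturmian prefixes with best right approximations. Properness of $P$ guarantees $|P|<p+q$, so $\fr(\gamma)\notin\Fr_{|P|}$ and all approximations in play are strict.

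With this dictionary, I would show that the longest even Sturmian prefix of $P$ has length exactly $a+b$, where $\frac{a}{b}=\pred_{|P|}(\fr(\gamma))$. For the lower bound, $\frac{a}{b}$ is a best left approximation by definition, hence an even semiconvergent, and it lies in $\Fr_{|P|}$, so $a+b\le|P|$; thus the corresponding even Sturmian word is a prefix of $P$ of length $a+b$. For the matching upper bound, take any even semiconvergent $\frac{a'}{b'}$ with $a'+b'\le|P|$; then $\frac{a'}{b'}\le\fr(\gamma)$ and $\frac{a'}{b'}\in\Fr_{|P|}$, so by the very definition of $\pred$ we get $\frac{a'}{b'}\le\pred_{|P|}(\fr(\gamma))=\frac{a}{b}$. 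It then remains to convert this inequality between \emph{values} into an inequality between \emph{lengths}, namely $a'+b'\le a+b$.

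This last conversion is the one genuine obstacle, and I expect \cref{fct:farey}(d) to supply exactly what is needed: a best left approximation is determined by its numerator through $b=\floor{\frac{aq}{p}}+1$, and the admissible numerators are the indices at which $aq\bmod p$ attains a new prefix maximum, so both $a$ and $b$ increase strictly along the chain of best left approximations. Hence larger value forces larger sum, and $\frac{a'}{b'}\le\frac{a}{b}$ yields $a'+b'\le a+b$, which completes the even case. Finally I would run the symmetric argument for the odd case: odd Sturmian prefixes of $P$ correspond to best right approximations with sum $\le|P|$, and the longest one is the best right approximation of largest sum not exceeding $|P|$ — equivalently, the one closest to $\fr(\gamma)$ from above. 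By the symmetric form of \cref{fct:farey}(d) (where $c=\floor{\frac{dp}{q}}+1$ and larger denominator means smaller value but larger sum), this is precisely $\succ_{|P|}(\fr(\gamma))=\frac{c}{d}$, so the longest odd Sturmian prefix has length $c+d$, as claimed.
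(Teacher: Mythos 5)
Your proof is correct and follows essentially the same route as the paper's: translate Sturmian prefixes of $P$ into semiconvergents of $\fr(\gamma)$ via \cref{fct:pref}, identify even (resp.\ odd) semiconvergents with best left (resp.\ right) approximations via \cref{fct:farey}, and handle the odd case by symmetry. The only difference is that you explicitly verify that value order coincides with length order $a+b$ along the chain of best approximations (via \cref{fct:farey}(d)), a monotonicity point the paper's proof treats as immediate from the definition of $\pred_{|P|}$.
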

\begin{proof}
By \cref{fct:pref}, the longest even Sturmian prefix of $P$ is the longest Sturmian word $\St(\gamma')$
such that $\frac{a'}{b'}:=\fr(\gamma')$ is an even semiconvergent of $\fr(\gamma)$.
Its length $a'+b'\le |P|$ is largest possible, so by \cref{fct:farey} $\frac{a'}{b'}$ is
the  best left approximation of $\fr(\gamma)$ with $a'+b'\le |P|$.
This is precisely how $\frac{a}{b}=\pred_{|P|}(\fr(\gamma))$ is defined.

The proof for odd Sturmian prefixes is symmetric.
\end{proof}
The following theorem can be seen as a restatement of \cref{lem:ld} in terms of Sturmian words.
\begin{theorem}\label{thm:sturm}
  Let $\St_{p,q}$ be a standard Sturmian word corresponding to $\frac{p}{q}$ and let $0\le h< p+q-3$.
  If $\St_{p,q}[0..h+3]$ is a Sturmian word, then
  $L^d(h,p,q)=\widetilde{G}(l-2,p,q)+\widetilde{G}(r-2,p,q)$,
  where $l,r$ are the lengths of the longest proper Sturmian prefixes of $\St_{p,q}[0..h+3]$ of different parities,
  and $\widetilde{G}(-1,p,q)=0$.
Otherwise,  $L^d(h,p,q)=\widetilde{G}(h+2,p,q)$. 
\end{theorem}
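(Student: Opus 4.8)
The plan is to obtain \cref{thm:sturm} directly from \cref{lem:ld} by translating the two best approximations $\frac ab=\pred_{h+3}(\frac pq)$ and $\frac cd=\succ_{h+3}(\frac pq)$ occurring there into statements about Sturmian prefixes through \cref{cor:pred,fct:pref}. First I would set $P=\St_{p,q}[0..h+2]$, the length-$(h+3)$ prefix of $\St_{p,q}$; since $h+3<p+q=|\St_{p,q}|$, this is a proper prefix. The proper prefixes of $\St_{p,q}[0..h+3]$ are exactly the prefixes of $P$, so the longest proper even (resp.\ odd) Sturmian prefix of $\St_{p,q}[0..h+3]$ coincides with the longest even (resp.\ odd) Sturmian prefix of $P$. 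By \cref{cor:pred} these have lengths $a+b$ and $c+d$, respectively. As $\frac ab$ is an even and $\frac cd$ an odd semiconvergent of $\frac pq$ (\cref{fct:farey}), the two lengths $l,r$ of different parities in the statement are precisely $\{a+b,c+d\}$, whence $\widetilde G(l-2,p,q)+\widetilde G(r-2,p,q)=\widetilde G(a+b-2,p,q)+\widetilde G(c+d-2,p,q)$, matching the first branch of \cref{lem:ld}.

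It then remains to match the case distinction, i.e.\ to prove the key equivalence
\[
\St_{p,q}[0..h+3]\text{ is a Sturmian word}\quad\Longleftrightarrow\quad a+b+c+d=h+4 .
\]
By \cref{fct:pref}, the left-hand side says that $h+4$ equals $a'+b'$ for some semiconvergent $\frac{a'}{b'}$ of $\frac pq$. The crux is to show that $a+b+c+d$ is the smallest sum $a'+b'$ of a semiconvergent exceeding $h+3$. For this I would invoke \cref{fct:farey}: the best approximations $\frac ab,\frac cd$ are consecutive Farey neighbours generated by the Farey process for $\frac pq$ (so $bc-ad=1$ and hence $\gcd(a+c,b+d)=1$), and the next fraction the process outputs is their mediant $\frac{a+c}{b+d}$, which is therefore the next semiconvergent and has sum $a+b+c+d$. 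Because the mediant sums produced along the process strictly increase, no semiconvergent has sum strictly between $h+3$ and $a+b+c+d$; consequently $h+4$ is a semiconvergent-sum if and only if $a+b+c+d=h+4$. This is essentially the combinatorial fact already exploited in the proof of \cref{cor:main1}.

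With the equivalence established the theorem follows: in the Sturmian case $a+b+c+d=h+4$ places us in the first branch of \cref{lem:ld}, giving $L^d(h,p,q)=\widetilde G(a+b-2,p,q)+\widetilde G(c+d-2,p,q)=\widetilde G(l-2,p,q)+\widetilde G(r-2,p,q)$; otherwise $a+b+c+d\neq h+4$ places us in the second branch, giving $L^d(h,p,q)=\widetilde G(h+2,p,q)$. I expect the main obstacle to be this equivalence step, together with the careful bookkeeping of degenerate endpoints: when $\frac ab=\frac01$ the even prefix has length $1$ and one reads $\widetilde G(-1,p,q)=0$, consistent with the stated convention and the special case already handled inside \cref{lem:ld}, whereas the symmetric degeneracy $\frac cd=\frac10$ cannot arise here because $p<q$ forces $\succ_{h+3}(\frac pq)\le\frac11$. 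Checking that these boundary situations slot consistently into the general formula is the only genuinely delicate part of the argument.
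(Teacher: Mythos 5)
Your proposal is correct and takes essentially the same route as the paper's own proof: both apply \cref{lem:ld} with $\frac{a}{b}=\pred_{h+3}(\frac{p}{q})$ and $\frac{c}{d}=\succ_{h+3}(\frac{p}{q})$, identify $\{l,r\}=\{a+b,c+d\}$ via \cref{cor:pred}, and match the two case distinctions by observing that the mediant $\frac{a+c}{b+d}$ is the next semiconvergent, so that $\St_{p,q}[0..h+3]$ is Sturmian if and only if $a+b+c+d=h+4$ (\cref{fct:pref}). Your Farey-process argument for this equivalence and your check of the degenerate endpoint $\frac{a}{b}=\frac{0}{1}$ simply spell out details that the paper's proof leaves implicit.
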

\begin{proof}
To apply \cref{lem:ld}, we set $\frac{a}{b}=\pred_{h+3}(\frac{p}{q}$ and $\frac{c}{d}=\succ_{h+3}(\frac{p}{q})$.
Observe that the mediant $\frac{a+c}{b+d}$ is a better approximation of $\frac{p}{q}$ than $\frac{a}{b}$ or $\frac{c}{d}$,
and thus it is a semiconvergent of $\frac{p}{q}$.
Thus, we always have $a+b+c+d\ge h+4$ and, by \cref{fct:pref}, equality holds if and only if $\St_{p,q}$ has a Sturmian prefix of length $h+4$.
In other words, the case distinction here coincides with the one in \cref{lem:ld}.
If $a+b+c+d>h+4$, then we have $L^d(h,p,q)=\widetilde{G}(h+2,p,q)$.
Otherwise,  $L^d(h,p,q)=\widetilde{G}(a+b-2,p,q)+\widetilde{G}(c+d-2,p,q)$.
However,  due to \cref{cor:pred}, $\St_{p,q}[0..a+b-1]$ is an even Sturmian word corresponding to $(a,b)$,
$\St_{p,q}[0..c+d-1]$ is an odd Sturmian word corresponding to $(c,d)$,
and these are the longest Sturmian prefixes of $\St_{p,q}[0..h+2]$ of each parity.
\end{proof}

\begin{example}
  Consider a word $\St_{5,7}$ as in \cref{tab:sturmian}.
  The lengths of its proper even Sturmian prefixes are $2,7$, whereas the lengths of its proper odd Sturmian prefixes are $1,3,5$.
  Hence, $L^d(7,5,7)=\widetilde{G}(9,5,7)=30$, since $\St_{5,7}[0..10]$ is not a Sturmian word.
  Moreover, $L^d(8,5,7)=\widetilde{G}(5,5,7)+\widetilde{G}(3,5,7)=20+14=34$, since $\St_{5,7}[0..11]=\St_{5,7}$
  is a Sturmian word.
\end{example}

  \bibliographystyle{plainurl}
  \bibliography{periodicity}

\end{document}